\newcommand{\ignore}[1]{}
\newtheorem{observation}[theorem]{Observation}
\title{FairTraDEX: A Decentralised Exchange Preventing Value Extraction}
\author{Conor McMenamin\inst{1,2} \and Vanesa Daza\inst{1,3} \and Matthias Fitzi\inst{4} \and Padraic O'Donoghue}
\authorrunning{C. McMenamin et al.}
\institute{Department of Information and Communication Technologies, Universitat Pompeu Fabra, Barcelona, Spain \and
NOKIA Bell Labs, Nozay, France \and CYBERCAT - Center for Cybersecurity Research of Catalonia \and IOHK }
\newcommand\nnfootnote[1]{%
  \begin{NoHyper}
  \renewcommand\thefootnote{}\footnote{#1}%
  \addtocounter{footnote}{-1}%
  \end{NoHyper}
}
\newcommand{\any}{\textit{any}}
\newcommand{\auctionName}{\text{FBA}}
\newcommand{\balanceTokens}{\textit{bal}}
\newcommand{\bid}{\textit{bid}}
\def\bitcoin{%
  \leavevmode
  \vtop{\offinterlineskip 
    \setbox0=\hbox{B}%
    \setbox2=\hbox to\wd0{\hfil\hskip-.03em
    \vrule height .3ex width .15ex\hskip .08em
    \vrule height .3ex width .15ex\hfil}
    \vbox{\copy2\box0}\box2}}
\newcommand{\blacklistedSerialNums}{\textit{blacklistedSNs}}
\newcommand{\blockchain}{\textit{Blockchain}}
\newcommand{\buyVolume}{\textit{buyVolume}}
\newcommand{\buyVolumeNew}{\textit{buyVolumeNew}}
\newcommand{\clearingPrice}{\textit{CP}}
\newcommand{\clientCommits}{\textit{clientCommits}}
\newcommand{\clients}{\textit{clients}}
\newcommand{\commit}{h}
\newcommand{\commitment}{\textit{com}}
\newcommand{\commitments}{\textit{Com}}
\newcommand{\commRandomness}{\textit{r}}
\newcommand{\commSerialNum}{\textit{S}}
\newcommand{\commitDeadline}{\textit{requestDeadline}}
\newcommand{\cryptoParam}{\kappa}
\newcommand{\currentAuctionNotional}{\textit{currAucNotional}}
\newcommand{\escrow}{\textit{escrow}}
\newcommand{\escrowClient}{\escrow_\textit{client}}
\newcommand{\escrowMM}{\escrow_\textit{MM}}
\newcommand{\escrowMAX}{Y}
\newcommand{\fee}{\textit{f}_{mcf}}
\newcommand{\feeRelayer}{\textit{f}_r}
\newcommand{\getAppend}{\textit{.append}}
\newcommand{\getBalance}{\textit{.balance}}
\newcommand{\getHeight}{\textit{.height}}
\newcommand{\getTransfer}{\textit{.transfer}}
\newcommand{\getWidth}{\textit{.width}}
\newcommand{\height}{H}
\newcommand{\imbalance}{\textit{imbalance}}
\newcommand{\lastPhaseChange}{\textit{lastPhaseChange}}
\newcommand{\market}{\textit{market}}
\newcommand{\marketOrder}{\textit{mkt}}
\newcommand{\maxAuctionNotional}{\textit{Q}_\textit{not}}
\newcommand{\MIFP}{y}
\newcommand{\MIFPA}{\MIFP_A}
\newcommand{\MIFPB}{\MIFP_B}
\newcommand{\MIFPAB}{\MIFP_{A \rightarrow B}}
\newcommand{\minTickSize}{\textit{minTickSize}}
\newcommand{\MM}{\textit{MM}}
\newcommand{\MMBidPrice}{\bid}
\newcommand{\MMBidAmount}{\tokenAmount_\bid}
\newcommand{\MMOfferPrice}{\offer}
\newcommand{\MMOfferAmount}{\tokenAmount_\offer}
\newcommand{\MMCommits}{\textit{MMCommits}}
\newcommand{\negligible}{\textit{negl}}
\newcommand{\netTradeSize}{\textit{D}}
\newcommand{\notional}{X\bitcoin}
\newcommand{\numMMs}{N}
\newcommand{\offer}{\textit{offer}}
\newcommand{\order}{\textit{order}}
\newcommand{\ourAuctionName}{\text{WSFBA}}
\newcommand{\params}{\textit{params}}
\newcommand{\phase}{\textit{phase}}
\newcommand{\phaseReveal}{\textit{Reveal}}
\newcommand{\phaseCommit}{\textit{Commit}}
\newcommand{\phaseResolution}{\textit{Resolution}}
\newcommand{\player}{\textit{P}}
\newcommand{\playerProtocol}{\textit{player}}
\newcommand{\playeri}{\player_i}
\newcommand{\players}{\textit{players}}
\newcommand{\postTradeImpact}{\delta}
\newcommand{\priceToCheck}{\textit{priceToCheck}}
\newcommand{\proofZK}{\pi}
\newcommand{\proRatePrice}{\tokenPrice_\textit{pro-rate}}
\newcommand{\proRatePriceVolume}{\tokenAmount_\textit{pro-rate}}
\newcommand{\protocolContract}{\textit{protocolContract}}
\newcommand{\protocolName}{\text{FairTraDEX}}
\newcommand{\refPrice}{\MIFP_\textit{ref}}
\newcommand{\regToken}{\textit{regID}}
\newcommand{\regTokenNew}{\regToken\textit{New}}
\newcommand{\regTokens}{\textit{RegIDs}}
\newcommand{\relay}{\textit{relay}}
\newcommand{\remove}{\textit{.remove}}
\newcommand{\resBounty}{\textit{resBounty}}
\newcommand{\revealDeadline}{\textit{revealDeadline}}
\newcommand{\revealedMkts}{\textit{revealedMkts}}
\newcommand{\revealedOrders}{\textit{revealedOrders}}
\newcommand{\revealedBuyOrders}{\textit{revealedBuyOrders}}
\newcommand{\revealedSellOrders}{\textit{revealedSellOrders}}
\newcommand{\revealTXDelay}{T}
\newcommand{\secParam}{\psi}
\newcommand{\sellVolume}{\textit{sellVolume}}
\newcommand{\sellVolumeNew}{\textit{sellVolumeNew}}
\newcommand{\strategy}{\textit{str}}
\newcommand{\tagClientCommit}{\textit{COMMIT}}
\newcommand{\tagClientRegister}{\textit{CLIENT-REGISTER}}
\newcommand{\tagClientReveal}{\textit{CLIENT-REVEAL}}
\newcommand{\tagMMCommit}{\textit{COMMIT}}
\newcommand{\tagMMReveal}{\textit{MM-REVEAL}}
\newcommand{\tieBreaker}{\textit{tieBreaker}}
\newcommand{\tieBreakSeed}{\textit{tieBreakSeed}}
\newcommand{\tightestMarket}{\textit{tightMkt}}
\newcommand{\tightestWidth}{\width_\textit{tight}}
\newcommand{\token}{\textit{tkn}}
\newcommand{\tokenA}{\textit{A}_\textit{tkn}}
\newcommand{\tokenAmount}{\textit{size}}
\newcommand{\tokenPrice}{\textit{p}}
\newcommand{\tokenTradeSize}{\textit{tokenTradeSize}}
\newcommand{\tokenB}{\textit{B}_\textit{tkn}}
\newcommand{\volumeSettled}{\textit{volumeSettled}}
\newcommand{\withdraw}{\textit{withdraw}}
\newcommand{\width}{\textit{w}}
\newcommand{\newconstruct}[5]{%
  \newenvironment{ALC@\string#1}{\begin{ALC@g}}{\end{ALC@g}}
   \newcommand{#1}[2][default]{\ALC@it#2\ ##2\ #3%
     \ALC@com{##1}\begin{ALC@\string#1}}
   \ifthenelse{\boolean{ALC@noend}}{
     \newcommand{#4}{\end{ALC@\string#1}}
   }{
     \newcommand{#4}{\end{ALC@\string#1}\ALC@it#5}
   } 
}
\newcommand{\assign}{\leftarrow}
\newcommand{\ident}[1]{\mathit{#1}}
\def\newident#1{\expandafter\def\csname #1\endcsname{\ident{#1}}}
\newcommand{\with}{\textbf{with}}
\newcommand{\from}{\textbf{from}}
\newcommand{\SPACE}{\vspace{3mm}}
\newcommand{\logicalAnd}{\wedge}
\newcommand{\logicalOr}{\vee}
\newcommand{\logicalNot}{\neg}
\begin{document}

\maketitle

\nnfootnote{ email | (primary) \textit{conor.mcmenamin@upf.edu}  \\
\textit{vanesa.daza@upf.edu} | \textit{matthias.fitzi@iohk.io} | \textit{ padraicodonoghue@gmail.com}}
\nnfootnote{\begin{minipage}{0.06\textwidth}
    \includegraphics[width=\linewidth]{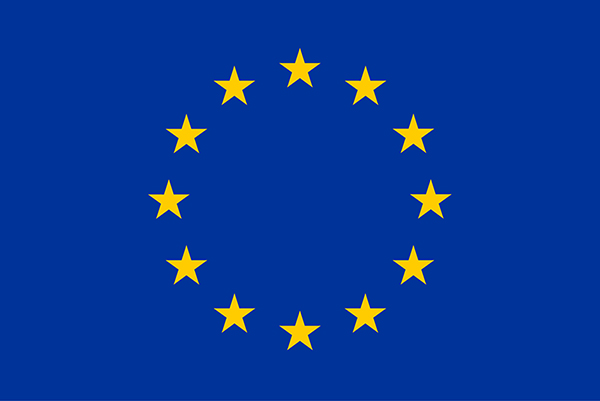}
    \end{minipage}%
    \hfill%
    \begin{minipage}{0.88\textwidth} This Technical Report is part of a project that has received funding from the European Union's Horizon 2020 research and innovation programme under grant agreement number 814284 \end{minipage}} 

\begin{abstract}
  \ignore{An idealised decentralised-exchange (DEX) provides a medium in which players wishing to exchange one token for another can interact with other such players and liquidity providers at a price which reflects the true exchange rate, without the need for a trusted third-party. Unfortunately, extractable value is an inherent flaw in existing blockchain-based DEX implementations. This extractable value takes the form of monetizable opportunities that allow blockchain participants to extract money from a DEX without adding demand or liquidity to the DEX, the two functions for which DEXs are intended. This money is taken directly from the intended DEX participants. As a result, the cost of participation in existing DEXs is much larger than the upfront fees required to post a transaction on a blockchain and/or into a smart contract. }
  
  We present FairTraDEX, a decentralized exchange (DEX) protocol based on frequent batch auctions (FBAs), which provides formal game-theoretic guarantees against extractable value. FBAs when run by a trusted third-party provide unique game-theoretic optimal strategies which ensure players are shown prices equal to the liquidity provider's fair price, excluding explicit, pre-determined fees. FairTraDEX replicates the key features of an FBA that provide these game-theoretic guarantees using a combination of set-membership in zero-knowledge protocols and an escrow-enforced commit-reveal protocol. We extend the results of FBAs to handle monopolistic and/or malicious liquidity providers. We provide real-world examples that demonstrate that the costs of executing orders in existing academic and industry-standard protocols become prohibitive as order size increases due to basic value extraction techniques, popularized as maximal extractable value. We further demonstrate that FairTraDEX protects against these execution costs, guaranteeing a fixed fee model independent of order size, the first guarantee of it's kind for a DEX protocol. We also provide detailed Solidity and pseudo-code implementations of FairTraDEX, making FairTraDEX a novel and practical contribution.
\end{abstract}

\keywords{Extractable Value \and Decentralized Exchange \and Incentives \and Blockchain}

\section{Introduction}\label{sec:Introduction}

One of the most prominent and widely-used classes of protocols being run on smart-contract enabled blockchains is that of decentralised-exchange (DEX) protocols. DEX protocols allow a specific set of players, whom we call \textit{clients}, to exchange one token for another in the presence of \textit{market-makers} (MMs), who provide liquidity to clients, usually in exchange for a fee. Interacting with a blockchain-based DEX requires a client or MM to first interact with the players who add transactions to the blockchain, known as miners or block producers. These interactions typically reveal a player's intention to trade to the block producer before the transaction is confirmed on the blockchain, and in doing so, present the block producer with what has become known as a miner-extractable value (MEV) opportunity. MEV, first coined in \cite{FlashBoys2.0}, refers to any expected profits the miner of a block can extract from other players interacting with the blockchain. This extraction is performed by manipulating the ordering of, injecting, and/or censoring transactions in prospective blocks. This has been generalised to expected extractable value (EEV) \cite{EstimatingMEVLetsGoShoppingJudmayer} (defined in Appendix \ref{app:terminology}), as non-miner players can also perform many of these attacks.

\ignore{However, MEV is merely the tip of the iceberg. In blockchain-based protocols such as DEXs, it is also possible for one player to extract value from another by disobeying the protocol instructions, an attack we refer to as \textit{selective participation}. Many decentralised protocols do not bind players to correctly performing protocol actions. In such protocols, players are given a form of \textit{optionality}, the option to correctly perform protocol actions or not, which is typically not intended by the protocol. Optionality is well understood in derivative markets, and is something which must be charged for. Otherwise, players receive this optionality/value at a discount/for free. Given optionality exists in a protocol, this value can be used by one player, and taken from others. As such, value can theoretically be extracted by every player in the blockchain ecosystem, not just the miners. We refer to any value which can be extracted from players in a blockchain-based protocol other than fixed upfront fees as \textit{expected extractable value} (EEV) in line with recent work \cite{EstimatingMEVLetsGoShoppingJudmayer},  on the generalisation of extractable value. }

A significant advancement in DEX protocols was the advent of \textit{automated market makers} (AMMs), with  Uniswap \cite{UniswapWebsite} being the most prominent of which.\ignore{ based on the principle of maintaining a constant function between two tokens in a pool when depositing and swapping, forming the basis for the relative price of those tokens. AMMs are built to reflect the market-implied fair price of one token expressed in another, and to provide a permanent source of liquidity for the token swap, accessible by any player in the blockchain ecosystem. Over large time horizons, both of these purposes have been fulfilled. However, on a transaction-by-transaction level view, p} Projects like Flashbots \cite{FlashbotsExploreWebsite} (a direct spin-off to \cite{FlashBoys2.0}) have identified that AMMs are the main source of recorded EEV ($>98\%$, as seen in the chart labelled ``Extracted MEV Split by Protocol" in \cite{FlashbotsExploreWebsite}, of the $\$665$M in EEV identified by Flashbots since August, 2020). Furthermore, Flashbots only observes basic forms of EEV, meaning in reality (and as stated by the Flashbots team \cite{FlashbotsExploreFAQWebsite}), this amount of EEV is a lower bound for the total amount of value being extracted from clients and MMs alike through participation in DEX protocols. To this point,  a peer-reviewed analysis in \cite{QuantifyingExtractableValueGervais} identified $\$540.54$M in extracted value up to August 2021, indicating the current number provided by Flashbots is significantly lower than the actual amount of extracted value. In \cite{ImpermanentLossAMMsLoesch}, it has been further highlighted that in Uniswap V3, liquidity providers are losing more to EEV attacks (impermanent loss in this case) than they are collecting in fees. It is clear that the long-term viability of existing DEX protocols is not plausible.

Although many attempts have been made academically to address this significant source of EEV \cite{FairMMCiampi,P2DEXBaum,BlockAuctionPeriodicAuctionsConstantinides,AequitasKelkar,PubliclyVerifiableSecrecyPreservingPeriodicAuctionsGalal,HelixAsayag}, no satisfactory solution has been found. The protocols presented in these works remain vulnerable to basic EEV attacks in the case where all transactions are eventually added to the blockchain (censorship-resistant). The overriding trend in these papers is a purely cryptographic approach to an economic/game-theoretic problem, resulting in relatively straightforward game-theoretic exploits. When describing these protocols in Section \ref{sec:RelatedWork}, we also describe some of the value extraction attacks to which the respective protocols are vulnerable.
\ignore{For example, in \cite{P2DEXBaum,BlockAuctionPeriodicAuctionsConstantinides,PubliclyVerifiableSecrecyPreservingPeriodicAuctionsGalal} a requirement is for orders not to be revealed if they are not executed to preserve one element of user and order privacy. All the while, both the identity of all players entering the execution environment, and their intended direction, are revealed before being executed, key pieces of information which allow for value extraction.} 
Therefore, there is a clear gap, both in literature and in practice, to provide a DEX protocol which definitively eliminates all sources of EEV. In this paper, we provide such a protocol. We outline $\protocolName$, a DEX protocol based on an existing auction process called \textit{frequent batch auctions} (FBAs) \cite{FrequentBatchAuctionsBudish} and zero-knowledge (ZK) tools for set-membership. FBAs have been proven to provide a strict Nash Equilibrium in which clients trade at the market fair price for a particular token swap, although are intended to be run by a trusted third party. $\protocolName$ is, to the best of our knowledge, the first DEX protocol outlining specific practical conditions under which EEV is prevented in a censorship-resistant blockchain by replicating the guarantees of an FBA without the need for a trusted third party. We formally prove this security against EEV by utilising results from ZK literature and game-theory. 

\ignore{Specifically, we show that given a sufficient number\footnote{This number of participants is described in Section \ref{sec:propertiesOfFairTraDEX}, and used analogously to Zerocash\cite{ZCash}, and it's derivatives. An increase in minted tokens (registrations in $\protocolName$) decreasing the probability that a randomly selected token corresponds to the player who minted it.} of registered participants in $\protocolName$, the strategy of players trading all client tokens (MMs provide liquidity greater than client volume in equilibrium) at the market-implied fair price, minus fees, forms a strict Nash equilibrium. This is compared to DEX protocols in which EEV opportunities exist, where for example, a potential value-extractor responds differently to buy orders vs. sell orders, or once-off clients vs. professional MMs. $\protocolName$ replicates the private submission of orders to a trusted third-party in an FBA, revealing no public information to the blockchain protocol other than the set of all registered players, and the number of players in an auction. Many protocols in-use today\footnote{ZCash \cite{ZerocoinGreen}, Tornado Cash \url{https://tornado.cash/}, and Semaphore \url{https://semaphore.appliedzkp.org/}, to name but a few.} replicate this exact proving of set-membership in zero-knowledge functionality. These, coupled with our proof-of-concept Solidity implementation \cite{FairTraDEXGithub} provide a clear guide for the adoption of $\protocolName $ on every major smart-contract enabled blockchain.}

\vspace{-0.2cm}
\subsection{Our Contribution}

We first introduce a width-sensitive frequent batch auction ($\ourAuctionName$), an idealised commit-reveal exchange protocol between clients and MMs, based on FBAs \cite{FrequentBatchAuctionsBudish}. $\ourAuctionName$s are an important improvement on basic FBAs with respect to decentralised systems. $\ourAuctionName$s ensure clients submit market orders in the presence of monopolistic MMs, compared to a standard FBA in which clients are required to submit limit orders at the fair price plus some trade fee. The requirement for clients to submit limit orders leads to worse order execution as trade probability is decreased, while also placing a significant burden on clients to choose the fair price. This burden is removed in an $\ourAuctionName$, providing an ``obvious optimal'' for clients, as coined in \cite{EIP1559roughgarden}. Furthermore, in the case of competing MMs, a $\ourAuctionName$ provides equivalent equilibria to \cite{FrequentBatchAuctionsBudish}. Importantly, this equilibrium involves all client orders being traded in the auction in which the orders are submitted, removing any strategies which involve unfilled orders needing to being revealed and resubmitted in proceeding auctions.

We then describe $\protocolName$, a blockchain-based implementation of a WSFBA.
\ignore{At a high-level, clients must first register to participate in the protocol by depositing an \textit{escrow} to the smart-contract. To then enter a $\ourAuctionName$ in the $\protocolName$ protocol, clients commit to an order along with a ZK proof proving that they deposited an escrow to the smart contract. MMs simultaneously commit to markets (bids and offers), although depositing an escrow at time of commitment (MM escrows do not reveal direction due their neutral directional nature). In the proceeding reveal phase, to ensure the correct revelation of information as in the reveal phase of $\protocolName$, any clients or MMs who committed to orders/markets in the commit phase but do not reveal lose access to their deposit. After the reveal phase finishes, the auction is settled at a single clearing price which maximises trade volume. A representation of the information and escrow flow in $\protocolName$ prior to order settlement is presented in Figure \ref{fig:protocolFlow}. }
In $\protocolName$, order commitments are recorded on-chain (to enforce the corresponding escrow punishment). We utilise ZK set-membership proofs to allow clients to commit to their orders anonymously. As such, in $\protocolName$, every client must initially register to the protocol, depositing an escrow. Then, whenever a client wants to commit to an order, the client only has to prove that they are a member of the set of players who registered in the protocol. Given enough registrations, the probability a client's ZK set-membership proof and committed order relates to the actual order contents approaches 0 (we formalise this notion in Section \ref{sec:FairTraDEX}) such that no other player in the system can see the committed order and use it to infer anything about what the order is. To definitively hide a client's order-information, orders are committed, including the ZK-proof, by using a relayer, a third-party who receives a fee for including relayed transactions in the blockchain (see Appendix \ref{sec:Relayers} for further details).

\ignore{This combination of tools serves as a subtle, yet crucial, improvement to previous attempts to implement blockchain-based FBAs \cite{P2DEXBaum,PubliclyVerifiableSecrecyPreservingPeriodicAuctionsGalal,BlockAuctionPeriodicAuctionsConstantinides} where game-theoretic guarantees are dependent on the hiding of order information until all markets and orders have been committed. In these previous attempts, on-chain order commitment reveals client/order-specific information which allows observant counterparties to improve their ability to guess client order information. With sufficiently many registered clients, no counterparty using $\protocolName$ can learn information that can be used to bias prices against clients with positive expectancy. As such, EEV is prevented in $\protocolName$ when enough clients register to the protocol. }

We provide an extensive Ethereum virtual-machine (EVM) compatible proof-of-concept for $\protocolName$ \cite{FairTraDEXGithubPublic} including a comparison of protocol running costs with previous solutions in Section \ref{sec:Attacks}, which remain constant with respect to order size, price and direction. When compared to potentially percentage-point slippages and EEV-attack costs required to trade on current DEXs, also highlighted in Section \ref{sec:Attacks}, $\protocolName$'s formal guarantees of protocol-level EEV prevention and up-front, fixed and explicit costs set a new standard for DEX protocols.

\ignore{\subsection{Organisation of the Paper}

Section \ref{sec:RelatedWork} analyses all previous work related to the construction of EEV-proof DEX protocols. Section \ref{sec:Prelims} outlines the financial terminology used in the paper, the player definitions (\ref{sec:financeTerms}), the ZK primitives (\ref{sec:ZK}) and relayer functionalities (\ref{sec:Relayers}) needed to formally reason about $\protocolName$. Section \ref{sec:FBAs} defines the ideal $\ourAuctionName$ functionality, and identifies the strategies of both clients and MMs. Section \ref{sec:FairTraDEX} maps the ideal $\ourAuctionName$ functionality to a series of algorithms which form the $\protocolName$ protocol. It is then proved that rational players follow $\protocolName$, and as such, implement a $\ourAuctionName$. Section \ref{sec:Encoding} provides a description and gas-cost analysis of the smart-contract encoding of $\protocolName$, including a comparison to existing solutions. We conclude in Section \ref{sec:conclusion}. The Appendices include an extended related work (App. \ref{app:RelatedWork}), extended discussion on ZK literature (App. \ref{app:ZK}), proofs not included in the main body of the paper (App. \ref{app:proofs}), a detailed description of the $\protocolName$ algorithms (App. \ref{app:FairTraDEXAlgos}), a description of a clearing price verifier for settling orders in $\protocolName$ (App. \ref{app:clearingPriceCalc}), the pseudo-code for encoding $\protocolName$ as a smart-contract (App. \ref{app:Protocol}), and notes on $\protocolName$ (App. \ref{app:discussion}).}

\section{Related Work}\label{sec:RelatedWork}

The main works aimed at protecting DEX users from EEV either focus on preventing front-running of orders \cite{FairMMCiampi,P2DEXBaum}, the fair ordering of transactions based on their delivery time \cite{AequitasKelkar}, or on hiding client trade information until the trade has been committed to the blockchain \cite{PubliclyVerifiableSecrecyPreservingPeriodicAuctionsGalal,HelixAsayag,BlockAuctionPeriodicAuctionsConstantinides}. Of these works, the closest to our proposal are \cite{FairMMCiampi,P2DEXBaum,PubliclyVerifiableSecrecyPreservingPeriodicAuctionsGalal,BlockAuctionPeriodicAuctionsConstantinides}. All of these works critically depend on honesty from MMs, auction operators and/or the block proposers. In Appendix \ref{app:RelatedWork}, we provide high-level descriptions of the DEX protocols while in this section, we briefly outline how, when all players in the respective DEX protocols are rational, EEV opportunities exist. 

In \cite{FairMMCiampi}, the MM is always allowed to see orders from clients and can choose to abort them. It is argued that MMs are happy to trade against all orders, including informed orders. Furthermore, it is assumed that clients are independent, with random information. This is not true in real-world trading environments, and as such, Theorem 3 may not hold in practice. 
\ignore{The paper references \cite{OptimalSequntialMMingDas} as justification for the quality of price/service provided by the MM, however there is a subtle but crucial difference between the games in \cite{OptimalSequntialMMingDas}: in \cite{OptimalSequntialMMingDas} the client has the final decision on whether or not to execute the trade, while in \cite{FairMMCiampi} the MM has the final decision. This optionality has an implicit, but not explicit, cost for the client and provides a source of EEV to the MM. In $\protocolName$, no optionality is given to the MM or clients, while given enough registered clients, the direction of any individual client remains hidden until the trade has been committed. As such, $\protocolName$ is able to benefit from the results of \cite{OptimalSequntialMMingDas} in a single MM game, that is, liquid (tradeable and of a client-specified width) markets centred around the pre-trade market-implied fair value (MIFP) when a trade is accepted by the MM. }

In P2DEX \cite{P2DEXBaum}, clients must publicly deposit the tokens with which to trade in the same time frame as the order matching takes place, exposing clients to standard identity- and directional-based EEV exploits. Separating token deposit and identity revelation from a client's commitment to a specific auction are important advancements used in $\protocolName$ to protect against EEV. Furthermore, at least one of the servers in charge of fairly executing orders is required to be honest-by-default. If the servers, a minority subset of players in the P2DEX protocol, are rational and monopolised/colluding, the servers can front-run orders. 
In $\protocolName$, such value-extraction is prevented by keeping all order information hidden until every order in a particular settlement round has been committed. Furthermore, the set of servers act as a point-of-failure as server participation is required to finalise order-settlement. No subset of players in $\protocolName$ can prevent the matching of correctly-revealed orders, while in P2DEX, any majority of servers can prevent order-matching. 

Similar commit-reveal protocols to $\protocolName$ for blockchain-based token-exchange are proposed in \cite{PubliclyVerifiableSecrecyPreservingPeriodicAuctionsGalal,BlockAuctionPeriodicAuctionsConstantinides}. The protocol in \cite{PubliclyVerifiableSecrecyPreservingPeriodicAuctionsGalal} is exposed to several game-theoretic exploits which contradict its protection against front-running. These include the necessity to reveal order direction a-priori, and the non-trivial handling of the linkability between commitments and account-balances. The protocol also depends on an operator who does not participate in token-exchange, gains exclusive access to order information, and is depended on for protocol completion. 
In \cite{BlockAuctionPeriodicAuctionsConstantinides}, clients commit their own orders to the blockchain, revealing their identities, and corresponding token balances/execution patterns which can be used by a basic professional MM to skew prices and extract value from the client. Both of \cite{PubliclyVerifiableSecrecyPreservingPeriodicAuctionsGalal,BlockAuctionPeriodicAuctionsConstantinides} aim to protect unmatched orders from being revealed, but to do so, both protocols depend on a third party selected before the auction begins to execute order-matching (no one else in the ecosystem can finalise the auction, a single point of failure). Both protocols assume that the operator does not reveal unexecuted order information.

\vspace{-0.3cm}

\section{Preliminaries}\label{sec:Prelims}

This section introduces the terminology and definitions necessary to understand the main results of the paper. By $\textit{negl}()$ we denote any function $f:\mathbb{N}\rightarrow\mathbb{R}$ that decreases faster than any (positive) polynomial $p$. More formally, $\forall \ p  \ \exists \ \lambda_0\in\mathbb{N}: \forall \lambda>\lambda_0: f(\lambda)<\frac{1}{p(\lambda)}$.
For protocol correctness, we must assume that some of the involved players may be malicious trying to force the protocol into incorrect execution, and without any direct benefit for themselves. However, for the game-theoretic part of the analysis, we assume that all players are rational.
Accordingly, the analysis of our protocol is based on two security parameters, a cryptographic security parameter $\cryptoParam$ used to bound the probability that the protocol execution is incorrect, and a game-theoretic extractable-value parameter $\secParam$ used to bound the extractable value of any player.
The following are the crucial terms and definitions needed for the remainder of the paper, with supplementary terminology and definitions contained in Appendix \ref{app:terminology}.

\begin{itemize}
     \item \textit{Notional Value}: The value of a set of tokens expressed in some common reference token. In this paper, we use the symbol $\bitcoin$ as the reference token in which we measure notional, and with which we reason about utility. 
          
     \item \textit{Market-Implied Fair Price} (MIFP, denoted $\MIFP$): As in \cite{FrequentBatchAuctionsBudish}, the MIFP of a token/token swap is a publicly observable signal which is perfectly informative of the fundamental fair price of the underlying token/token swap. Moreover, a random order of fixed notional generated by a player in the system is equally likely to buy or sell tokens at the MIFP, distributed symmetrically around the MIFP. Unless otherwise stated, observing the MIFP has a prohibitive cost for players in our system. 
     
    \item \textit{Market}: A market in a DEX between two tokens $\tokenA$ and $\tokenB$ consists of two limit orders, a \textit{bid} and \textit{offer}. When the market is quoted from token $\tokenA$ to $\tokenB$, the offer price indicates the quantity of token $\tokenA$ a player must sell for 1 token $\tokenB$, while the bid price indicates the quantity of token $\tokenA$ a player receives for 1 token $\tokenB$. In this paper, we represent such a market as $\bid \ @ \ \offer$, with $0 < \bid \leq \offer $. 
     
    \item \textit{Reference Price} ($\refPrice$): For a market $\bid \ @ \ \offer$, the reference price $\refPrice$ is the price such that $\frac{\bid}{\refPrice}=\frac{\refPrice}{\offer}$, i.e., $\refPrice$ is the geometric mean of $\bid $ and $ \offer$.
       
    \item \textit{(Market) Width} ($\width$): For a market $\bid \ @ \ \offer$, the width is calculated as $\width=\frac{\offer}{\bid}$ (as such $\width\geq 1$). 
     
    \item \textit{Multiplicative Market-Impact Coefficient ($\postTradeImpact$)}: If the pre-trade MIFP for particular swap is $\MIFP$, the expected post-trade MIFP given a buy order is $\postTradeImpact \ \MIFP$ for some $\postTradeImpact\geq1$, while the expected post-trade MIFP given a sell order is $\frac{\MIFP}{\postTradeImpact}$. Unless otherwise stated, a swap from $\tokenA$ to $\tokenB$ with multiplicative market-impact coefficient $\postTradeImpact$ corresponds to buy orders of $\tokenB$ having a  multiplicative market-impact coefficient on $\MIFPB$ of $\sqrt{\postTradeImpact}$ and $\frac{1}{\sqrt{\postTradeImpact}}$ on $\MIFPA$. Given our definition of the MIFP, this impact function implies an upward drift in $\MIFP$ if $\postTradeImpact>1$. However, our use of $\postTradeImpact$ is intended to highlight that impact must be considered, with the exact choice of $\postTradeImpact$ for a particular token pair being a complex process and beyond the scope of this paper.

    \item \textit{Client}: Any player in a DEX protocol who, for an MIFP $\MIFP$, there exists some \textit{minimum client utility} $\fee >1$ such that client buyers (sellers) have positive expected utility to trade for or below $\sqrt{\fee} \ \MIFP$ (at or above $\frac{\MIFP}{\sqrt{\fee}}$).  
    
    \item  \textit{Market Maker} (MM): A player in a DEX protocol with large supplies of all tokens, who has positive expected utility trading with clients on markets of any width $\width>1$ with reference price equal to the MIFP. MMs can observe the MIFP.
    
    \item \textit{Strict Nash Equilibrium} (SNE) \cite{AlgoGameTheory}: Consider a set of non-cooperative players $\player_1$ $,...,$ $\player_n$, with strategies (series' of actions) $\strategy_1$ $,...,$ $\strategy_n$ describing the actions which each player takes throughout a particular protocol. These strategies form a strict Nash Equilibrium if any individual player deviation from these strategies strictly reduces that player's utility.

\end{itemize}

In creating a DEX protocol, an idealised goal would be to ensure that there exists an SNE in which clients trade at the MIFP in expectancy. However, this is unrealistic as MMs are a key component in liquidity provision. Therefore a more realistic, yet still desirable, goal would be to ensure that there exists an SNE where clients can trade at the MIFP in expectancy in exchange for some pre-determined fee, payable to the MMs, which is bounded by the clients' utility gain from the swap. In existing AMMs and DEX protocols, this realistic goal remains unachieved, as explained in Section \ref{sec:RelatedWork}. $\protocolName$ however, achieves this goal.

Note that MMs differ from liquidity pools in AMMs. The decision logic of AMM liquidity pools is public and deterministic, and any adjustments to liquidity pools must be queued publicly in the mempool, exposing it to EEV attacks. MMs, however, make private trading decisions and communicate them on-chain. One possible action is to add liquidity to an AMM, or in the case of $\protocolName$, add a market to an auction. Following the analysis of \cite{ImpermanentLossAMMsLoesch} and the losses being incurred by liquidity providers in AMMs, players currently providing liquidity in AMMs, although acting honestly, do not fit our rational player model. In $\protocolName$, by ensuring following the protocol forms an SNE, honesty and rationality are equivalent. If players deviate from the protocol in $\protocolName$, this strictly decreases their expected utility, which is further discussed in Appendix \ref{sec:Irrational}.

\subsection{Zero-Knowledge Primitives}\label{sec:ZK}

The aim of this section is to outline the \textit{non-interactive zero-knowledge} (NIZK) tools for set membership as used in this paper, such as those stemming from papers like \cite{ZerocoinGreen,ZCash,PairingBasedNIZKsGroth,ZKProofsSetMembershipBenarroch,SemaphoreWhitehat}. We generalise these formal works, allowing for the adoption of any secure NIZK set-membership protocol into $\protocolName$, as we only require a common functionality that is shared by all of them. Further elaboration on these protocols is deferred to Appendix \ref{app:ZK}.

The zero-knowledge proofs used in $\protocolName$ allow, for a given set of commitments $\commitments$ to user-generated secrets, that any user knowing the secret corresponding to a commitment $\commitment \in \commitments$ can prove the knowledge of a secret corresponding to a commitment in the set, without revealing which secret, or commitment. Moreover, we require that more than one proof relating to the same commitment is identifiable by a verifier. 

To participate in $ \protocolName $, clients privately generate two bit strings, the \textit{serial number} $\commSerialNum$ and \textit{randomness} $\commRandomness$, with $\commSerialNum, \ \commRandomness \in \{0,1\}^{\Theta(\cryptoParam)}$ .
To describe $\protocolName$ we define a commitment scheme $\commit$, a set membership proof scheme \textit{SetMembership}, an NIZK proof of knowledge scheme \textit{NIZKPoK} and a NIZK signature of knowledge scheme (\textit{NIZKSoK}). We do not specify which instantiation of these schemes to use, as the exact choice will depend on
several factors, such as efficiency, resource limitations and/or the strength of the assumptions used.
 
 \begin{itemize}
     \item $\commit(m)$: A deterministic, collision-resistant function taking as input a string $m \in \{ 0,1\}^{*}$, and outputting a string $\commitment \in \{0,1\}^{\Theta(\cryptoParam)}$.
     \item \textit{SetMembership}($\commitment, \commitments$): Compresses a set of commitments $\commitments$ and generates a membership proof $\proofZK$ that $\commitment$ is in $\commitments$ if $\commitment \in \commitments$.
     \item \textit{NIZKPoK}$(\commRandomness, \commSerialNum, \commitments)$: For a set of commitments $\commitments$, returns a string $\commSerialNum$ and NIZK proof of knowledge if the person running \textit{NIZKPoK}() knows an $\commRandomness$ producing a proof when running \textit{SetMembership}($\commit(\commSerialNum || \commRandomness), \commitments$). In $\protocolName$, this revelation identifies to a verifier when a proof has previously been provided for a particular, albeit unknown, commitment\ignore{\footnote{In Z(ero)cash \cite{ZCash}, serial numbers are also used for the same purpose; to prevent double-spending of coins.} \footnote{In Semaphore \cite{SemaphoreWhitehat}, a similar technique using one-time nullifiers prevents the reuse of a commitment for protocols where one interaction per commitment is required, such as one-person one-vote voting.}} as the prover must reproduce $\commSerialNum$. This is used in $\protocolName$, in conjunction with an escrow, to enforce the correct participation of both, clients and MMs.
     \item \textit{NIZKSoK}($m$): Returns a signature of knowledge that the person who chose $m$ can also produce $\textrm{NIZKPoK}$.
 \end{itemize}

\section{Width-Sensitive Frequent Batch Auctions}\label{sec:FBAs}

In this section we outline the properties of an idealised variation of an FBA which we define as a width-sensitive FBA. Width-sensitive FBAs maintain the desirable properties of FBAs with respect to optimal strategies for MMs and clients \cite{FrequentBatchAuctionsBudish}, while also adding important protections for clients in a decentralised setting where monopolistic MMs may exist. 
The important assumption with regard to the guarantees of an FBA is the presence of at least two non-cooperative MMs. In a decentralised setting, this can be seen as insufficient. One of the most desirable properties of FBAs in the presence of 2 non-cooperative MMs is the fact that clients submit market orders. We envisage clients as relatively uninformed players for whom choosing the correct/fair price to trade has an implicit cost. Market orders remove this burden, providing clients with an ``obvious optimal'' as advocated in \cite{EIP1559roughgarden}. To reach a similar equilibrium in the presence of a monopolistic MM, we must amend the basic FBA protocol. In this section, we define a \textit{width-sensitive} FBA ($\ourAuctionName$) to handle monopolistic MMs, while retaining the desirable properties of an FBA in the presence of two or more non-cooperative MMs.  

In the presence of a single rational MM, we need to utilise the value gained by clients for exchanging token. That is, recall from Section \ref{sec:Prelims}, clients in our protocol observe a positive utility of at least the minimum client fee $\fee$ for exchanging tokens.
In a $\ourAuctionName$, this fee is translated to a market width, and input with clients orders as a maximum market width on which clients are willing to trade. This allows us to prove submitting market orders remains a SNE. Conversely in an FBA, if MMs cooperate/are replaced by a monopolistic MM, submitting market orders is a strictly dominated strategy for clients, with clients now required to submit a limit price. $\ourAuctionName$s avoid this degradation of user experience, and the corresponding reduced probability of execution and quality of liquidity this has on FBAs. 

We let $\netTradeSize$ represent the net trade imbalance of clients in a particular instance of a $\ourAuctionName$ in terms of $\bitcoin$. A positive $\netTradeSize$ indicates a client buy imbalance (more client buyers than sellers of the swap), while a negative $\netTradeSize$ indicates a client sell imbalance. We require a finite bound on the absolute imbalance, which we denote $\maxAuctionNotional<\infty$, for the existence of optimal MM strategies. As in \cite{FrequentBatchAuctionsBudish}, we assume that $|\netTradeSize|\leq\maxAuctionNotional$, and in-keeping with the notion of an MIFP, $\netTradeSize$ is symmetric around 0 at the MIFP.
This $\maxAuctionNotional$ is used as the lower-bound on the notional of a MM's bid and offer in $\ourAuctionName$. Importantly, this ensures client orders submitted to the auction are executed (used in the proof of Theorem \ref{thm:ourFBA}).
We now define a $\ourAuctionName$.
\vspace{-0.2cm}
\begin{definition}\label{def:WSFBAuction}
    A \textit{width-sensitive frequent batch auction} ($\ourAuctionName$) involves MMs submitting markets to the TTP with total notional on the bid and offer of at least $\maxAuctionNotional$. Clients and MMs privately submit limit and market orders to the TTP including a requested maximum width from the tightest MM, above which the order is not executed. Orders are collected until a specified deadline. 
    After this deadline, client orders with requested width greater than or equal to the tightest MM width, along with a randomly-selected market from the tightest provided markets, are settled at a single clearing price which maximises the total notional traded, and then minimises the net trade imbalance.\footnote{As $\maxAuctionNotional$ is greater than the absolute client order imbalance, the clearing price must lie between the MM bid and offer} If there is more supply at the clearing price than demand, sell orders at the highest price at or below the clearing price are pro-rated based on size such that supply equals demand at the clearing price. Similarly, if there is more demand than supply at the clearing price, buy orders at the lowest price at or above the clearing price are pro-rated based on size such that demand equals supply at the clearing price. Any limit buy orders below/sell orders above the clearing price are not executed.
\end{definition} 
\vspace{-0.15cm}
The key differences between a conventional FBA and a $\ourAuctionName$ are the specification of MM widths by clients, the minimum MM notional requirement on the bid and offer, and the requirement for the clearing price to minimise the imbalance over all prices which maximise the notional traded. Minimising imbalance is a small optimisation which produces a reasonable and precise clearing price when MMs do not show width 1 markets as in an FBA. A precise algorithm for verifying a given clearing price satisfies these proprieties is included both in Algorithm \ref{alg:CPVerifier} and \cite{FairTraDEXGithubPublic}, and described in Appendix \ref{app:clearingPriceCalc}. The other amendments are intended to protect clients against monopolistic MMs, and are discussed in the proceeding section.

\subsection{Properties of Width-Sensitive Frequent Batch Auctions}

In Theorem \ref{thm:ourFBA}, we identify an SNE for $\ourAuctionName$s, and show that it is equivalent to the SNE of an FBA. The case of a single monopolistic MM is more complex. First, we observe that an MM in a $\ourAuctionName$ always shows a market with reference price equal to the MIFP. In the proceeding lemmas and theorems, proofs omitted from the main body of the paper are included in Appendix \ref{app:proofs}.

\begin{lemma}\label{lem:WTTMMstrat}
    For an MM in a $\ourAuctionName$ between $\tokenA$ and $\tokenB$ with MIFP equal to $\MIFPAB=\frac{\MIFPB}{\MIFPA}$ and a client order of notional $\notional>0$, she strictly maximizes her expected utility by showing a market with reference price $\refPrice= \MIFPAB$ for any fixed width $\width\geq1$.
\end{lemma}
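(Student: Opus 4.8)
The plan is to fix the width $\width\geq 1$ and treat the reference price as the MM's single free variable, since a market of width $\width$ and reference price $\refPrice$ is exactly $\bid \ @ \ \offer$ with $\bid=\refPrice/\sqrt{\width}$ and $\offer=\refPrice\sqrt{\width}$ (indeed $\offer/\bid=\width$ and $\sqrt{\bid\cdot\offer}=\refPrice$). With $\width$ held constant, it then suffices to show that the MM's expected utility, written as a function $U(\refPrice)$, is strictly maximised at $\refPrice=\MIFPAB$. I would phrase everything multiplicatively (i.e.\ in $\log\refPrice$), matching the geometric definitions of width and reference price.

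First I would condition on the direction of the client order. By the defining property of the MIFP, the notional-$\notional$ order is equally likely to be a buy or a sell and is distributed symmetrically about $\MIFPAB$. Conditioned on a buy, the MM sells token $\tokenB$ at the offer $\offer$ and is left with the opposite inventory, which must be revalued at the post-trade MIFP $\sqrt{\postTradeImpact}\,\MIFPAB$ to account for market impact; conditioned on a sell she buys at the bid $\bid$ and revalues at $\MIFPAB/\sqrt{\postTradeImpact}$. I would express each conditional profit in the reference token $\bitcoin$ and assemble $U(\refPrice)$ as the probability-weighted sum of the two directional contributions, keeping the impact coefficient $\postTradeImpact$ explicit inside each per-direction margin.

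The crux is that, because the client order distribution is symmetric about $\MIFPAB$, the propensity to trade on each side depends on where the MM centres her market: quoting rich ($\refPrice>\MIFPAB$) makes the adverse (sell-into-the-MM) side strictly more likely and the favourable side strictly less likely, and symmetrically for $\refPrice<\MIFPAB$. I would exploit this to show that $U$ is invariant under the log-price reflection $\refPrice\mapsto \MIFPAB^{2}/\refPrice$, so that $\refPrice=\MIFPAB$ is forced to be a stationary point, and then argue that $U$ is strictly single-peaked (strictly concave in $\log\refPrice$). Together these pin $\refPrice=\MIFPAB$ as the unique maximiser and deliver the strictness in the statement, uniformly over $\width\geq1$.

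The main obstacle is precisely this single-peakedness rather than mere symmetry. If one assumed the client traded with certainty irrespective of the quote, the two directional margins would combine into an affine, monotone function of $\refPrice$ and the MM would always prefer to skew; the argument therefore genuinely relies on the symmetric, price-sensitive order flow implied by the MIFP, through which a skew induces a nonzero expected directional inventory that is then penalised by the adverse market-impact revaluation. The careful step is verifying that the trade probabilities, weighted by the impact-adjusted per-direction margins, yield a strictly concave objective in $\log\refPrice$; note that $\postTradeImpact$ enters only through the margins and respects the reflection symmetry, so it affects the magnitude of the penalty but never shifts the optimum away from $\MIFPAB$.
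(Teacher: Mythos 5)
Your setup---fixing $\width$, writing the market as $\refPrice/\sqrt{\width}\ @\ \sqrt{\width}\,\refPrice$, conditioning on the client's direction with probability $\tfrac12$ each, and revaluing the post-trade inventory at $\sqrt{\postTradeImpact}\,\MIFPAB$ or $\MIFPAB/\sqrt{\postTradeImpact}$---coincides with the paper's. But the mechanism you call ``the crux'' is not what makes the lemma true, and the claim you use to motivate it is false in this model. You assert that if the client traded with certainty irrespective of the quote, the two directional margins would combine into an affine, monotone function of $\refPrice$, so that strictness must come from quote-dependent execution probabilities. In fact the paper assumes exactly that the client trades with certainty on one side or the other (it says so immediately after the lemma statement), and the objective is still strictly single-peaked: because the order is specified in \emph{notional}, the number of tokens changing hands is the notional divided by the quoted price, so the buy-side margin carries a factor $\MIFPAB/\refPrice$ and the sell-side margin a factor $\refPrice/\MIFPAB$. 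The expected profit is
\begin{equation*}
\notional\left(\frac{1}{\sqrt{\postTradeImpact}}-\frac{\sqrt{\postTradeImpact}}{2\sqrt{\width}}\left(\frac{\MIFPAB}{\refPrice}+\frac{\refPrice}{\MIFPAB}\right)\right),
\end{equation*}
and $\frac{\MIFPAB}{\refPrice}+\frac{\refPrice}{\MIFPAB}=2\cosh\!\left(\log\refPrice-\log\MIFPAB\right)$ is strictly convex in $\log\refPrice$ with unique minimum at $\refPrice=\MIFPAB$. That is essentially the whole proof. Your reflection-symmetry and log-concavity observations are correct for this objective, but they arise from the reciprocal token-quantity accounting, not from price-sensitive order flow.

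The concrete gap is therefore that your plan rests on a quantity the model does not supply: a probability of execution on each side that varies with where the MM centres her quote. The lemma concerns a single client order of notional $\notional$ that hits the bid or the offer with probability $\tfrac12$ each (by the symmetry in the MIFP definition); nothing in the definition of a $\ourAuctionName$ gives you a quote-dependent execution probability, so the ``careful step'' of deriving strict concavity from such probabilities cannot be carried out. If you drop that mechanism and simply compute the two conditional profits in notional terms as above, the first-order condition $\MIFPAB/\refPrice^{2}-1/\MIFPAB=0$ yields $\refPrice=\MIFPAB$ directly, and strictness follows from the strict convexity of $x\mapsto a/x+x/a$ on $x>0$, uniformly in $\width$ and $\postTradeImpact$.
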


This result is independent of the choice of width and market-impact coefficient. However, it assumes that the MM trades with the client on either the bid or the offer. With respect to a $\ourAuctionName$ without notional restrictions and a monopolistic MM, if clients submit market orders, there are fringe cases (large imbalances) which incentivize MMs to show markets far from the MIFP. Removing these restrictions from a $\ourAuctionName$ makes for interesting future work. 

Recall clients have a strictly positive utility to exchange tokens described by the minimum client fee $\fee$, which is equivalent to being strongly incentivized to trade on a market with reference price $\refPrice$ and width $\width \leq \fee$. With this in mind, we can now apply the main result of \cite{FrequentBatchAuctionsBudish} to a $\ourAuctionName$.

\begin{theorem}\label{thm:ourFBA}
    For a $\ourAuctionName$, the strict Nash equilibria strategies given the number of non-cooperative MMs submitting markets being $ \numMMs$ are:
    \begin{itemize}
        \item $\numMMs=1$: Clients submit market orders of requested width $\fee$ and the MM shows a market of width at most $ \fee$ with reference price equal to the MIFP.
        \item $\numMMs\geq 2$: Clients submit market orders of requested width greater than 1 and MMs show a market of width 1 with reference price equal to the MIFP.
    \end{itemize}
\end{theorem}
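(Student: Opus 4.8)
The plan is to handle the two cases separately, using Lemma~\ref{lem:WTTMMstrat} as the common first step to fix the reference price, then reducing the competitive case to the known FBA analysis and treating the monopolistic case directly. Lemma~\ref{lem:WTTMMstrat} already pins down every MM's reference price: conditioned on trading on either side, an MM strictly maximizes expected utility by centring her market at $\refPrice=\MIFPAB$ for any fixed width $\width\geq1$. Since the minimum-notional requirement $\maxAuctionNotional\geq|\netTradeSize|$ guarantees that the selected tightest market absorbs the whole client imbalance at a single clearing price strictly between its bid and offer, the MM does trade whenever client orders clear against her, so the hypothesis of the lemma holds on the equilibrium path. It therefore remains only to determine the equilibrium \emph{width} of the shown market(s) and the clients' \emph{requested} widths.

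For $\numMMs\geq2$, I would argue that settling against the tightest submitted market (breaking ties by random selection) induces Bertrand-style competition on width, exactly as in \cite{FrequentBatchAuctionsBudish}. Any MM showing width $\width>1$ can be strictly undercut by a competitor showing a marginally tighter market, who then captures the whole awarded trade; since $\width\geq1$ is the hard floor, the only profile immune to undercutting has all MMs showing $\width=1$ with $\refPrice=\MIFPAB$. Because width-$1$ markets satisfy every client's width gate, any client requested width exceeding $1$ leaves the client trading at the MIFP, so the $\ourAuctionName$ collapses to a standard FBA and the strict-equilibrium argument of \cite{FrequentBatchAuctionsBudish} transfers; the minimum-notional requirement only strengthens it by ruling out partial fills. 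I would verify strictness via the tick-level undercutting argument of \cite{FrequentBatchAuctionsBudish}.

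For $\numMMs=1$ the argument is self-contained. Holding the clients' requested width fixed at $\fee$, the monopolist's expected spread capture on the symmetric, zero-mean imbalance $\netTradeSize$ is strictly increasing in the width she shows, whereas any width exceeding $\fee$ fails every client's gate and yields zero volume; her unique best response is thus to show the widest gate-satisfying market, of width $\fee$ (in particular at most $\fee$), centred at $\MIFPAB$ by Lemma~\ref{lem:WTTMMstrat}. Conversely, fixing the MM's width, a client requesting a smaller width forgoes a trade of strictly positive utility --- recall a client's utility is strictly positive for every traded width up to $\fee$ --- while requesting a width above $\fee$ exposes her to a negative-utility fill precisely on the event that the MM's width lands in the excess interval. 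The requested width $\fee$ is therefore the unique choice that both maximizes execution probability and never admits a loss-making fill.

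The step I expect to be hardest is making the client's strictness in the $\numMMs=1$ case fully rigorous: against a single pure equilibrium width, every requested width at or above the MM's is payoff-equivalent, so strictness cannot come from a fixed MM action alone and must instead be extracted from the joint determination of the two strategies --- concretely, from the fact that the MM's profit strictly rewards widening up to the clients' cap, so that any client requesting more than $\fee$ is met, on the MM's best response, by a strictly wider and hence loss-making market. Pinning down the boundary behaviour of client utility exactly at $\width=\fee$ and confirming it is strictly positive there is the delicate point that separates a merely weak equilibrium from a strict one, and I would treat it carefully before concluding.
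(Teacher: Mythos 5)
Your proposal follows essentially the same route as the paper: Lemma~\ref{lem:WTTMMstrat} pins the reference price at the MIFP, the $\numMMs\geq2$ case is reduced to the undercutting/Bertrand equilibrium of \cite{FrequentBatchAuctionsBudish}, and the $\numMMs=1$ case is argued directly from the MM's expected profit being increasing in width up to the clients' gate $\fee$ (the paper uses Equation~\ref{eq:fee} for this monotonicity) together with the minimum-notional bound $\maxAuctionNotional\geq|\netTradeSize|$ guaranteeing execution inside the shown market. The strictness concern you flag for the client's requested width against a fixed monopolist strategy is genuine, but the paper's own proof does not resolve it either --- it simply asserts that any deviation strictly reduces expectancy --- so your treatment is, if anything, more candid on that point.
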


Theorem \ref{thm:ourFBA} identifies that clients always submit market orders, and in settings where it is unclear whether there is a single monopolistic MM, or many non-cooperative MMs, it can be seen that clients always submit market orders with requested width $\fee$. 

\section{\texorpdfstring{$\protocolName$}{TEXT}}\label{sec:FairTraDEX}

In Section \ref{sec:FBAs} we constructed a $\ourAuctionName$ using a TTP to enforce correct player balances, order sizes, revelation of orders, correct calculation of the clearing price and the settlement of orders. In a decentralised setting with rational players, such a TTP does not exist. However, we do have access to censorship-resistant public bulletin boards in the form of  blockchain-protocols. As discussed in the Section \ref{sec:Introduction}, these bulletin boards have many caveats such as the ordering of transactions based on transaction send time not being preserved (transaction re-ordering attacks). However, if we are able to bound the delay of updates being added to such a bulletin board (transactions being confirmed on the blockchain), we can implement a $\ourAuctionName$ in such a setting.

In this section we construct the $\protocolName$ protocol as a sequence of algorithms. We then provide a series of results regarding the incentive compatibility of these algorithms with the goal of proving $\protocolName$ instantiates a $\ourAuctionName$, and that following the protocol is an SNE. 

\subsection{System Model}\label{sec:ThreatModel}

\begin{enumerate}
    
    \item All players $\player_1,...,\player_n$ are members of a blockchain-based distributed ledger, and a corresponding PKI.
    
    \item The ledger is represented by a linear blockchain with its state progressing by having new blocks sequentially appended. For simplicity, we assume instant finality of blocks meaning that such an appended (valid) block cannot be replaced at any later point in time.

    \item A transaction submitted by a player for addition to the blockchain (either directly or relayed) while observing blockchain height $\height$,
    is included (and thus finalised) in a block of height at most $\height+\revealTXDelay$, for some known $\revealTXDelay>0$, given that the transaction remains valid for sufficiently many intermediate ledger states.

    \item The public NIZK parameters are set-up in a trusted manner. 
    
\end{enumerate}

We do not make any assumptions regarding transaction ordering in blocks. Specifically, the order in which transactions are executed is at the discretion of the block proposer. 

If block producers are participating as MMs/clients, we need to adjust $\revealTXDelay$. Let $0<\alpha<1$ bound the fraction of blocks produced over chains of length greater than $\revealTXDelay$ by a MM responding to/the set of clients requesting trades in a particular instance of a $\auctionName$ (we need to consider all clients in a request phase, as they may all have the same direction, and as such, some positive expectancy to preventing a MM revelation). We need to increase $\revealTXDelay$ by a factor of $\frac{1}{1-\alpha}$ (similar to the methodology behind the Chain Quality property in \cite{Backbone,AsyncBackbone}).
Moreover, our property can be seen as a `block-based' variant of the time-based \textit{liveness} property defined in~\cite{Backbone,AsyncBackbone}.
An example for instant finality is Algorand~\cite{CheMic19} which stands in contrast to, e.g., Bitcoin which only guarantees eventual finality, while example of a public NIZK parameter setup is a Perpetual Powers of Tau ceremony, as used in Zcash \cite{SetUpCeremony}.

\subsection{\texorpdfstring{$\protocolName$}{TEXT} Algorithms}\label{sec:Algos}

Each player $\playeri$ owns (has exclusive access to) a set of token balances $\balanceTokens_i$ which are stored as a global variable.
For a token $\token$, $\balanceTokens_i(\token)$ is the amount of token $\token$ that $\playeri$ owns. 
Keeping the notation from Section \ref{sec:ZK}, outputs included in round brackets () are known only to the player running the algorithm, with all other outputs posted to the public bulletin board, updating existing variables/balances where appropriate. Algorithm outputs are not signed, so players observing the output of an algorithm instance can only infer information about the player running the algorithm from public outputs and any corresponding global variable updates. 

We now outline $\protocolName$ as a set of algorithms: Setup(), Register(), CommitClient(), CommitMM(), RevealClient(), RevealMM() and Resolution(). A $\protocolName$ instance is initialised by running Setup(), and proceeds indefinitely in rounds of three distinct, consecutive phases: \textit{Commit}, \textit{Reveal} and \textit{Resolution}, each of length $\revealTXDelay$ blocks (see Section \ref{sec:ThreatModel}). For readability, we provide here the intuition to the algorithms of $\protocolName$, with a detailed explanation of each algorithm provided in Appendix \ref{app:FairTraDEXAlgos}.  

Players in the underlying blockchain protocol can enter $\protocolName$ as clients by running an instance of Register(), which for a given client deposits an escrow $\escrowClient$, and generates private information ($\commSerialNum, \ $ $ \commRandomness \in \{0,1\}^{O(\cryptoParam)}$) which is used in CommitClient() to prove that the client indeed deposited an escrow, without revealing which deposit. 

In the Commit phase, all players can run any number of CommitClient() and/or CommitMM() instances. CommitClient() generates a client order, commits to that order publicly and proves in ZK that the player deposited an escrow. If such a proof cannot be generated, or a proof has already been generated for the same $\commSerialNum$, no order can be committed. A correctly run CommitMM() instance generates a market for a prospective MM, commits to that market publicly and deposits an escrow $\escrowMM$. 

In the Reveal phase, players can run any number of RevealClient() and/or RevealMM() instances. RevealClient() publishes an order generated through CommitClient(), returning the escrow corresponding to the CommitClient() instance, and as such the Register() instance, to the client. RevealMM() publishes a market corresponding to a CommitMM() instance, and returns the corresponding escrow. Both Reveal phase algorithms assert that the client and MM have sufficient token balances to submit their order and market respectively. These assertions are also ensured in the Commit phase, but must be rechecked to ensure correct balances at the point of token transfer. 

In the Resolution phase, any number of Resolution() instances can be run. The first correct Resolution() instance selects the tightest market from the set of revealed markets,  $\revealedMkts$, for inclusion in order settlement, and any tie-breaks settled using $\commit( \revealedMkts)$, as a random seed\ignore{\footnote{Given all markets are revealed, the final value of $\revealedMkts$, and as such $\commit(\revealedMkts )$, is unpredictable in the presence of two or more non-cooperative MMs. We prove in Lemma \ref{lem:SINCEMMsGivenCommit} that all MMs running CommitMM() also run RevealMM(). The blockchain-based implementation of this function is described in Appendix \ref{app:tiebreaker}}}. The clearing price which maximises notional traded, and then minimises the notional imbalance of the remaining market and orders is computed. A precise algorithm for verifying the clearing price is included in Appendix \ref{app:ProtocolEncoding}, Algorithm \ref{alg:CPVerifier}, and described in Appendix \ref{app:clearingPriceCalc}\ignore{ The intuition behind the algorithm is as follows: Given more tokens are sold than bought at the proposed price, it can be seen that checking the next price point lower, first for higher traded notional, and then for a greater or equal absolute imbalance is sufficient to verify the proposed price is a valid clearing price. The equivalent check at the next price point above holds when more tokens are bought than sold at the proposed price.}. Orders and markets are then settled based on this clearing price. Finally, the arrays tracking active commitments, orders and markets $ \clientCommits$, $\MMCommits$, $\revealedOrders$, $\revealedMkts $ are cleared, so unsuccessfully revealed commitments during this round cannot be used to run RevealClient() or RevealMM() in future rounds. This effectively destroys the deposited escrows of such commitments.

\subsection{Properties of \texorpdfstring{$\protocolName$}{TEXT}}\label{sec:propertiesOfFairTraDEX}

We now argue that $\protocolName$ possesses all of the necessary properties to instantiate a $\ourAuctionName$, and discuss the practical implications of these properties. As the Register() and CommitClient() algorithms are constructed analogously to the Mint() and Spend() functions in \cite{ZerocoinGreen}, we can make use of the results as provided therein. These can be translated informally as:
\begin{enumerate} 
    \item Linkability: Consider a player $\player_j$, a set of registrations $\regTokens$ to which $\player_j$ does not know the privately committed values, and a valid ZK signature of knowledge $\proofZK$ and serial number $\commSerialNum$ corresponding to some $\regToken_i \in \regTokens$. $\player_j$ in has no advantage in linking $\proofZK$ and $\commSerialNum$ to the corresponding $\regToken_i$ over probability $\frac{1}{|\regTokens|}+\textit{negl}(\cryptoParam)$.
    \item Double-spending: Given a set of registrations $\regTokens$, and any number of  valid $(\proofZK, \ \commSerialNum$) pairs corresponding to elements in $\regTokens$, it is computationally infeasible to generate a new serial number $\commSerialNum'$ and corresponding valid proof of registration $\proofZK \ '$ in $\regTokens$. 
\end{enumerate} 

Given that all players in the system are registered as clients, by definition of MIFP, the expected trade imbalance implied by their orders is $0$. However, in reality, we cannot always expect this level of client participation, with less client registrations typically resulting in a greater advantage for rational players in predicting the implied trade imbalance of committed orders.

To account for this in our analysis, we introduce $n_\secParam$ denoting the minimal number of registrations required to guarantee that EEV is $\negligible(\secParam)$. Note that in certain blockchain systems, as the total number of players may be unknown to players within the system, precisely defining $n_\secParam$ may not be possible (for example, a player registering for the second time observes a smaller relative increase than a player registering for the first time). In that sense, our analysis demonstrates that the listed desirable properties can be achieved under a sufficient level of registration ($n_\secParam$ registrations), but not necessarily that a client can detect whether this level is met in a given auction instance. 
In practice, a client's decision whether or not to commit to an order in $\protocolName$ will be based on heuristics involving the number $n_c$ of observed client registrations, noting that non-negligible EEV may be tolerable if the total expected participation fees are less than $\fee$.

\ignore{ 
Given all players in the system are registered as clients, by definition of the MIFP, the expected imbalance of an order from a randomly selected client must be 0. Furthermore, as players register to the protocol, by the law of large numbers, a player $\playeri$ observing an order committed by some other non-cooperative player, the expected imbalance at the MIFP of that order approaches 0 (as commitments are indistinguishable, from the linkability property above). 

We let $ n_\secParam >1$ be such that given at least $n_\secParam$ Register() calls, from the view of any player in the protocol, the expected absolute imbalance at the MIFP of an order committed by another player is $\negligible$. Note that we have $n_\secParam$ increasing and bounded in $\secParam$. Letting $N_T $ be the total notional in the blockchain system, and $ N_M $ be the total notional identified to be controlled by MMs, it must be that $n_\secParam \leq \frac{N_T-N_M}{\escrowClient}$, $\forall \ \secParam \geq 1$. This bound is equivalent to the maximum possible number of client registrations. As such, clients seeking higher levels of EEV protection must wait for increasingly many registrations before committing to an order.  

Given a monopolistic MM, as $\psi$ increases, $n_\secParam$ quickly reaches the limit of $\frac{N_T-N_M}{\escrowClient}$. We believe that the use of $n_\secParam$ as a variable which must increase to reduce EEV and vice versa provides important intuition regarding the benefits provided by $\protocolName$. For the purpose of this paper, it is sufficient to know that increasing $n_\secParam$ reduces a MM's ability to guess which client corresponds to a committed order, and importantly, the order contents. Quantifying EEV, and it's precise relationship with $n_\secParam$ in the presence of a monopolistic MM makes for interesting future work. When two or more non-cooperative MMs exist, as all MMs can run Register() instances, inferring the true client imbalance from $\regTokens$ becomes significantly less effective, as MMs can bluff, double-bluff, etc. other MMs into an incorrect expected client imbalance.
} 

When players are rational, however, running these algorithms might not be an SNE. Only if following a protocol is an SNE can we be sure that rational players correctly follow the protocol. Towards proving $\protocolName$ forms an SNE for all rational clients and MMs, we prove a serious of Lemmas that we use to prove the main result of the section, Theorem \ref{thm:FairTraDEX}. Due to space restrictions, we provide here an intuition for these Lemmas, while formally stating and proving them in Appendix \ref{app:proofs}. We first prove that some player in the blockchain protocol runs a Resolution() instance every round. Then, we prove a serious of Lemmas demonstrating that given a rational client (resp. MM) runs an instance of Register() (resp. CommitMM()), that same player correctly runs CommitClient() and RevealClient() (resp. RevealMM()) in the proceeding phases. Finally, we show that it is indeed an SNE for a client (resp. MM) to run Register() (resp. CommitMM()). 

With these results in hand, we have it that rational clients and rational MMs correctly execute all algorithms as outlined by $\protocolName$.
We now show that with at least $n_\secParam$ Register() calls, the optimal strategy for a client is to submit market orders, while the optimal strategy for a MM with MIFP $\MIFPAB$ is to show a market $\bid \ @ \ \offer$ with $\bid \approx \MIFPAB \approx \offer$ in the case where there are at least 2 non-cooperative MMs, and of width $\width \leq \fee$ otherwise.

\begin{theorem}\label{thm:FairTraDEX}
    Consider an instance of $\protocolName$ between $\tokenA$ and $\tokenB$ with MIFP $\MIFPAB$ and at least $n_\secParam$ previously called instances of Register(). For $ \numMMs$ non-cooperative MMs, the following strategies form strict Nash equilibria:
    \begin{itemize}
        \item $\numMMs=1$: Clients run Register(), followed by CommitClient() producing market orders of width $\fee$. The MM runs CommitMM() producing a market of width at most $ \fee$ with reference price equal to $\MIFPAB$ in size $\maxAuctionNotional$. Clients and MMs then run RevealClient() and RevealMM() respectively.
        \item $\numMMs\geq 2$: 
        Clients run Register(), followed by CommitClient() producing market orders of width greater than 1. MMs run CommitMM() producing markets of width 1 with reference price equal to $\MIFPAB$ in size of at least $\maxAuctionNotional$. Clients and MMs then run RevealClient() and RevealMM() respectively.
    \end{itemize}
\end{theorem}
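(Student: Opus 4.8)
The plan is to prove the theorem by a reduction: show that when all rational players adopt the prescribed strategies, an instance of $\protocolName$ faithfully instantiates a $\ourAuctionName$ as specified in Definition~\ref{def:WSFBAuction}, and then invoke Theorem~\ref{thm:ourFBA}, whose SNE strategies coincide exactly with those listed here (market orders of width $\fee$ for clients, and MM markets of width at most $\fee$ when $\numMMs=1$ or width $1$ when $\numMMs\geq 2$, centred at $\MIFPAB$ in size $\maxAuctionNotional$). Since any deviation strictly reduces a player's utility in the ideal $\ourAuctionName$, it suffices to argue that every action available to a deviating player in $\protocolName$ --- including blockchain-specific manipulations --- maps to an action no more profitable than its $\ourAuctionName$ counterpart, so that the strict inequality is preserved.

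First I would assemble the preliminary Lemmas stated in Appendix~\ref{app:proofs}. These establish that (i) some player runs a Resolution() instance every round, so settlement is guaranteed using the bounded delay $\revealTXDelay$ of the system model; (ii) a rational player who runs Register() (resp. CommitMM()) strictly prefers to subsequently run CommitClient() and RevealClient() (resp. RevealMM()) correctly, since the escrow-enforced commit-reveal makes non-revelation or balance misreporting forfeit the deposit $\escrowClient$ (resp. $\escrowMM$), and the double-spending property forbids reusing a registration; and (iii) running Register() (resp. CommitMM()) is itself an SNE entry decision. Taken together these show that rational clients and MMs execute every algorithm exactly as a TTP would enforce, so the on-chain sequence of commitments, revelations, and the clearing-price computation reproduce the $\ourAuctionName$ order book and settlement rule.

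Next I would close the informational gap between the TTP's private order collection and $\protocolName$'s public-but-hidden commitments. Using the linkability property, a counterparty can associate a committed order with its originating registration in $\regTokens$ only with probability $\frac{1}{|\regTokens|}+\negligible(\cryptoParam)$; combined with the hypothesis of at least $n_\secParam$ registrations and the symmetry of $\netTradeSize$ around $0$ at the MIFP, the expected absolute imbalance any rational deviator (including a monopolistic MM or a block producer observing the mempool) can infer before the commit phase closes is $\negligible(\secParam)$. This is precisely the condition under which the blockchain environment becomes equivalent to the TTP's private submission: no player can respond differently to buy versus sell orders, so no directional value extraction improves on the $\ourAuctionName$ strategy. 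I would then check that the remaining blockchain-specific deviations are neutralised --- transaction reordering and injection reveal nothing actionable because order contents and direction stay hidden until the commit phase ends, censorship is bounded by the $\frac{1}{1-\alpha}$-adjusted delay, and the tie-break seed $\commit(\revealedMkts)$ is unpredictable given two or more non-cooperative MMs, preventing manipulation of which market is selected.

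The hard part will be this third step: rigorously gluing the cryptographic hiding guarantee to the game-theoretic $n_\secParam$ bound so that no deviation exploiting the partial information that is unavoidably public on-chain --- registration deposits, commitment timestamps, the number of commitments --- yields positive expected value. Unlike a TTP, the blockchain exposes these side channels even while contents and authorship remain hidden, so the argument must show that the linkability bound dominates every such inference, and that the separation of escrow deposit and registration from order commitment (together with relaying) leaves a monopolistic MM with strictly negative expected value from any market not centred at $\MIFPAB$, thereby preserving the strictness of the equilibrium inherited from Theorem~\ref{thm:ourFBA}.
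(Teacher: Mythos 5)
Your proposal follows essentially the same route as the paper: the paper's proof is exactly the composition of (i) a corollary (from the appendix lemmas on Resolution(), escrow-enforced commit--reveal, and participation) that rational clients and MMs always follow the protocol, (ii) an observation that $\protocolName$ with at least $n_\secParam$ Register() calls implements a $\ourAuctionName$, and (iii) an application of Theorem~\ref{thm:ourFBA}. The additional rigour you flag as the ``hard part'' --- formally gluing the linkability bound to the $n_\secParam$ game-theoretic guarantee and ruling out side-channel inferences --- is handled only informally in the paper's Observation~\ref{obs:FTDEXisWSFBA}, so your plan is, if anything, more demanding of itself than the published argument.
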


Although providing width-  markets may seem prohibitive for MMs, the unique guarantees of $\protocolName$ ensure that no players external to the protocol can extract value from players within the protocol. As player value is being retained within the $\protocolName$ protocol, fees can be introduced to compensate MMs. Given the potential value retention of $\protocolName$ (see Section \ref{sec:Attacks}, Table \ref{table:AttackComparison}), these fees can be substantial while still ensuring $\protocolName$ provides clients with best-in-class liquidity.

\begin{remark}\label{rem:Killer}
    To minimise expected absolute trade imbalances in a DEX auction, existing protocols, including $\protocolName$, require the hiding/mixing of order-information. Consider how $\protocolName$ compares to previous DEXs aimed at ensuring client privacy \cite{P2DEXBaum,BlockAuctionPeriodicAuctionsConstantinides,PubliclyVerifiableSecrecyPreservingPeriodicAuctionsGalal}. In these previous protocols, each order commit reveals the same, and in some cases more information per-order than a Register() call in $\protocolName$. EEV protection guarantees in these previous protocols which require $n_\secParam$ orders per auction are achieved in $\protocolName$ for every order in every auction when $n_\secParam$ players are registered to participate in the protocol. This is an $n_\secParam$ factor improvement in EEV protection/block-space requirements per auction. 
    
    More than this, these previous protocols face liveness issues when players are concerned about EEV. The first players entering one of these previous protocols must choose to do so without any guarantees of protection against EEV attacks based on information leaked from order commitment (trade direction, identity, trading patterns, etc.). 
\end{remark}

\subsection{Smart Contract Implementation of \texorpdfstring{$\protocolName$}{TEXT}}\label{sec:Encoding}

A blockchain-based pseudo-code implementation of $\protocolName$, and code description, are provided in Appendix \ref{app:Protocol}, while a Solidity implementation of Fair-TraDEX is provided in \cite{FairTraDEXGithubPublic}. We outline here the key differences between the algorithmic description of Section \ref{sec:Algos}, and the blockchain-based implementations of Appendix \ref{app:Protocol} and \cite{FairTraDEXGithubPublic}. As a blockchain-based implementation under the  model of Section \ref{sec:ThreatModel} involves a PKI for message sending, all public algorithm outputs must now be signed using the PKI. These messages must now be included in blockchain transactions, with a transaction fee required to ensure the transaction gets added to the blockchain. 

For a player to publish a transaction to a blockchain-based smart contract without revealing her identity, she must utilise a relayer (for details on relayers, see Appendix \ref{sec:Relayers}). Otherwise, the transaction fee is payable from her account, revealing sensitive information such as trading patterns and account balances. Furthermore, this relayer must be rewarded on-chain for relaying the transaction. This reward is added by the client when depositing her escrow, and retrievable by the first relayer publishing the transaction to the blockchain.
Furthermore all checks, such as those for the previous use of serial numbers in CommitClient(), or the recording of the tightest MM width in Resolution(), are explicitly encoded in the provided implementations.

\section{Cost-Benefit Analysis of \texorpdfstring{$\protocolName$}{TEXT}}\label{sec:Attacks}

The aim of this section is to demonstrate the contributory significance of FairTra-DEX vs. current state-of-the-art protocols as introduced in Section \ref{sec:RelatedWork}.
In Table \ref{table:RelatedWorkComparison} we include an overview of the gas costs for running $\protocolName$ compared to the previous blockchain-based attempts to implement batch auctions of \cite{PubliclyVerifiableSecrecyPreservingPeriodicAuctionsGalal,BlockAuctionPeriodicAuctionsConstantinides}, with numbers taken from the respective papers. It can be seen that $\protocolName$ has a slightly greater upfront gas cost for clients, but a lesser cost for MMs. 

To demonstrate the benefits of $\protocolName$, Table \ref{table:AttackComparison} compares specific swaps that allow for EEV attacks in existing state-of-the-art protocols. 
We perform our analysis on ETH/USDC swaps, as this is the highest volume pool on Uniswap, which at time of writing had pool sizes of 120k ETH and 185M USDC, an indicative MIFP of 1 ETH equal to 1,540 USDC \cite{UniswapWebsite}. Furthermore, we use a gas cost of 7 gwei \cite{ETHGasCost}. 

Consider 3 buy ETH orders of 10k, 500k and 10M USDC from 2 different players who are known to need to trade at any price. $\player_1$ has large quantities of both ETH and USDC, and buys or sells ETH pseudo-randomly, while $\player_2$ only owns USDC/only buys ETH. We take the estimated impact for each order to be $~0$, $0.15\%$  and $1\%$ respectively, numbers taken from the Uniswap V3 API \cite{UniswapWebsite} (these are more realistic impacts than those implied by the constant product impact \cite{AnalysisOfUniswapAngeris} of $~0$, $0.54\%$  and $11.1\%$ respectively). Although this is a simplification of order impact, true impact is likely some multiple/factor of this impact. Protocol fees incentivizing MMs to provide liquidity are omitted as they are not considered in the provided academic protocols. After gas costs, this fee should be approximately equal for all protocols (the Uniswap fee for this pool is $0.3\%$).

\vspace{-0.4cm}

\begin{table*}[h]
\centering
\begin{tabular}{l|c|c|c|c}
    \toprule
     & $\protocolName$ \footnotemark  & Uniswap & \cite{BlockAuctionPeriodicAuctionsConstantinides} & \cite{PubliclyVerifiableSecrecyPreservingPeriodicAuctionsGalal} \\
    \midrule
    Register & 112,800 & -  & 87,000 & -  \\  
    Commit Client & 344,500 & -  & 52,000 & 276,150  \\ 
    Commit MM (per order) & 24,300 & -   & 52,000 & 276,150 \\  
    Reveal (per order) & 172,000 & 190,000  & 171,000 & 48,750 \\  
    Settle (per order) \footnotemark   & 45,500 & -  & 122,500 & 54,000  \\ 
    \midrule
    Total Client  & 674,800 & 190,000 & 432,500 & 378,900 \\
    Total MM & 266,100  & -    & 397,500 & 649,050 \\
    \midrule
    Total Client (USDC) & 7.27 & 2.05 & 4.66 & 4.08 \\
    Total MM (USDC) & 2.87 & - & 4.09 & 7 \\
    \bottomrule
  \end{tabular}
  \caption{Comparison of gas costs in batch-auction implementations\label{table:RelatedWorkComparison}.
  \footnotesize{\textsuperscript{2} Costs provided for $\protocolName$ are amortised over 128 client orders and 8 markets. \textsuperscript{3} We add an estimated cost for token transfer from smart contract to player of 40,000 to the figures provided in \cite{PubliclyVerifiableSecrecyPreservingPeriodicAuctionsGalal} to standardise the costs therein with those of $\protocolName$ and \cite{BlockAuctionPeriodicAuctionsConstantinides}.} }
\end{table*}

\vspace{-1.8cm}

\begin{table*}[h]
\centering
\begin{tabular}{l|c|c|c|c}
    \toprule
     & $\protocolName$  & Uniswap & \cite{BlockAuctionPeriodicAuctionsConstantinides} & \cite{PubliclyVerifiableSecrecyPreservingPeriodicAuctionsGalal}  \\
    \midrule
    $\player_1$-10,000 & 0 & 50 & 0 & 0  \\
    $\player_2$-10,000 & 0 & 50 & 0 & 0 \\
    $\player_1$-500,000 & 0 & 3000 & 750 & 0 \\
    $\player_2$-500,000 & 0 & 3000 & 750 & 750 \\
    $\player_1$-10,000,000 & 0 & 150,000 & 100,000 & 0  \\
    $\player_2$-10,000,000 & 0 & 150,000 & 100,000 & 100,000 \\ 
    \bottomrule
  \end{tabular}
  \caption{Comparison of execution costs in USDC of batch-auction implementations. \label{table:AttackComparison}}
\end{table*}

\vspace{-1cm}

When $\player_1$ submits an order in $\protocolName$ or \cite{PubliclyVerifiableSecrecyPreservingPeriodicAuctionsGalal}, no information is gained about the direction of the trade. However, in \cite{BlockAuctionPeriodicAuctionsConstantinides}, direction is revealed. As such, any blockchain participant can front run that impact on all other markets, and thus the MIFP for any MM responding to the order will be the impacted MIFP. When $\player_2$ submits an order in either of \cite{BlockAuctionPeriodicAuctionsConstantinides,PubliclyVerifiableSecrecyPreservingPeriodicAuctionsGalal} the direction is known, and the MIFP is impacted in the same way as for $\player_1$, \textbf{before} any player interacts with $\player_2$, giving $\player_2$ a worse price. Using estimated price impacts of $~0$, $0.15\%$  and $1\%$, Table \ref{table:AttackComparison} demonstrates the costs of executing these swaps, excluding transaction fees, in these protocols, and Uniswap. For Uniswap, we must also add the recommended slippage, an additional $0.5\%$ of the order size, as it is always in a block producers interest to give Uniswap players worst execution. It can be seen that these costs become increasingly more significant as order size increases, dominating the differences in gas costs of Table \ref{table:RelatedWorkComparison}.

Although Table \ref{table:AttackComparison} can be seen as simplifying how orders are handled, it demonstrates two crucial motivators for our work. Firstly, any information revealed about clients before a trade is agreed can, is and will continue to be used against clients. Furthermore, this cost is not necessarily paid to the MM. As orders are committed in public, any blockchain participant can use the committed information to front run the impact on the MIFP before the client or MM has an opportunity to trade, extracting money from the DEX protocol. Secondly, as the effects of these value-extraction techniques increase super-linearly in order-size, a protocol with the value-extraction guarantees of $\protocolName$ is needed to allow typically large clients to utilise the benefits of DEXs, and blockchain protocols in as a whole, at a fixed cost, as demonstrated in Table \ref{table:RelatedWorkComparison}, without incurring the prohibitive execution costs of previous solutions, as demonstrated in Table \ref{table:AttackComparison}.

\vspace{-0.5cm}

\section{Conclusion}\label{sec:conclusion}

We provide $\protocolName$, a blockchain-based DEX protocol based on WSFBAs in which we formally prove the strategies of rational participants have strict Nash equilibria in which all trades occur at the MIFP plus or minus bounded upfront costs (specified market widths) which approach 0 in the presence of non-cooperative MMs. This is an attractive alternative to existing mainstream protocols such as AMMs where rational players effectively and systematically prevent such an equilibrium from happening. Compared to previous blockchain-based attempts to implement EEV-proof DEXs, $\protocolName$ is the first to practically allow for indistinguishable client-order submissions by decoupling order submission from escrow deposit and order revelation. The $\protocolName$ benefits formalised in Section \ref{sec:propertiesOfFairTraDEX}, summarised in Remark \ref{rem:Killer}, and demonstrated in Section \ref{sec:Attacks} provide important improvements on previous protocols regarding EEV protection, setting a new standard for EEV protection in DEXs.

As stated in the comparisons of Section \ref{sec:Attacks}, protocol fees are omitted for all protocols. Given the total retention of value within the $\protocolName$ protocol (no extractable value), fees in line with the utility gained by clients for exchanging their tokens can be charged to incentivise the long-term participation of MMs in $\protocolName$. These fees should reflect the need to incentivize MMs while retaining the unique client-side benefit of trading at the MIFP in expectation, which is proven to occur in $\protocolName$. Analysis of these fees makes for interesting future work.

\ignore{To do this, we show that $\protocolName$ implements a width-sensitive FBA, an FBA variant that benefits from the game-theoretic guarantees of an FBA in a distributed blockchain-based setting. We prove our construction is strong incentive compatible in expectation for all rational participants, and combine these results in Theorem \ref{thm:FairTraDEX} which outlines the conditions under which clients and MMs trade at the MIFP. In this equilibrium, all fees are known a-priori, with EEV effectively prevented when enough players have registered to participate in the protocol. 
Analysing and optimising the provided $\protocolName$ implementations for wide-scale deployment makes for exciting future work. We also envisage a paradigm in which existing work on delay encryption \cite{DelayEncryptionBurdges} and/or threshold-key encryption \cite{ThresholdEncryptionAdaptiveDomingoFerrer} becomes sufficiently developed in fully decentralized systems, particularly in the ByRa model \cite{AchievingSMRwithoutHonestPlayers}. Advances in either of these areas could allow $\protocolName$ to become a single-action protocol for clients and MMs, with an equivalent user experience to current AMMs in addition to provable EEV prevention.}

\addcontentsline{toc}{section}{Bibliography}
\bibliographystyle{splncs04}
\bibliography{references}

\appendix

\section{Extended Related Work}\label{app:RelatedWork}

\subsection*{Estimating (Miner) Extractable Value is Hard \cite{EstimatingMEVLetsGoShoppingJudmayer}}

This paper paper attempts to formalise extractable value and generalise it beyond value extractable be miners. We also believe it is necessary to model the decision of all rational players based on \textit{expected extractable value} (EEV) that can generated by particular orderings of transactions/blocks by any player in the system, and not just the miner. The approach taken is to consider EEV as the maximum of all non-protocol strategies, with protocols considered secure if the EEV of following the protocol is strictly dominated by following the protocol strategy, which is further formalised in \cite{AchievingSMRwithoutHonestPlayers}. In our paper, we also consider an additional case of EEV not necessarily considered in \cite{EstimatingMEVLetsGoShoppingJudmayer} which is prevalent in commit-reveal protocols such as \cite{FairMMCiampi}. In such protocols, honest behaviour usually involves sending a valid second transaction (the reveal transaction in a commit-reveal protocol), but where players can extract value in expectancy by not sending these transactions. However, we believe the definition of EEV in \cite{EstimatingMEVLetsGoShoppingJudmayer} can be extended to include these attacks. As such, we also move away from the legacy use of MEV, and focus instead on the prevention of the more general EEV.

\subsection*{Publicly Veriﬁable and Secrecy Preserving Periodic Auctions \cite{PubliclyVerifiableSecrecyPreservingPeriodicAuctionsGalal}}

This protocol also attempts to implement an FBA, and as such has many similarites to $\protocolName$. As in $\protocolName$, the protocol progresses in rounds of Commit, Reveal and Resolution phases. In \cite{PubliclyVerifiableSecrecyPreservingPeriodicAuctionsGalal}, there is a designated \textit{operator} who is in charge of settling the auction. Players commit to orders in the Commit phase, as well as providing cryptographic information, which is used to prove correct settlement in the Resolution phase. Unlike $\protocolName$, these commit messages are sent by players directly to the blockchain, revealing identity and trade direction. In the Reveal phase, players encrypt their orders using the operator's public key, and send the encryptions to the operator. In the Resolution phase, the operator then chooses a clearing price which intersects the buy and sell liquidity, maximising the notional to be traded. The operator then publishes a list of all matched orders to the blockchain, along with a range proof which is used to verify the correct execution of orders, while not revealing any information about unexecuted orders other than that already revealed in the commit phase.  

\subsection*{Block Auction \cite{BlockAuctionPeriodicAuctionsConstantinides}}

This protocol attempts to implement an FBA, and is the most similar to $\protocolName$. It is an improvement on \cite{PubliclyVerifiableSecrecyPreservingPeriodicAuctionsGalal}, with a direct comparison of the two protocols forming the main basis of the justification of \cite{BlockAuctionPeriodicAuctionsConstantinides}. As in \cite{PubliclyVerifiableSecrecyPreservingPeriodicAuctionsGalal}, the protocol is overseen by an \textit{operator} who is in charge of receiving orders privately from players and correctly executing the auction. As in $\protocolName$, the protocol progresses in rounds of Commit, Reveal and Resolution phases. In the Commit phase, players commit to orders and publish these commitments to the blockchain. Although not revealing the trade direction as is the case in \cite{PubliclyVerifiableSecrecyPreservingPeriodicAuctionsGalal}, these commit messages are sent by players directly to the blockchain, and as such reveal identity  In the Reveal phase, players encrypt their orders using the operator's public key, and send these encryptions to the operator.  The operator then publishes all executed orders, while revealing nothing about unexecuted orders. The validity of execution depends on all players who submitted orders verifying that their order should not have been executed given the list of executions.

\subsection*{P2DEX \cite{P2DEXBaum}}

The P2DEX protocol is an off-chain MPC protocol run by servers where players can submit orders to exchange tokens from one blockchain to another (although it also appears applicable to one blockchain with many tokens). The orders are encrypted using a threshold secret-sharing scheme with each server receiving a unique share. The protocol has mechanisms to identify double-spending of player funds sent to the servers, and deviation (failure/ misbehaviour) of servers, as the MPC matching protocol is publicly verifiable. As such, all players in the blockchain can verify that a set of orders have been matched correctly, or some of the servers deviated from the protocol. The exchange depends on all servers participating in a secret-sharing protocol to match orders, with at least one server being honest/not colluding with other servers. As with \cite{PubliclyVerifiableSecrecyPreservingPeriodicAuctionsGalal,BlockAuctionPeriodicAuctionsConstantinides}, an emphasis is placed on not revealing unmatched orders. 

\subsection*{FairMM \cite{FairMMCiampi}}

Clients submit orders to a single (monopolistic) MM in an off-chain $\Sigma$-protocol. Clients contact the MM with a size, direction (communicated on-chain) and price (communicated off-chain). If the MM accepts, the MM then publishes the trade on the blockchain, otherwise aborts. Client orders are sequentialized so only one order can be executed at a time, preventing the MM from reordering client orders. 

\subsection*{FuturesMEX \cite{FuturesMEXMassacci}}

The FuturesMEX protocol is an MPC version of a DEX with claims of anonymity. In \cite{FuturesMEXMassacci}, client token balances are kept privately by owners in an off-chain database, so the protocol has limited applicability to a blockchain-based setting. Furthermore, orders are submitted publicly to all participants before being settled. For smaller clients with less connectivity, this is equivalent to showing the order to more-connected counterparties before it is executed. This is a typical source of EEV in existing blockchain protocols (through front-running attacks), and something which is protected against in $\protocolName$.

\section{Terminology and Useful Definitions}\label{app:terminology}

This section contains additional financial and game-theoretical terms used in this paper. Although not mandatory for all readers, this section serves as a useful reference point towards understanding the results and discussions that follow.

\begin{itemize}
    \item \textit{Decentralized Exchange} (DEX): A distributed marketplace which allows players to swap one token for another. 
     
     \item \textit{Limit Order}: Specifies an amount of tokens to be bought (sold), and a maximum (minimum) price at which to buy (sell) these tokens. This price is known as the \textit{limit price}.
     
     \item \textit{Market Order}: Specifies an amount of tokens to be sold, but no limit price. Market orders are to be executed immediately at the best available price based on the liquidity of buy orders. 
     
     \item \textit{Direction}: With respect to an order on a market quoted from token $\tokenA$ to $\tokenB$, if the order is trying to buy token $\tokenB$, the direction is \textit{buying}, while if the order is trying to sell token $\tokenB$, the direction is \textit{selling}. 

    \item \textit{Forward Price}: This is the price at which a seller delivers a token to the buyer at some predetermined date. In any exchange protocol without instantaneous delivery, the forward price at expected delivery time is the price at which trades should happen. The difference between current (spot) price and forward price is known as \textit{carry}, and can be due to storage/opportunity costs, interest rates, etc. In this paper, we set carry to 0 for complexity and ease-of-notation purposes.

\end{itemize}

The following definition of expected extractable value is translated from \cite{EstimatingMEVLetsGoShoppingJudmayer} using the terminology of this paper.

\begin{definition}\label{def:EEV}
    The expected extractable value $\text{EEV}_i$, describes the total value in value units, which is transferred to player $\player_i$ in expectation using a certain strategy which produces a transaction, sequence of transactions, or blocks that later become part of the main chain with some probability.
\end{definition}

\subsection{Frequent Batch Auctions}

As stated in Section \ref{sec:Introduction}, $\protocolName$ is based on an FBA \cite{FrequentBatchAuctionsBudish}. FBAs are used in many of the largest centralised exchanges \footnote{FCA \url{https://www.fca.org.uk/publications/research/periodic-auctions}, CBOE \url{https://www.cboe.com/europe/equities/trading/periodic_auctions_book/}, ESMA \url{https://www.esma.europa.eu/sites/default/files/library/esma70-156-1035_final_report_call_for_evidence_periodic_auctions.pdf}}.  As FBAs were initially intended for a centralised setting, we consider them being run by a trusted third party (TTP) who enforces the correct participation of all players. In $\protocolName$, the key TTP functionalities needed to instantiate an FBA are replicated using ZK set-membership proofs, incentivisation and a blockchain as a censorship-resistant bulletin-board. We define an FBA here using the terminology of our paper. 

\begin{definition}\label{def:FBAAuction}
    A \textit{frequent batch auction} ($\auctionName$) (sometimes referred to as a periodic auction) involves clients and MMs privately submitting either limit or market orders to the TTP. These orders are collected until a specified deadline. After this deadline, the orders are settled at the clearing price. A single \textit{clearing price} is chosen which maximises the total notional traded based on the specified sizes and prices of all orders. If there is more supply (quantity of tokens being sold) at the clearing price than demand (quantity of tokens being bought), all sell orders offered at the highest price at or below the clearing price are pro-rated based on size such that supply equals demand at the clearing price. Similarly, if there is more demand than supply at the clearing price, all buy orders bid at the lowest price at or above the clearing price are pro-rated based on size such that demand equals supply at the clearing price. Any limit buy orders below/sell orders above the clearing price are not executed.
\end{definition}

There are two key differences between this definition and the specification in \cite{FrequentBatchAuctionsBudish}:
\begin{enumerate}
    \item In our definition, if an order is not fulfilled, it is revealed with any tokens not being sold returned to the seller. This does not affect the game-theoretic guarantees of the paper, as the results in \cite{FrequentBatchAuctionsBudish} only depend on the hiding of order information while players are submitting orders to the auction. 
    \item As every order in our auction must be submitted independently for each auction, there is no time priority applied when pro-rating orders in case of a supply-demand imbalance. This is a sub-case of the FBAs as defined in \cite{FrequentBatchAuctionsBudish}, and consequently, our protocol retains the same game-theoretic guarantees.
\end{enumerate}

\subsubsection{Game-Theoretic Guarantees of a Frequent Batch Auction} \label{sec:FBAGTGuarantees}

In this section we investigate the properties of an $\auctionName$ between rational MMs and rational clients, where MMs do not know the desired trade direction of the clients. 

We first restate, using the terminology from this paper, the main result from \cite{FrequentBatchAuctionsBudish} which applies to our game-theoretically equivalent definition of an $\auctionName$. To do this, we let $\netTradeSize$ represent the net trade imbalance of clients in a particular instance of an $\auctionName$ in terms of $\bitcoin$. A positive $\netTradeSize$ indicates a client buy imbalance (more client buyers than sellers of the swap), while a negative $\netTradeSize$ indicates a client sell imbalance. We require a finite bound on the absolute imbalance, which we denote $\maxAuctionNotional<\infty$, for the existence of optimal MM strategies. As in \cite{FrequentBatchAuctionsBudish}, we assume that $|\netTradeSize|\leq\maxAuctionNotional$, and in-keeping with the notion of an MIFP, $\netTradeSize$ is symmetric around 0 at the MIFP.

\begin{theorem}\label{thm:FBA}
    \cite{FrequentBatchAuctionsBudish} For an $\auctionName$ with at least two non-cooperative MMs, there is a strict Nash equilibrium where clients only submit market orders and MMs show a market of width 1 ($\bid=\offer$) centred around the MIFP in size greater than $\maxAuctionNotional$. 
\end{theorem}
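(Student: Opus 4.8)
The plan is to treat Theorem~\ref{thm:FBA} as the two non-cooperative MM specialisation of the standard FBA equilibrium of \cite{FrequentBatchAuctionsBudish}, re-derived in this paper's notation, and to verify it by a two-sided best-response argument: fixing one side at the candidate strategy, I would show the other side's candidate action is its unique best response, and then argue the converse. The two halves are a client-optimality claim (market orders) and an MM-optimality claim (width 1 centred at the MIFP), with Bertrand competition among the $\numMMs\geq2$ MMs doing the work of pinning the width.

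For the client side I would fix both MMs quoting a width-$1$ market at the MIFP, so that the single clearing price equals $\MIFPAB$ and, since each MM shows size exceeding $\maxAuctionNotional\geq|\netTradeSize|$, every client order is guaranteed to execute. A client derives strictly positive surplus from trading, encoded by the minimum client fee $\fee>1$: a buyer has positive expected utility at any price at or below $\sqrt{\fee}\,\MIFPAB$, which comfortably includes $\MIFPAB$. A market order realises this surplus with certainty. I would then argue any genuine limit order is strictly worse: because observing the MIFP is prohibitively costly for a client, any finite limit price carries strictly positive probability of landing on the unfavourable side of the clearing price, forfeiting the positive surplus on that event, while never improving the execution price (pinned at $\MIFPAB$). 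Hence market orders are the unique client best response.

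For the MM side I would fix clients submitting market orders and the competing MM(s) quoting width $1$ at the MIFP. Lemma~\ref{lem:WTTMMstrat}, a fixed-width optimality statement independent of the equilibrium, already forces any MM that trades to centre its market at $\MIFPAB$, ruling out off-centre deviations. It then remains to pin the width to $1$, which I would do via the Bertrand-style logic enabled by $\numMMs\geq2$: the auction routes the entire executable imbalance to the tightest quoted market, so an MM showing any width $\width>1$ is strictly undercut by a competitor who can quote an intermediate width $1<\width'<\width$ (still profitable by the MM definition) and capture all the flow. No width strictly above the feasibility floor $1$ survives this undercutting, leaving width $1$ at $\MIFPAB$ as the only mutually consistent quote.

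The main obstacle is establishing \emph{strictness} on the MM side. Client deviations are strictly dominated by the execution-risk argument above, but a zero-spread Bertrand outcome is generically only a weak equilibrium: a unilateral widening yields zero captured flow, which naively ties with the zero-margin width-$1$ outcome. I would close this gap using the auction's tie-break and pro-rata flow-splitting rule together with the strict value of holding the tightest quote: at the competitive floor each participating MM retains a positive share of the flow, whereas any widening forfeits that share, so the deviation is strictly worse whenever capturing flow is strictly valuable (which any arbitrarily small per-trade margin or participation fee guarantees). For the fully idealised zero-fee limit I would appeal to the original strict-equilibrium argument of \cite{FrequentBatchAuctionsBudish}, whose adverse-selection structure pins the competitive spread and makes the undercutting incentive strict; the present statement is its direct translation into the $\bid\ @\ \offer$ notation under symmetric, uninformed client flow.
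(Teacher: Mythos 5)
The paper does not actually prove this statement: Theorem~\ref{thm:FBA} is imported wholesale from \cite{FrequentBatchAuctionsBudish} (the citation is part of the theorem statement), and the only argument the paper supplies is the surrounding discussion that its definition of an $\auctionName$ is game-theoretically equivalent to the one in that reference --- namely that revealing unfilled orders after settlement does not matter because the guarantees only depend on hiding order information during submission, and that dropping time priority in the pro-rata rule is a sub-case of the original model. So your attempt to reconstruct a self-contained best-response proof is a genuinely different route from the paper, and the client half of it is fine: with both MMs at width $1$ in size exceeding $\maxAuctionNotional\geq|\netTradeSize|$, the clearing price is pinned at the MIFP, market orders capture the client's surplus ($\fee>1$) with certainty, and any limit order only adds execution risk without improving price. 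This mirrors the argument the paper does write out for the $\numMMs\geq 2$ case of Theorem~\ref{thm:ourFBA}.

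The gap is exactly where you locate it, and your patches do not close it. At width $1$ centred at the MIFP the MM's expected profit is zero --- the paper says so explicitly immediately after the theorem (``it is also shown in this equilibrium that MMs have 0 expected utility''), and Equation~\ref{eq:fee} with $\width=1$ confirms it. Consequently ``capturing flow'' is not strictly valuable: a unilateral widening forfeits the flow but forfeits nothing of value, so the deviation ties rather than strictly loses, and the tie-break/pro-rata share argument cannot manufacture strictness out of a zero margin. The appeal to an ``arbitrarily small per-trade margin or participation fee'' changes the game being analysed, and the final appeal to the adverse-selection structure of \cite{FrequentBatchAuctionsBudish} does not obviously transfer, because in this paper's model the MIFP is a perfectly informative public signal observable by MMs and client flow is symmetric around it, so there is no sniping or informed-flow cost to pin a strictly positive competitive spread. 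In the end your proof, like the paper's, rests on the external citation for the one part that is actually hard --- strictness on the MM side --- so the honest conclusion is either to present the theorem as the paper does (a cited result plus a definitional-equivalence argument) or to weaken ``strict Nash equilibrium'' to ``Nash equilibrium'' for the MM deviation in any self-contained derivation.
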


This is a useful result in the case of at least two non-cooperative MMs, with clients receiving a game-theoretic guarantee that they can exchange one token for another at the MIFP in expectancy in an $\auctionName$. Furthermore, as MM liquidity is greater than the net client trade size, the implicit impact to these trades in \cite{FrequentBatchAuctionsBudish} is bounded by the width, which is 1. As clients have a strictly positive utility for exchanging tokens, this is equivalent to clients always having positive expectancy to participate in an $\auctionName$. However, it is also shown in this equilibrium that MMs have 0 expected utility. A basic adjustment to the protocol in that setting would then be to charge clients a fee for the service and pro-rate these fees to the MMs to ensure the long-term participation of MMs.

\subsection{Relayers}\label{sec:Relayers}

A fundamental requirement for transaction submission in blockchains is the payment of some transaction fee to simultaneously incentivise block producers to include the transaction, and to prevent denial-of-service/spamming attacks. However, in both the UTXO- and account-based models, this allows for the linking of player transactions, balances, and their associated transaction patterns. With respect to DEX protocols, if clients are required to deposit money into a UTXO/account before initiating a trade, any other player in the system can infer who the client is, what balances the client owns, what transactions the client usually performs, etc., and use this information to give the client a worse price. 

To counteract this, we utilise the concept of \textit{transaction relayers}\footnote{Ox \url{https://0x.org/docs/guides/v3-specification}, Open Gas Station Network \url{https://docs.opengsn.org/}, Rockside \url{https://rockside.io/}, Biconomy \url{https://www.biconomy.io/}}. In the smart-contract encoding of $\protocolName$ (App. \ref{app:ProtocolEncoding}), clients must publicly register to a smart contract, and in doing so, deposit some escrow. In addition to this escrow, we also require the clients to deposit a relayer fee. When the client wishes to submit a transaction anonymously to the blockchain, the client publishes a proof of membership in the set of registered clients to the relayer mempool, as well as the desired transaction and a signature of knowledge cryptographically binding the membership proof to the transaction, preventing tampering. As the relayer can verify the proof of membership, the relayer can also be sure that if the transaction is sent to the $\protocolName$ contract, the relayer will receive the corresponding fee. With this in mind, a relayer observing the client transaction includes it in a normal blockchain transaction, with the first relayer to include the transaction receiving the fee. As such, relayers are a straightforward extension of the standard transaction-submission model. Furthermore, if the proof of membership is NIZK and the message is broadcast anonymously (using the onion routing (Tor) protocol\footnote{\url{https://www.torproject.org/}} for example), the relayer can only infer that the player sending the transaction is a member of the set of clients.

\section{Background on Zero-Knowledge Primitives}\label{app:ZK}

Proving membership has been traditionally solved using cryptographic accumulators \cite{BenalohDeMare93}, where the prover $P$ computes a value (the accumulator) and, based on it, a set of short membership proofs that the verifier $V$ can easily verify. Three are the approaches to construct set membership proofs: Merkle trees \cite{Merkle87}, RSA accumulators \cite{BaricPfitzman97,BBF19}, and pairing-based accumulators \cite{Ngu05,Zhangetal17}.

Each approach has its own benefits for public parameters, accumulator or witness size or need of trusted setup. The exact choice depends on the resource constraints of the system. We direct interested readers to \cite{ZKProofsSetMembershipBenarroch} for a nice review of the main features of each of the approaches. 

When the prover $P$ does not want to reveal the value of $x$, the membership proof should not leak any information on the value of $x$. At a high level, the general approach is to guarantee privacy using zero knowledge proofs. Zero knowledge proofs \cite{seminalZKP85} are powerful cryptographic primitives that allow  a prover $P$  to prove knowledge of the truth of some statement  without revealing the statement contents, to some honest  verifier $V$ who needs to be convinced of the truth of the statement provided by the prover. 
Special mention for its applicability should be made of zero-knowledge proofs that are also non-interactive, that is,  proofs that only depend on the prover's private information about the statement and publicly available information\footnote{This public information can come in many forms, but in \cite{ZerocoinGreen,ZCash,PairingBasedNIZKsGroth,ZKProofsSetMembershipBenarroch}, it must be generated honestly in a process known as a \textit{trusted setup}. If a prover knows the private information used to generate public proof parameters, the knowledge extraction property cannot exist.}. As such, proofs do not depend on interaction with the verifier.  The main features in a NIZK argument are completeness,  soundness, and zero-knowledge. Completeness guarantees that if the statement is true, the prover behaving honestly can convince the verifier that the statement is true, while soundness ensures that a dishonest prover cannot convince an honest verifier. Zero-knowledge maintains that the only information learned by the verifier is that the statement is true. However, in practice it is required that the prover knows a witness for the statement, that is, a zero-knowledge proof of knowledge. In this case, soundness is not enough and is required that a prover cannot produce a valid proof unless she knows a witness, even if the prover has seen an arbitrary number of simulated proofs. This is what is known, as simulation intractability. 
 Furthermore, NIZK arguments are interesting for constructing other cryptographic primitives, such as signature of knowledge (SoK) \cite{SeminalSoK06}. 
 
In literature, there are several constructions  that add the privacy layer using zero knowledge proofs for set membership based on RSA Accumulators or Merkle Trees are \cite{CamenischLysyanskaya02,ZerocoinGreen,ZCash}.  In these works the prover proves statements about values that are committed, that is, they follow what is known as a commit-and-prove zero knowledge proof. More recent approaches  propose new commit-and-prove for set-membership based on SNARKs \cite{Geppeto15,ZKProofsSetMembershipBenarroch} or Bulletproofs \cite{PriBank22}. 

\section{Proofs}\label{app:proofs}

\subsubsection*{Lemma 1.}\textit{ For an MM in a $\ourAuctionName$ between $\tokenA$ and $\tokenB$ with MIFP equal to $\MIFPAB=\frac{\MIFPB}{\MIFPA}$ and a client order of notional $\notional>0$, she strictly maximizes her expected utility by showing a market with reference price $\refPrice= \MIFPAB$ for any fixed width $\width\geq1$.}

\begin{proof}
    Let us define the market in terms of $\refPrice$ and $\width$ as described in Section \ref{sec:Prelims}, namely, $\frac{\refPrice}{\sqrt{\width}} \ @ \ \sqrt{\width} \refPrice$. In the cases of a client buyer and client seller of the swap, we convert client notional orders into the tokens being sold, mark the trades to their respective MIFPs using a multiplicative market-impact coefficient for the token swap of $\postTradeImpact$, then reconvert the tokens into notional. 
    
    If the client is a buyer of the swap, the client is selling $\tokenA$, with trade size of $\frac{\notional}{\MIFPA}$ in $\tokenA$. The trade occurs on the token swap offer of $\sqrt{\width}\refPrice$, resulting in the sale of $\frac{\notional}{\MIFPA}\frac{1}{\sqrt{\width}\refPrice}$ token $\tokenB$s by the MM. Finally, the $\tokenB$s bought by the client have an expected per-token value of $\sqrt{\postTradeImpact}\MIFPB$, while the notional acquired by the MM ($\tokenA$s) has an expected value of $\frac{\notional}{\sqrt{\postTradeImpact}}$. This is an expected net profit for the MM measured in notional of:
    \begin{equation}\label{eq:buyerProfit}
        \frac{\notional}{\sqrt{\postTradeImpact}}-\frac{\notional}{\MIFPA \sqrt{\width}\refPrice}\sqrt{\postTradeImpact}\MIFPB =  \frac{\notional}{\sqrt{\postTradeImpact}}-\frac{\sqrt{\postTradeImpact}\notional}{\sqrt{\width}} \frac{\MIFPAB}{\refPrice}.
    \end{equation}

    If the client is a seller of the swap, the client is selling $\tokenB$, with trade size of $\frac{\notional}{\MIFPB}$ in $\tokenB$. The trade occurs on the token swap bid of $\frac{\refPrice}{\sqrt{\width}}$, resulting in the sale of $\frac{\notional}{\MIFPB} \frac{\refPrice}{\sqrt{\width}}$ token $\tokenA$s by the MM. Finally, the $\tokenA$s bought by the client have an expected per-token value of $\sqrt{\postTradeImpact}\MIFPA$, while again, the notional acquired by the MM ($\tokenB$s) has an expected value of $\frac{\notional}{\sqrt{\postTradeImpact}}$  This is an expected net profit for the MM of:
    \begin{equation}\label{eq:sellerProfit}
        \frac{\notional}{\sqrt{\postTradeImpact}}-\frac{\sqrt{\postTradeImpact}\notional}{\sqrt{\width}}\frac{\refPrice}{\MIFPAB}.
    \end{equation}

    We know the expected buying and selling of $\notional$ notional at $\MIFPAB$ are equally likely by the definition of an MIFP as a perfectly-informed signal. Therefore, the total expected profit is:
    \begin{equation}\label{eq:fee}
        \notional\Big(\frac{1}{2}(\frac{1}{\sqrt{\postTradeImpact}}-\frac{\sqrt{\postTradeImpact}}{\sqrt{\width}} \frac{\MIFPAB}{\refPrice}) +\frac{1}{2}(\frac{1}{\sqrt{\postTradeImpact}}-\frac{\sqrt{\postTradeImpact}}{\sqrt{\width}}\frac{\refPrice}{\MIFPAB})\Big).
    \end{equation}
    
    To find the maximum with respect to $\refPrice$, we take the first derivative of this formula, and let it equal to 0:
    \begin{equation}
        \frac{\MIFPAB}{\refPrice^2} -\frac{1}{ \MIFPAB}=0.
    \end{equation}
    Solving for $\refPrice$ gives $\refPrice=\MIFPAB$, which is equivalent to the MM strictly maximizing her expected profits by letting $\refPrice=\MIFPAB$.
\end{proof}

\begingroup
\def\thetheorem{\ref{thm:ourFBA}}
\begin{theorem}
    For a $\ourAuctionName$, the strict Nash equilibria strategies given the number of non-cooperative MMs submitting markets being $ \numMMs$ are:
    \begin{itemize}
        \item $\numMMs=1$: Clients submit market orders of requested width $\fee$ and the MM shows a market of width at most $ \fee$ with reference price equal to the MIFP.
        \item $\numMMs\geq 2$: Clients submit market orders of requested width greater than 1 and MMs show a market of width 1 with reference price equal to the MIFP.
    \end{itemize}
\end{theorem}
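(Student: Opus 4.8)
The plan is to handle the two cases separately, in each first fixing the reference price through Lemma~\ref{lem:WTTMMstrat} and only then reasoning about the width. Lemma~\ref{lem:WTTMMstrat} already shows that, conditional on trading on its bid or offer, an MM strictly prefers $\refPrice=\MIFPAB$ for \emph{any} fixed $\width$; since the symmetry of client direction at the MIFP is untouched by the number of MMs or by the widths clients request, this collapses each case to a one-dimensional problem in the equilibrium width (together with, for $\numMMs\ge2$, which MM is selected). Throughout I would evaluate the per-order profit \eqref{eq:fee} at $\refPrice=\MIFPAB$, where it reduces to $\notional\big(\tfrac{1}{\sqrt{\postTradeImpact}}-\tfrac{\sqrt{\postTradeImpact}}{\sqrt{\width}}\big)$, a quantity strictly increasing in $\width$ and positive exactly when $\width>\postTradeImpact^{2}$ (which for $\postTradeImpact=1$ recovers the MM's defining property of profiting at every $\width>1$).

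For $\numMMs\ge2$ I would reduce the $\ourAuctionName$ to the $\auctionName$ of Definition~\ref{def:FBAAuction} and transfer the strict equilibrium of Theorem~\ref{thm:FBA} directly. The key observation is that when every MM shows a width-$1$ market centred at $\MIFPAB$ in notional at least $\maxAuctionNotional\ge|\netTradeSize|$, the requested-width test is slack for every client order of requested width $>1$, the notional-maximising clearing price is pinned to $\MIFPAB$, and the imbalance-minimising tie-break is vacuous; hence the $\ourAuctionName$ outcome coincides with the $\auctionName$ outcome order-by-order. It then remains only to check that the three structural differences introduced in Definition~\ref{def:WSFBAuction} — requested widths, the minimum-notional requirement, and the imbalance-minimising refinement of the clearing price — create no new profitable deviation; since each is slack or payoff-irrelevant at the candidate profile, the strictness asserted by Theorem~\ref{thm:FBA} carries over, with the reference-price component of each MM's optimality supplied by Lemma~\ref{lem:WTTMMstrat} and the client's incentive to request width $>1$ following from the strictly positive trade utility in the client definition.

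The substantive case is $\numMMs=1$, which is where I expect the real difficulty to lie. Because the monopolist's profit is strictly increasing in $\width$, it wants the widest market that still executes. Against a population of clients all requesting width $\fee$, the order executes iff the shown width is at most $\fee$, so the monopolist's strict best response is the largest admissible width, namely $\fee$ (at most $\fee$ once an explicit fee is levied against the client's surplus), provided $\fee>\postTradeImpact^{2}$ so that \eqref{eq:fee} is positive: showing more loses the order outright (profit $0$) and showing less lowers profit by monotonicity, so the MM side is strict. The client side is the delicate one, and it is complicated by the fact that clients cannot observe $\MIFPAB$, so a client's only instrument is its requested width. Holding the monopolist's committed width $\fee$ fixed, a unilateral deviation to a requested width below $\fee$ makes the order fail the requested-width test and forfeit the strictly positive utility guaranteed by the client definition, which gives strictness on the under-requesting side.

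The one point that needs genuine care — and the main obstacle — is strictness against \emph{over}-requesting: against a monopolist deterministically committed to width $\fee$, any request $\ge\fee$ yields the same execution and price, so naive reasoning gives only weak optimality here. I would close this by arguing that a request exceeding $\fee$ is the unique client action that \emph{admits} execution against a market wider than $\fee$, an outcome the price-blind client cannot rule out and which carries strictly negative utility; requesting exactly the reservation width $\fee$ is then the unique choice that both secures the equilibrium trade and caps the client's downside at non-negative utility. Making this final argument fully rigorous — via the appropriate belief or tie-break refinement over off-equilibrium MM widths — is where I expect most of the work to go.
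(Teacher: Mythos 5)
Your plan follows essentially the same route as the paper's proof: Lemma~\ref{lem:WTTMMstrat} pins $\refPrice=\MIFP$, the profit formula~\eqref{eq:fee} evaluated at $\refPrice=\MIFPAB$ is strictly increasing in $\width$ so the monopolist pushes to the clients' reservation width $\fee$, and the $\numMMs\geq 2$ case is discharged by reducing to Theorem~\ref{thm:FBA} after checking that the $\ourAuctionName$-specific modifications are slack at the candidate profile. Your resolution of the over-requesting issue is also the paper's: it argues that any client strategy admitting a trade at a price outside $[\MIFP/\sqrt{\fee},\,\sqrt{\fee}\MIFP]$ is strictly dominated by not trading, which is exactly your ``a request exceeding $\fee$ is the unique action that admits execution against a wider market'' argument; you are right that, read as a pure best-response check against the fixed equilibrium MM width, this only gives weak optimality, and your proposed dominance/belief framing is the honest way to read what the paper actually does.

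The one substantive omission in your $\numMMs=1$ analysis is the client's limit-order deviation. You assert that ``a client's only instrument is its requested width,'' but the strategy space in Definition~\ref{def:WSFBAuction} includes limit orders, and the theorem's claim is specifically that clients submit \emph{market} orders; strictness requires showing that switching to a limit order at any price strictly lowers utility. The paper handles this by noting that a limit buy below $\sqrt{\fee}\MIFP$ (resp.\ sell above $\MIFP/\sqrt{\fee}$) is dominated by a higher (resp.\ lower) limit, and that given the MM shows $\refPrice=\MIFP$ in size $\maxAuctionNotional\geq|\netTradeSize|$, a market order strictly increases the probability of execution (and the allocation under pro-rating) over any limit order. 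You would need to add this step. A second, smaller point: you claim the monopolist's strict best response is width exactly $\fee$, whereas the paper only concludes ``at most $\fee$'' because, against a multi-client imbalance, the notional-maximising clearing price can make a strictly tighter market optimal; your parenthetical attributes the ``at most'' to explicit fees, which is not the paper's reason.
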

\endgroup

\begin{proof}
    We now investigate each of the cases described in the theorem statement in terms of the number of non-cooperative MMs $\numMMs$.

    \textbf{$\numMMs =1$}: Consider first the strategy of a client. Although clients are not necessarily aware of the MIFP, let us consider their strategies taking the MIFP $\tokenPrice$ as a variable with an arbitrary distribution. For buy orders, the strategy of submitting a limit order with price $\tokenPrice$ less than $\sqrt{\fee}\MIFP$ is dominated by all prices greater than $\tokenPrice$ and less than or equal to $\sqrt{\fee}\MIFP$. For limit sell orders, this limit is $\frac{\MIFP}{\sqrt{\fee}}$. As such, the equilibrium for clients involves submitting orders equivalent to a market of width equivalent to at least $\fee$ with reference price equal to the MIFP. If a client knows a MM submits a market of width less than or equal to $\fee$ with reference price equal to the MIFP, this strategy is further dominated by submitting a market order with requested width $\fee$, as market orders strictly increase the client's probability of trading. Furthermore, any strategy for a client which involves trading on a price outside $[\frac{\MIFP}{\sqrt{\fee}},\sqrt{\fee}\MIFP]$ is strictly dominated by not trading. As such, the only possibilities for equilibria can occur on a market of $\frac{\MIFP}{\sqrt{\fee}} @ \sqrt{\fee}\MIFP$. Furthermore, the submission of market orders (increasing probability of trading) with requested width $\fee$ is strictly dominant if the MM shows a market with reference price equal to the MIFP in sufficient size to fill all clients' orders. As $\maxAuctionNotional>|\netTradeSize|$, this would be the case given the appropriate reference price. In Lemma \ref{lem:WTTMMstrat}, we have seen that a MM trading on a market against a random client shows a market with reference price equal to the MIFP. Therefore clients submit market orders with requested width $\fee$. Moreover, this strategy does not require the client to know the MIFP..

    Consider now the MM strategy. As only the tightest market in every auction is included for settlement, the MM only submits one market. Any MM order bidding above/offering below the MIFP has negative expected utility, and as such, no rational MM does this this. Also, by definition, the MM must show a market in size $\maxAuctionNotional\geq|\netTradeSize|$, meaning the MM has sufficient notional on the bid and offer to trade all client orders and as such the clearing price must be inside the provided MM market.
    Next, we have seen in Lemma \ref{lem:WTTMMstrat} that a MM trading on a market against a random client shows a market with reference price equal to the MIFP. Furthermore, from Equation \ref{eq:fee} we can see that the expected utility of a MM is strictly increasing in width. Any strategy involving a market with width greater than $\fee$ cannot be an equilibrium as clients strictly prefer to trade on markets of lesser width, as argued above. Therefore, the MM maximises her expectancy against a random client by showing a market of width $\fee$ with reference price equal to the MIFP. Against multiple clients, a positive notional imbalance at the MIFP is decreasing in price (resp. a negative notional imbalance is increasing as price decreases), which may cause the MM to provide a market of width less then $\fee$. 
    
    Consider the strategy of a MM providing a market of width less than or equal to $\fee$ with reference price equal to the MIFP, and the strategy of clients submitting market orders with requested width $\fee$. We have shown that any player deviation strictly decreases that player's expectancy, making this a strict Nash Equilibrium.

   \textbf{ $\numMMs\geq2$:} 
   As MMs in a standard FBA provide markets of width 1 when the width is not a restriction, applying the requested width adjustments of a $\ourAuctionName$, further incentivising tighter markets, does not change the unique equilibrium of Theorem \ref{thm:FBA}. Similarly, as there is a unique clearing price when a width-1 market is submitted, it must also minimise the imbalance over prices that maximise total notional traded. 
   The restriction on the notional of markets in a $\ourAuctionName$ is in line with the inequality of Theorem \ref{thm:FBA}. Furthermore, any requested width $>1$ in this equilibrium ensures a client's order trading through the MIFP is included in the final auction settlement, with the maximum allocation occurring when a client submits a market order. 
\end{proof}

\begin{lemma}\label{lem:ResolutionGetsRun}
    At least one player runs Resolution() in every round.
\end{lemma}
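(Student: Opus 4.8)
The plan is to show that the all-abstain strategy profile, in which no player submits a Resolution() transaction during a given round, is incompatible with rational play, so some player must run it. The lever is the resolution bounty $\resBounty$, which the protocol awards to the first player to submit a valid Resolution() instance in a round. I would first recall (or, if needed, fix as a protocol parameter) that $\resBounty$ is funded from the round's deposits and set to strictly exceed the gas/transaction cost $\feeTx$ of executing Resolution(). Consequently the net payoff of being the first resolver is $\resBounty - \feeTx > 0$, and crucially this holds for \emph{any} player, whether or not she holds an order in the auction.

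The core step is to rule out universal abstention by a backward/last-moment argument that sidesteps the volunteer's-dilemma reading of the situation. By the liveness assumption (System Model, item~3), a transaction submitted while observing height $\height$ is confirmed by height $\height + \revealTXDelay$ provided it stays valid; since the Resolution phase spans $\revealTXDelay$ blocks, there is a well-defined latest block within the phase at which a Resolution() transaction can still be confirmed before the phase ends. I would then argue that if no valid Resolution() has been submitted by that block, any single player $\player_i$ has a strictly profitable unilateral deviation: submitting one Resolution() transaction moves her from the zero payoff of the null action to $\resBounty - \feeTx > 0$. Hence the profile in which every player abstains up to (and at) the final admissible block is not a Nash equilibrium, and in particular is not rationalizable. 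Because the reward is collected by whoever moves first and strictly dominates abstaining, this reasoning does not leave room for a stable ``everyone waits'' outcome.

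Combining the two steps yields the claim: the deviation that resolves the round is both executable (the liveness bound places its confirmation inside the phase) and strictly beneficial (the bounty exceeds its cost), so under rational play at least one player runs Resolution() in every round. I would phrase the conclusion as: in any outcome consistent with rationality, the bounty $\resBounty$ is claimed, i.e., a valid Resolution() instance is confirmed each round.

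The main obstacle I anticipate is exactly the free-rider / volunteer's-dilemma flavour: every auction participant would prefer that some \emph{other} player pay $\feeTx$ to trigger settlement, which naively suggests an equilibrium of mutual waiting. The resolution is the strictly-positive first-mover reward above, which converts the situation from a volunteer's dilemma into a race for a fixed prize exceeding its cost, where abstention by all is never rational. A secondary subtlety is censorship by block producers who are themselves auction participants and might wish to suppress resolution; I would absorb this into the liveness guarantee by invoking the $\revealTXDelay$ inflation factor $\frac{1}{1-\alpha}$ described in the System Model, which preserves the confirmation bound used in the executability step.
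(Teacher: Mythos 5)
Your overall strategy is the same as the paper's: the bounty $\resBounty$ makes running Resolution() strictly positive-expectancy for someone, so universal abstention cannot survive rational play. The extra machinery you add (the explicit requirement $\resBounty>\feeTx$, the last-admissible-block argument via the liveness bound, the $\frac{1}{1-\alpha}$ censorship adjustment) is all reasonable and in the spirit of the paper's remarks in the appendix on transaction fees, even though the paper's own proof leaves these points implicit.

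The one genuine gap is the sentence claiming that the net payoff of being the first resolver is $\resBounty-\feeTx>0$ for \emph{any} player, ``whether or not she holds an order in the auction.'' For a player who revealed an order or a market, the unilateral deviation from the all-abstain profile does not move her from a zero payoff to $\resBounty-\feeTx$: it also toggles whether settlement occurs, and the utility consequence of settlement on her own position enters the deviation payoff. You never argue that this term is non-negative, and a priori a participant could prefer non-settlement (e.g.\ if the MIFP has moved against the price she committed to). This is exactly the point the paper's proof is built around: it splits into the case where some player has no commitment in the round (and is therefore genuinely indifferent to settlement, so the bounty alone decides), and the case where every player committed (where it argues that at least one side of the trade receives tokens with at worst zero utility, so bounty plus settlement effect is still positive for at least one player). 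Your argument would be repaired either by reinstating that case split, or by observing that in the given encoding non-resolution leaves revealed tokens locked in the contract, so triggering settlement is weakly beneficial for every participant; as written, though, the key inequality is asserted rather than established for players with a stake in the auction.
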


\begin{proof}
    Consider a Resolution phase where at least one player has not called a CommitClient() or CommitMM() instance in the preceding Commit phase. This player is indifferent to the settlement of orders, and as such the only payoff for that player by running Resolution() correctly is the receipt of $\resBounty \in \mathbb{R}^+$. 
    
    Consider instead the case where all players in the system called at least one instance of CommitClient() and/or CommitMM() in the preceding Commit phase. In this case, all players have an additional payoff for receipt of the tokens currently locked in the protocol. As at least one of the buyers or sellers of the swap must receive a non-zero amount of tokens, the receipt of which having at worst 0 utility. This, in addition to the receipt of $\resBounty$ makes the calling of Resolution() positive expectancy for at least one player in the system. 
\end{proof}

\begin{lemma}\label{lem:SINCEClientsGivenRegister}
    A rational client who correctly runs an instance of Register() also runs correct corresponding instances of CommitClient() and RevealClient().
\end{lemma}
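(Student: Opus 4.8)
The plan is to argue by backward induction over the three consecutive phases of a single $\protocolName$ round, taking as given that the client has correctly run Register() and therefore holds an unused serial number $\commSerialNum$, its randomness $\commRandomness$, and a deposited escrow $\escrowClient$ whose return, per the algorithm descriptions, is triggered by a RevealClient() instance matching the corresponding CommitClient() instance. I would first establish the Reveal-phase behaviour, then propagate it back to the Commit phase.

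For the Reveal phase, fix a client who has already run CommitClient(), publicly committing to an order and consuming $\commSerialNum$. By the collision-resistance of $\commit$, any order accepted by RevealClient() must coincide with the committed order, so the client cannot swap in a more favourable order at reveal time; her only choice is to reveal or to abort. If she aborts, then by Lemma \ref{lem:ResolutionGetsRun} some player runs Resolution() in this round, which clears $\clientCommits$ and $\revealedOrders$ and thereby destroys $\escrowClient$. If instead she runs RevealClient(), the escrow is returned and her market order is settled (or, if unmatched, her tokens are returned intact). Since $\calcEscrow$ is chosen to exceed the largest adverse outcome the committed order can incur over a reveal window of at most $\revealTXDelay$ blocks (System Model, item~3), revealing strictly dominates aborting, so a committed client reveals.

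For the Commit phase, I would use the set-membership guarantees: by the Double-spending and Linkability properties the registered client can generate the ZK proof and fresh serial number that CommitClient() requires, so a correct CommitClient() instance is always available to her. Committing and then revealing (by the previous step) is the route that both recovers the sunk escrow and realises any available trade: recall a client has minimum utility $\fee>1$ for exchanging tokens, and by the $\ourAuctionName$ settlement rules a market order of requested width $\fee$ is either left unmatched, with its tokens returned, or filled within width $\fee$ of the MIFP, hence at non-negative expected utility. Any strategy that omits CommitClient() forgoes this trade utility, so committing strictly dominates, and composing the two steps yields the lemma.

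The step I expect to be the main obstacle is ruling out a profitable abort in the Reveal phase, since this optionality is precisely what undermines naive commit-reveal exchanges. Making it rigorous requires pinning down that $\calcEscrow$ is a valid upper bound on the value of the abort option — the worst-case adverse move of the MIFP, controlled through the market-impact coefficient $\postTradeImpact$ and the bounded reveal window — so that the forfeited escrow strictly outweighs any gain from withholding the reveal. Once this bound is in place, the remaining phase transitions reduce to routine strict-dominance comparisons.
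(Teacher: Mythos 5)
Your proposal follows essentially the same route as the paper's proof: the deposited escrow can only be recovered by running CommitClient() followed by RevealClient(), Resolution() (guaranteed to run by Lemma~\ref{lem:ResolutionGetsRun}) destroys the escrow of any commitment that is not revealed, and the escrow is chosen large enough that forfeiting it is strictly worse than any loss the commit-and-reveal pair can incur. The one step you explicitly leave open --- verifying that the escrow really does upper-bound the value of the abort option --- is closed in the paper far more simply than your sketch anticipates: CommitClient() and RevealClient() both enforce $\tokenAmount \le \escrowClient/\MIFP$, so the committed order's notional is capped by $\escrowClient$, and even in the worst conceivable settlement (the tokens received in exchange become worthless, i.e.\ the price goes to $0$) the client loses at most the order's notional, hence at most the escrow. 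No analysis of the market-impact coefficient $\postTradeImpact$ or of the reveal window $\revealTXDelay$ is required; the bound is purely by construction of the size checks. Two smaller remarks: your commit-phase appeal to the minimum client utility $\fee$ duplicates the content of Lemma~\ref{lem:Clientparticipation} (which justifies registering in the first place) --- for the present lemma the escrow-recovery argument alone suffices, since a registered client who never commits simply leaves $\escrowClient$ permanently locked; and the symbol $\calcEscrow$ you use denotes the paper's escrow-computation function rather than the client escrow amount $\escrowClient$, which is the quantity your argument actually needs.
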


\begin{proof}
    By correctly running Register(), a player deposits $\escrowClient$. The only way to receive $\escrowClient$ back is to correctly run a RevealClient() instance, which itself can only be run after having run a CommitClient() instance in the previous phase. By construction, $\escrowClient$ is greater than any incurrable losses by running CommitClient() and RevealClient(), with maximal losses occurring where the client's order is settled for tokens that have 0 notional (price goes to 0). As the client's initial deposit had notional value strictly less than $\escrowClient$, so must the incurred loss. The result follows.
\end{proof}

\begin{lemma}\label{lem:SINCEMMsGivenCommit}
    A rational MM who correctly runs an instance of CommitMM() also runs a correct instance of RevealMM() in the proceeding phase.
\end{lemma}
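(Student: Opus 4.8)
The plan is to reproduce the escrow-forfeiture argument used for clients in Lemma~\ref{lem:SINCEClientsGivenRegister}, adapted to the two-sided, directionally-neutral market an MM commits. The starting observation is that running CommitMM() locks the deposit $\escrowMM$, and the \emph{only} way to recover it is to run a correct RevealMM() instance in the proceeding phase. Hence at the Reveal phase the MM faces a binary choice for her committed market: reveal (recovering $\escrowMM$ and exposing the committed $\bid \ @ \ \offer$ to settlement at the clearing price) or abstain (forfeiting $\escrowMM$ and trading nothing). First I would fix an arbitrary realisation of the post-commit MIFP and compare the two payoffs state by state, aiming to show revealing strictly dominates abstaining.

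The key quantity to control is the worst-case settlement loss the MM can suffer once her market is matched. I would bound this by noting that the committed notional is finite (at least $\maxAuctionNotional<\infty$, so the maximum fill she can be forced into is likewise bounded), and that by construction $\escrowMM$ is computed by the escrow-sizing routine $\calcEscrow$ to strictly exceed the maximal adverse settlement P\&L realisable over the single-phase window, whose length is capped at $\revealTXDelay$ blocks by the system model of Section~\ref{sec:ThreatModel}. Concretely I would identify the extremal adverse states --- the MM forced to buy at the committed bid when the post-reveal MIFP collapses toward $0$, and the MM forced to sell at the committed offer when it spikes --- and argue each resulting loss is dominated by $\escrowMM$.

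With the bound in hand the conclusion is immediate: revealing yields the (possibly negative) settlement P\&L, which is at least $-(\text{max loss})$, whereas abstaining yields exactly $-\escrowMM$. Since $\escrowMM$ exceeds the maximum loss, $-\escrowMM < -(\text{max loss}) \le \text{settlement P\&L}$, so revealing is strictly preferred in every state; because the strict inequality holds state-by-state, it is preserved under averaging over the MIFP distribution, and a rational MM therefore always reveals.

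The main obstacle is the middle step: justifying that the adverse-selection loss is genuinely bounded and fully covered by $\escrowMM$. Unlike the client case --- where the maximal loss is simply the committed notional, trivially below the escrow --- the MM's exposure to a moving MIFP is two-sided and, on the side where the traded token can appreciate, is not \apriori\ bounded by the committed notional alone. The crux is thus to lean on $\calcEscrow$ together with the bounded confirmation delay $\revealTXDelay$ to certify that the worst single-phase price excursion is accounted for in the escrow, so that forfeiting $\escrowMM$ is strictly worse than revealing regardless of how the MIFP moves between commit and reveal.
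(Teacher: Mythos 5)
Your skeleton is the paper's: the only way to recover $\escrowMM$ is to run a correct RevealMM(), so a rational MM reveals whenever the forfeited escrow strictly exceeds the worst settlement loss that revealing can cause. Where you diverge is exactly the step you flag as the crux. You propose to certify the loss bound via an escrow-sizing routine $\calcEscrow$ and a bound on how far the MIFP can move over the $\revealTXDelay$-block phase window. Neither is available: $\calcEscrow$ is never specified or used in the paper, and the system model places no bound whatsoever on the MIFP excursion within a phase, so ``the worst single-phase price excursion is accounted for in the escrow'' cannot be certified from the stated assumptions. The paper closes the bound by a purely notional argument instead: Setup() fixes $\escrowMM = c\cdot\maxAuctionNotional$ with $c>1$, the client notional the MM can be filled against is capped by $\maxAuctionNotional$ (the bound on $|\netTradeSize|$, enforced at commit time by the auction-notional cap), and the worst case is that the tokens the MM receives in settlement become worthless --- a loss of at most the traded notional, hence at most $\maxAuctionNotional < \escrowMM$. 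No time-dependent price bound is needed.

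The asymmetry you identify --- that when the token the MM \emph{gives up} appreciates, the mark-to-market loss in $\bitcoin$ is not bounded by the committed notional --- is a genuine observation, and the paper's one-line proof does not address it either (its client-side analogue, Lemma~\ref{lem:SINCEClientsGivenRegister}, likewise only considers the received tokens going to zero). So you have correctly located the soft spot, but your proposed repair rests on machinery the paper does not provide, and without it your claimed state-by-state strict dominance fails in the appreciation states. To match the paper's argument you should drop the $\revealTXDelay$-window price bound and instead bound the loss by the notional at risk using the $\maxAuctionNotional$ cap together with $c>1$.
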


\begin{proof}
    By correctly running CommitMM(), a player deposits $\escrowMM$. The only way to receive $\escrowMM$ back is to correctly run a RevealMM() instance in the proceeding. By definition, as the MM's bid and offer has notional value greater than or equal to the total notional in the auction, $\maxAuctionNotional$, so must the incurred loss of not revealing a market. Therefore, players running CommitMM() always correctly run RevealMM().
\end{proof}

\begin{lemma}\label{lem:Clientparticipation}
    Consider an instance of $\protocolName$ between $\tokenA$ and $\tokenB$. A rational client $\playeri$ with $\balanceTokens_i(\bitcoin)>\escrowClient$ runs an instance of Register().
\end{lemma}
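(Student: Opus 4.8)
The plan is to show that registering strictly dominates abstaining by comparing the client's expected utility in the two cases, under the assumption (standard when verifying a strict Nash equilibrium) that every other player already follows their equilibrium strategy. A client who does not run Register() has no way to anonymously commit an order and therefore obtains utility $0$ from the auction. I would then argue that a client who does run Register() obtains \emph{strictly} positive expected utility, so that registering is the unique best response.

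First I would invoke the balance hypothesis $\balanceTokens_i(\bitcoin)>\escrowClient$ to guarantee that depositing the escrow in Register() is feasible. Next, by Lemma \ref{lem:SINCEClientsGivenRegister}, a rational client who registers will also correctly run the corresponding CommitClient() and RevealClient() instances; in particular the deposited escrow $\escrowClient$ is returned in full on a correct reveal, so registration carries no net escrow cost. By Lemma \ref{lem:ResolutionGetsRun} some player runs Resolution() in the round, so the auction is actually settled and the committed market order is processed rather than stranded.

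The core of the argument is to pin down the price at which the client trades. Assuming the MMs play their equilibrium strategies, Theorem \ref{thm:ourFBA} (together with Lemma \ref{lem:WTTMMstrat}) guarantees that the tightest settled market has reference price equal to the MIFP $\MIFPAB$ and width at most $\fee$. Because the client commits a market order of requested width $\fee$, its requested width is at least the tightest MM width, so the order is admitted to settlement; and because every MM shows notional at least $\maxAuctionNotional\geq|\netTradeSize|$, there is sufficient liquidity to fill it. Hence the client trades at a clearing price lying in $[\MIFPAB/\sqrt{\fee},\,\sqrt{\fee}\,\MIFPAB]$. By the definition of a client, a buyer trading at or below $\sqrt{\fee}\,\MIFPAB$ (resp. a seller trading at or above $\MIFPAB/\sqrt{\fee}$) has strictly positive expected utility. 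Combining this with the zero net escrow cost, the expected utility of registering is strictly positive, exceeding the utility $0$ of abstaining, which yields the claim.

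The main obstacle I anticipate is the price guarantee in the previous paragraph: it is not internal to the Register() step but depends on downstream MM behaviour, so I must frame the whole argument as a best-response computation in which the MMs are already playing the $\ourAuctionName$ equilibrium of Theorem \ref{thm:ourFBA}, thereby avoiding any circularity with the eventual Theorem \ref{thm:FairTraDEX}. A secondary point to handle cleanly is that, should an MM deviate to a market of width exceeding $\fee$, the requested-width mechanism of a $\ourAuctionName$ leaves the client's order unsettled rather than filled at an unfavourable price, so even off the equilibrium path registering never yields negative utility; this makes the strict inequality robust.
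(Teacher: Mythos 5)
Your proposal is correct in its conclusion and in its overall logical skeleton (feasibility from the balance hypothesis, then Lemma \ref{lem:SINCEClientsGivenRegister} and Lemma \ref{lem:ResolutionGetsRun} to guarantee the order is committed, revealed and resolved), but it reaches the final utility comparison by a genuinely different route. You condition on the MMs playing the equilibrium of Theorem \ref{thm:ourFBA} and use Lemma \ref{lem:WTTMMstrat} to pin the clearing price inside $[\MIFPAB/\sqrt{\fee},\,\sqrt{\fee}\,\MIFPAB]$, which buys you a \emph{strictly} positive expected utility for registering and a clean best-response reading. The paper instead makes no assumption whatsoever about MM behaviour: it observes that the client itself chooses the restrictions on the order generated in CommitClient(), and can therefore pick terms (buy below the MIFP, sell above it) under which any executed trade has positive utility, while a non-executed order returns the tokens and leaves utility at $0$. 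That dominance-style argument is weaker on strictness but is robust off the equilibrium path and avoids any appearance of circularity with Theorem \ref{thm:FairTraDEX}, which matters because this lemma feeds into Corollary \ref{cor:SINCE} before the equilibrium theorem is proved.

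One caveat on your robustness remark: you claim that an MM deviation is harmless because the requested-width mechanism leaves the client's market order unsettled rather than badly filled. The width filter only screens out markets that are too \emph{wide}; it does not protect a market order against a tight market whose reference price is far from $\MIFPAB$, which would be filled at an unfavourable price. So "registering never yields negative utility off the equilibrium path" is not quite right for the market-order strategy you fix; it is right for the limit-order fallback the paper uses. Since an SNE check only requires comparison against unilateral deviations from the equilibrium profile, this does not break your argument, but you should either drop the off-path claim or switch to the paper's self-protecting limit-order framing for that part.
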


\begin{proof}
    Consider such a client $\playeri$ with minimum client utility $\fee$ to exchange one token for another. To execute the swap, $\playeri$ must first call Register(). Given $\playeri$ calls Register(), we know from Lemma \ref{lem:SINCEClientsGivenRegister} that $\playeri$ also calls CommitClient() and RevealClient(). We know from Lemma \ref{lem:ResolutionGetsRun}, Resolution() will be run in every round, meaning $\playeri$ either trades or the tokens are returned. Given a trade occurs, $\playeri$ realises the utility of trading given the restrictions of the order generated by $\playeri$ in CommitClient(), which can be chosen to be any value with positive utility (buy below the MIFP/sell above the MIFP). If no trade occurs, $\playeri$'s realised utility (of 0) does not change. Therefore, $\playeri$ runs Register().
\end{proof}

\begin{lemma}\label{lem:MMparticipation}
    Consider an instance of $\protocolName$ between $\tokenA$ and $\tokenB$, and at least 1 previously called instance of Register(). Any rational MM $\playeri$ with $\balanceTokens_i(\bitcoin) \geq \escrowMM$ and $\balanceTokens_i(\tokenA), $ $\ \balanceTokens_i(\tokenB)$ $>0$ runs an instance of CommitMM().
\end{lemma}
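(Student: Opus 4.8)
The plan is to mirror the structure used for the client-participation result (Lemma \ref{lem:Clientparticipation}): I would show that committing a market is the only route to the MM's strictly positive trading utility, while the act of committing carries zero downside, so that running CommitMM() strictly dominates abstaining.

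First I would fix the resources. The hypotheses $\balanceTokens_i(\bitcoin) \geq \escrowMM$ and $\balanceTokens_i(\tokenA), \balanceTokens_i(\tokenB) > 0$, together with the defining assumption that an MM holds large supplies of all tokens, guarantee that $\playeri$ can post the escrow and quote a two-sided market of the required notional $\maxAuctionNotional$. By the definition of an MM, $\playeri$ has strictly positive expected utility trading with a client on a market of any width $\width > 1$ with reference price equal to the MIFP; this is exactly the quantity computed in Equation \ref{eq:fee}, which at $\refPrice = \MIFPAB$ is strictly positive for $\width > 1$. Here the MIFP symmetry (a client order of fixed notional is equally likely to buy or sell at the MIFP) is what makes this expectation insensitive to adverse selection, so that the MM captures the spread in expectation rather than being picked off.

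Next I would close the participation loop. Since the MM cannot trade in a round without first committing, I would invoke Lemma \ref{lem:SINCEMMsGivenCommit} to conclude that once $\playeri$ runs CommitMM() it also runs RevealMM(), so the committed market actually enters the auction and the escrow $\escrowMM$ is returned. The hypothesis of at least one prior Register() call, combined with Lemma \ref{lem:SINCEClientsGivenRegister}, ensures that a registered rational client commits an order whenever she wishes to trade, so there is positive probability of a tradeable client counterparty in the round; and Lemma \ref{lem:ResolutionGetsRun} ensures Resolution() is run so the auction settles. The notional bound $\maxAuctionNotional \geq |\netTradeSize|$ then forces the clearing price to lie inside the quoted market, so the MM genuinely trades on its own bid/offer whenever a trade occurs. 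If instead no trade occurs (e.g., the client requests a strictly tighter width, or no client commits), escrow and tokens are returned and the MM's utility is unchanged at $0$.

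Combining these, committing a market yields expected utility that is non-negative and strictly positive with positive probability, whereas abstaining yields $0$; hence $\playeri$ strictly prefers to run CommitMM(). The main obstacle I anticipate is the strict-positivity argument rather than the resource bookkeeping: I must rule out any loss from committing when no trade materialises, which is handled by the escrow-return guarantee of Lemma \ref{lem:SINCEMMsGivenCommit}, and I must justify that the captured spread is strictly positive despite possible adverse selection. This last point is precisely where the MIFP symmetry, the width $\width > 1$ requirement, and the clearing-price-containment consequence of $\maxAuctionNotional \geq |\netTradeSize|$ all enter.
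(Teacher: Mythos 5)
Your proof is essentially correct and follows the same route as the paper's: both arguments rest on the guaranteed existence of a committing, revealing client counterparty (from the prior Register() call together with Lemma \ref{lem:SINCEClientsGivenRegister}), the definitional assumption that an MM has strictly positive expected utility on any width-$\width>1$ market centred at the MIFP, and the observation that committing carries no downside because the escrow is recoverable. You supply some supporting detail the paper leaves implicit (the appeal to Lemma \ref{lem:SINCEMMsGivenCommit} for escrow return, Lemma \ref{lem:ResolutionGetsRun} for settlement, and the clearing-price containment from $\maxAuctionNotional \geq |\netTradeSize|$), all of which is consistent with the paper.

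One small point the paper covers that you do not: your dominance comparison is only against \emph{abstaining}, but to conclude that the MM actually runs CommitMM() one should also rule out the alternative of the MM routing its liquidity through the client channel, i.e.\ via Register() and CommitClient(). The paper dispatches this in one sentence by noting that if the MM's market were submitted as client orders and a client submits a market order with finite requested width, no trade happens (there is then no MM market against which the requested width can be satisfied), so this alternative captures none of the positive utility. Also, be careful with your parenthetical claim that Equation \ref{eq:fee} is strictly positive at $\refPrice=\MIFPAB$ for every $\width>1$: with a market-impact coefficient $\postTradeImpact>1$ that expression is positive only when $\width>\postTradeImpact^2$, so the strict positivity you need really comes from the definitional assumption on MMs (as the paper invokes it), not from the formula itself.
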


\begin{proof}
    Consider such a player $\playeri$. Given Register() was called by some player $\player_j$, $\playeri$ knows $\player_j$ must call CommitClient() and RevealClient(). Furthermore, $\playeri$ knows the total order size is bounded by $\maxAuctionNotional$ (Section \ref{sec:FBAs}). Given MIFP $\MIFPAB$ and the definition of a MM, there is some market $\market \assign (\MMBidPrice @ \MMOfferPrice)$ at which $\playeri$ observes positive utility to trade with $\player_j$. Therefore, $\playeri$ submits $\market$ to the auction. We must now ensure $\playeri$ submits the order by calling CommitMM(). By the calculation of order settlement, if $\playeri$ submits $\market$ through the necessary Register() and CommitClient() calls and $\playeri$ submits a market order with finite requested width, no trade happens. As such, $\playeri$ runs CommitMM().
\end{proof}

With these lemmas in hand, we have it that rational clients and rational MMs correctly execute all algorithms as outlined by $\protocolName$. This can be expressed concisely in the following corollary. In this corollary, and the theorem that follows, we assume clients and MMs satisfy the token-balance requirements as described in Lemmas \ref{lem:Clientparticipation} and \ref{lem:MMparticipation} respectively. 

\begin{corollary}\label{cor:SINCE}
    Rational clients and MMs always follow the $\protocolName$ protocol.
\end{corollary}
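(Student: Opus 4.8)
The plan is to obtain the corollary as a direct composition of the five preceding lemmas, splitting the argument into the client side, the MM side, and the Resolution phase. Since the corollary explicitly inherits the token-balance preconditions of Lemmas \ref{lem:Clientparticipation} and \ref{lem:MMparticipation} from the surrounding text, each participation lemma can be invoked without further justification, and the remaining work is simply to verify that chaining the lemmas together covers every algorithm a rational player is prescribed to run in a round.

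First I would handle the client side. By Lemma \ref{lem:Clientparticipation}, any rational client $\playeri$ with $\balanceTokens_i(\bitcoin)>\escrowClient$ runs Register(). Feeding this conclusion into Lemma \ref{lem:SINCEClientsGivenRegister}, the same client then runs the corresponding CommitClient() and RevealClient() instances. This establishes that a rational client executes the full client-side sequence: Register(), followed by CommitClient(), followed by RevealClient().

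Next I would handle the MM side symmetrically. By Lemma \ref{lem:MMparticipation}, a rational MM meeting the stated balance conditions ($\balanceTokens_i(\bitcoin) \geq \escrowMM$ and positive holdings of $\tokenA$ and $\tokenB$) runs CommitMM(); Lemma \ref{lem:SINCEMMsGivenCommit} then yields that the MM runs the matching RevealMM() in the following phase, completing the MM-side sequence CommitMM() then RevealMM(). Finally, Lemma \ref{lem:ResolutionGetsRun} guarantees that some player runs Resolution() in every round, so the Resolution phase is always executed and orders are settled. Collecting these facts, every algorithm prescribed by $\protocolName$ is run by the appropriate rational player, which is exactly the statement of the corollary.

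There is no substantive obstacle here beyond bookkeeping: the only point requiring care is that the hypotheses align cleanly at each junction — that the ``runs Register()'' output of Lemma \ref{lem:Clientparticipation} is precisely the ``correctly runs an instance of Register()'' input of Lemma \ref{lem:SINCEClientsGivenRegister}, and likewise that the ``runs CommitMM()'' output of Lemma \ref{lem:MMparticipation} feeds exactly into Lemma \ref{lem:SINCEMMsGivenCommit} — and that the balance assumptions carried in from the preamble discharge the preconditions of both participation lemmas. Once these interfaces are checked to match, the corollary follows immediately.
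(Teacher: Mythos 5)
Your proposal is correct and matches the paper's own treatment: the paper states the corollary without a separate proof, presenting it as the direct composition of Lemmas \ref{lem:Clientparticipation} and \ref{lem:SINCEClientsGivenRegister} (client side), Lemmas \ref{lem:MMparticipation} and \ref{lem:SINCEMMsGivenCommit} (MM side), and Lemma \ref{lem:ResolutionGetsRun} (Resolution), under the same inherited token-balance assumptions. Your explicit chaining of these lemmas is exactly the intended argument.
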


\begin{observation}\label{obs:FTDEXisWSFBA}
    It can be seen that $\protocolName$ with at least $n_\secParam$ previous Register() calls implements a $\ourAuctionName$ when all players follow the protocol. In the Commit phase, CommitClient() specifies a client order which is committed to, while CommitMM() specifies a market which is also committed to. As clients commit to these orders and sign this commitment using a NIZKSoK, nothing is revealed about the client's order, as there are are least $n_\secParam$ Register() calls. This is equivalent to privately submitting the order. 
    
    CommitMM() and RevealMM() ensure MMs provide the equivalent of at least $\maxAuctionNotional$ notional on the bid and offer. Furthermore, as MMs are indistinguishable (see Section \ref{sec:Prelims}), a MM commitment reveals nothing about the liquidity on the bid or offer\footnote{MMs are also allowed to participate as clients if privacy is a concern. CommitMM() provides professionals with a functionality to efficiently provide liquidity in a decentralised setting. It is possible to introduce a RegisterMM() function analogous to Register(), allowing MMs to relay markets in ZK. We believe this has negligible benefit for professionals who already have price and market-size hidden through the commitment scheme.}. 
    
    When the Commit phase is finished, no further orders or markets can be submitted for that auction round, and as such, the clearing price is predetermined. During the Reveal phase, RevealClient() and RevealMM() reveal the orders corresponding to CommitClient() and CommitMM() instances from the previous phase, which are settled in the Resolution phase according to clearing price rules which maximise the amount of notional to be traded, as is in a $\ourAuctionName$.
\end{observation}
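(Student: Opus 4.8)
The plan is to prove the observation by exhibiting a stage-by-stage correspondence between the three phases of $\protocolName$ (Commit, Reveal, Resolution) and the stages of the $\ourAuctionName$ of Definition \ref{def:WSFBAuction} (private collection of orders up to a deadline, revelation, and clearing-price settlement), and to argue that, under the standing assumption that all players follow the protocol, each ingredient the TTP supplies in a $\ourAuctionName$ is faithfully reproduced by $\protocolName$. Because players are assumed to follow the protocol, the escrow-enforcement guarantees come for free: Lemmas \ref{lem:SINCEClientsGivenRegister} and \ref{lem:SINCEMMsGivenCommit} give that every committed client order and MM market is subsequently revealed, and Lemma \ref{lem:ResolutionGetsRun} gives that Resolution() is run, so no committed order is silently dropped and the auction is always settled. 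What remains is therefore not a game-theoretic argument but a verification that (i) the information exposed during commitment matches the privacy of a submission to the TTP, and (ii) the settlement computation coincides with the $\ourAuctionName$ clearing rule.

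For the Commit phase, I would argue that a CommitClient() call --- an on-chain commitment to an order of a chosen requested width, accompanied by its set-membership proof and NIZKSoK and relayed anonymously --- leaks no more about the order than a private submission to the TTP. This is the one genuinely cryptographic step: because there are at least $n_\secParam$ registrations, the Linkability property of Section \ref{sec:propertiesOfFairTraDEX} bounds the probability of associating the commitment (and serial number $\commSerialNum$) with its underlying $\regToken_i \in \regTokens$ by $\frac{1}{|\regTokens|}+\negligible(\cryptoParam)$, which by the defining property of $n_\secParam$ renders the residual information about direction, size, or identity $\negligible(\secParam)$. Symmetrically, CommitMM() commits a market, and since MMs are indistinguishable (Section \ref{sec:Prelims}) the commitment reveals nothing about the liquidity placed on the bid or offer. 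Hence the Commit phase reproduces the private order-collection of a $\ourAuctionName$.

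For the notional constraint and settlement, I would note that CommitMM()/RevealMM() bind the MM, via its escrow $\escrowMM$ and the balance checks re-run at reveal time, to placing total notional at least $\maxAuctionNotional$ on each of the bid and offer, matching the $\ourAuctionName$ requirement and ensuring (as used in Theorem \ref{thm:ourFBA}) that the clearing price lies inside the tightest MM market. Because the Commit phase lasts $\revealTXDelay$ blocks and no order or market may be introduced afterwards, the multiset of orders to be settled is fixed once the deadline passes --- the clearing price is predetermined --- matching ``Orders are collected until a specified deadline.'' In the Reveal phase RevealClient()/RevealMM() publish exactly the committed orders and markets. Finally, Resolution() selects the tightest revealed market, breaking ties with $\commit(\revealedMkts)$ as a random seed (matching ``a randomly-selected market from the tightest provided markets''), discards client orders whose requested width is below the tightest MM width, and --- by the verifier of Algorithm \ref{alg:CPVerifier} --- computes the single clearing price that maximises traded notional and then minimises net imbalance, applying the stated pro-rating; this is precisely the settlement rule of Definition \ref{def:WSFBAuction}.

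The main obstacle is step (i) above: the honest-TTP model submits orders over a genuinely private channel, whereas $\protocolName$ posts commitments publicly on-chain, so the equivalence holds only in the sense that no usable order information leaks beyond negligible probability. Making this precise requires simultaneously invoking the hiding of the commitment scheme $\commit$, the zero-knowledge and simulation-intractability of the set-membership proofs, and --- crucially --- the threshold $n_\secParam$ on $|\regTokens|$, so that the directional signal of any single order is statistically drowned out. Everything downstream of this equivalence is a direct definition-chase against Definition \ref{def:WSFBAuction}.
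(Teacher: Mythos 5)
Your proposal is correct and follows essentially the same route as the paper: the paper gives no separate proof for this observation (its statement is itself the argument), and it consists of exactly the phase-by-phase correspondence you describe --- private commitment via NIZKSoK and the $n_\secParam$ registration threshold, MM indistinguishability and the $\maxAuctionNotional$ bid/offer requirement, the clearing price being fixed at the Commit deadline, and Reveal/Resolution reproducing the settlement rule of Definition~\ref{def:WSFBAuction}. Your additions (the explicit linkability bound $\frac{1}{|\regTokens|}+\negligible(\cryptoParam)$, the tie-break seed, and the appeal to Lemmas~\ref{lem:SINCEClientsGivenRegister}, \ref{lem:SINCEMMsGivenCommit} and \ref{lem:ResolutionGetsRun}, which are in fact redundant once all players are assumed to follow the protocol) only elaborate the same argument.
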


\begingroup
\def\thetheorem{\ref{thm:FairTraDEX}}
\begin{theorem}
    Consider an instance of $\protocolName$ between $\tokenA$ and $\tokenB$ with MIFP $\MIFPAB$ and at least $n_\secParam$ previously called instances of Register(). For $ \numMMs$ non-cooperative MMs, the following strategies form strict Nash equilibria:
    \begin{itemize}
        \item $\numMMs=1$: Clients run Register(), followed by CommitClient() producing market orders of width $\fee$. The MM runs CommitMM() producing a market of width at most $ \fee$ with reference price equal to $\MIFPAB$ in size $\maxAuctionNotional$. Clients and MMs then run RevealClient() and RevealMM() respectively.
        \item $\numMMs\geq 2$: 
        Clients run Register(), followed by CommitClient() producing market orders of width greater than 1. MMs run CommitMM() producing markets of width 1 with reference price equal to $\MIFPAB$ in size of at least $\maxAuctionNotional$. Clients and MMs then run RevealClient() and RevealMM() respectively.
    \end{itemize}
\end{theorem}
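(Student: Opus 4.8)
The plan is to reduce Theorem~\ref{thm:FairTraDEX} to the idealized result of Theorem~\ref{thm:ourFBA} by showing that, under at least $n_\secParam$ registrations, the decentralized protocol is game-theoretically equivalent to a $\ourAuctionName$ run by a TTP. First I would invoke Corollary~\ref{cor:SINCE}, which already establishes that rational clients and MMs execute every mechanical step of the protocol: a client who runs Register() proceeds to CommitClient() and RevealClient(), and an MM who runs CommitMM() proceeds to RevealMM(). This disposes of the large class of protocol-abandonment deviations (failing to reveal, committing without intending to reveal, and so on), since the escrow-enforcement results of Lemmas~\ref{lem:SINCEClientsGivenRegister} and~\ref{lem:SINCEMMsGivenCommit} show each such deviation strictly reduces the deviator's utility.

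Next I would appeal to Observation~\ref{obs:FTDEXisWSFBA} to conclude that, once all players follow the protocol and at least $n_\secParam$ registrations are present, the Commit/Reveal/Resolution rounds of $\protocolName$ reproduce exactly the order-collection, private-submission and clearing-price rules of a $\ourAuctionName$. The crucial point to pin down here is that the public on-chain commitments leak no strategically exploitable information. This is where the Linkability property is used: any observer's advantage in associating a committed order (with its signature of knowledge and serial number) to a particular $\regToken_i \in \regTokens$ is at most $\frac{1}{|\regTokens|}+\negligible(\cryptoParam)$, so with $|\regTokens|\geq n_\secParam$ the expected absolute imbalance any counterparty can infer at the MIFP about an unrevealed order is $\negligible(\secParam)$. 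Hence the information available to every player during the Commit phase of $\protocolName$ is, up to negligible terms, identical to that available in the idealized $\ourAuctionName$, where orders are simply hidden until the deadline, so the two games share the same best-response correspondence.

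With the equivalence established, I would transport the equilibrium characterization of Theorem~\ref{thm:ourFBA} verbatim. For $\numMMs=1$ the WSFBA SNE is clients submitting market orders of width $\fee$ and the monopolistic MM showing a market of width at most $\fee$ centred at the MIFP (in size $\maxAuctionNotional$, which Observation~\ref{obs:FTDEXisWSFBA} enforces through CommitMM()); for $\numMMs\geq 2$ it is clients submitting market orders of width greater than $1$ and MMs showing width-$1$ markets centred at the MIFP. These are precisely the profiles in the statement. For strictness I would argue that every remaining deviation in $\protocolName$ falls into one of two buckets: a mechanical deviation, already strictly dominated by Corollary~\ref{cor:SINCE}; or a choice of order/market parameters, which maps onto a WSFBA strategy deviation and is therefore strictly dominated by Theorem~\ref{thm:ourFBA}. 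Since both buckets yield strict utility loss, the profile is a strict Nash equilibrium.

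The hard part will be the equivalence argument of the second paragraph, specifically converting the cryptographic linkability bound $\frac{1}{|\regTokens|}+\negligible(\cryptoParam)$ into the game-theoretic statement that no counterparty gains a non-negligible expected-value advantage, and showing that this closes the gap between the larger deviation space of the decentralized protocol (where adversaries may attempt to de-anonymize, double-commit, or bias the tie-breaking seed $\commit(\revealedMkts)$) and the smaller deviation space of the idealized $\ourAuctionName$. Confirming that no subset of players can skew the tie-breaking seed, which relies on the presence of at least two non-cooperative MMs in that case as flagged around Lemma~\ref{lem:SINCEMMsGivenCommit}, is the most delicate sub-step.
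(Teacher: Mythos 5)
Your proposal follows essentially the same route as the paper's own proof, which is exactly the three-step chain you describe: Corollary~\ref{cor:SINCE} to dispose of protocol-abandonment deviations, Observation~\ref{obs:FTDEXisWSFBA} to establish that $\protocolName$ with $n_\secParam$ registrations implements a $\ourAuctionName$, and then a direct application of Theorem~\ref{thm:ourFBA}. The ``hard parts'' you flag in your final paragraph (quantifying the linkability-to-imbalance reduction and the unbiasability of the tie-breaking seed) are in fact left at the same informal level in the paper itself, handled only by the informal Linkability property and a footnote in Resolution(), so your proposal is, if anything, slightly more explicit about where the argument is thinnest.
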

\endgroup

\begin{proof}
    From Corollary \ref{cor:SINCE}, we know the running of $\protocolName$ is a strictly dominant strategy for clients and MMs. Furthermore, from Observation \ref{obs:FTDEXisWSFBA}, we have seen that $\protocolName$ with at least $n_\secParam$ previously called instances of Register() implements a $\ourAuctionName$. Given $\protocolName$ is a $\ourAuctionName$, the statement follows by applying Theorem \ref{thm:ourFBA}.
\end{proof}

\section{\texorpdfstring{$\protocolName$}{TEXT} Algorithms}\label{app:FairTraDEXAlgos}

\begin{figure}
	\includegraphics[width=1\textwidth]{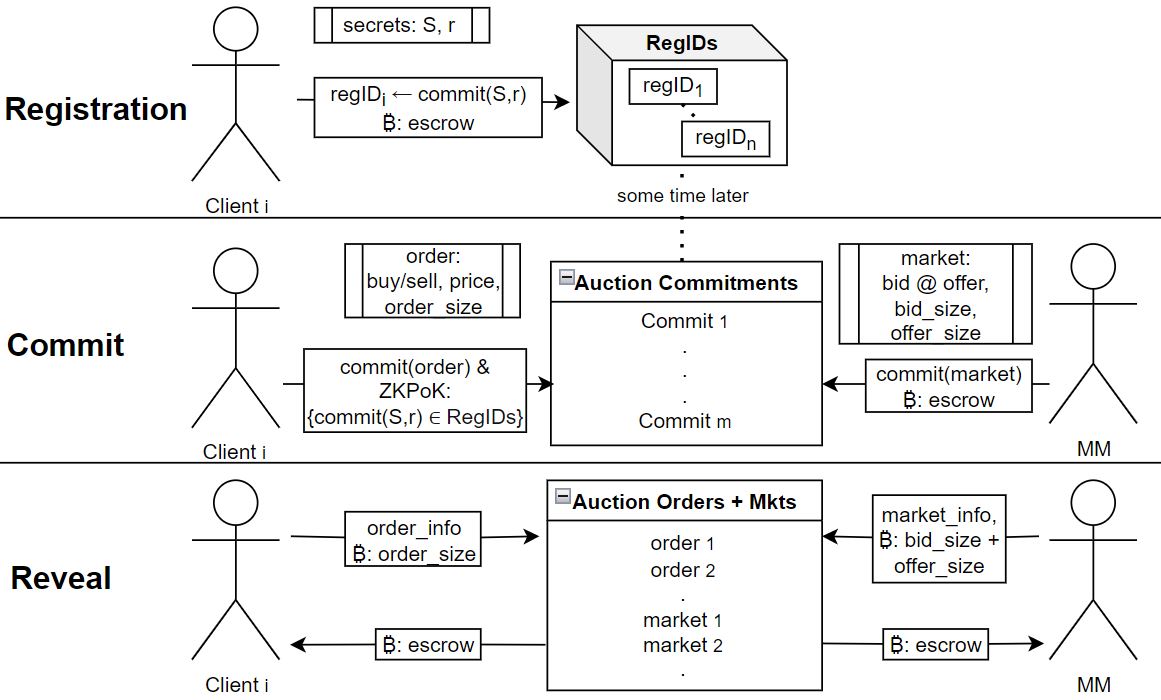}
	\caption{$\protocolName$ phases before order settlement. $\bitcoin:$ indicates the transfer of some tokens, but not necessarily in the same denomination.}
	\label{fig:protocolFlow}
\end{figure}

We now describe in detail the algorithms which together form $\protocolName$, and then describe the main differences between $\protocolName$ and a $\ourAuctionName$.

\begin{itemize}
    \item Setup($\cryptoParam, \maxAuctionNotional ) \rightarrow [\params, \ \MIFPA,\ $ $  \regTokens, \ $ $\MMCommits, $ $ \clientCommits, $ $ \revealedOrders, $ \\ $\revealedMkts]$: For a given cryptographic-security parameter $\cryptoParam$, output the necessary public cryptographic and ZK parameters in $\params$. Set $\MIFPA \in \mathbb{R}^+$ as the indicative price of token $\tokenA$ (used to convert restrictions based on escrows into token amounts). Choose $\escrowClient \in \mathbb{R}^+$ such that the notional of any client order is bounded by $\escrowClient$, and $\escrowMM \assign c \cdot \maxAuctionNotional$, for some $c>1$, with $\maxAuctionNotional$ as described in Section \ref{sec:FBAs}. Choose a bounty $\resBounty \in \mathbb{R}^+$ to reward players for successfully calling Resolution(). Add $\escrowClient, \escrowMM, \maxAuctionNotional, \resBounty$ to $\params$. Set $\regTokens, $ $ \ \clientCommits, $ $ \MMCommits, $ $\revealedOrders, $ $ \revealedMkts   \assign []$.
    
    \item Register($\params,\playeri, \regTokens $) $\rightarrow [( \commSerialNum, \commRandomness),\regToken, $ $\regTokens]$: If $\{ \balanceTokens_i(\bitcoin) $ $\geq \escrowClient \}$, set $\balanceTokens_i(\bitcoin) \assign \balanceTokens_i(\bitcoin)- \escrowClient$. Then, randomly generate $\commSerialNum, \ $ $ \commRandomness \in \{0,1\}^{O(\cryptoParam)}$, and compute $\regToken \assign $ $ \commit(\commSerialNum,\commRandomness)$. Add $\regToken$ to $\regTokens$.
    
    \item CommitClient$(\params, $ $ \regToken,\regTokens, $ $\commSerialNum,$ $ \commRandomness,$ $ \clientCommits,  $ $ \MIFPA) $ $\rightarrow [(\order),$ $\proofZK, $ $\commSerialNum, $ $ \commitment,  $\\  $\clientCommits]$: If $ \{ \regToken \in \regTokens \} $: select a token $\token \in \{\tokenA, \tokenB \}$, a trade price $\tokenPrice \in \mathbb{R}^+ $, and a trade size amount $\tokenAmount \in \mathbb{R}^+$. If token $\token = \tokenA$, set $\MIFP \assign \MIFPA$. Otherwise, set $\MIFP \assign \MIFPA \cdot \tokenPrice $ (used to verify client order is less than the escrow given the indicative price for $\tokenA$).  If $\{ \tokenAmount \leq \balanceTokens_i(\token),  \frac{\escrowClient}{\MIFP} \}$: select a minimum trade width $\width \geq 1$. Set $\order \assign (\token, \tokenAmount,\tokenPrice, \width)$ and $ \commitment \assign \commit(\order)$. Finally, generate the signature of knowledge $\proofZK \assign $ \textit{NIZKSoK}[$\commitment$]$\{(\regToken,$ $\commRandomness):$ MemVerify ($\regTokens,$ $ \regToken )$ $=1 $ $ \And $ $ \regToken=$ $ \commit($ $\commSerialNum, $ $\commRandomness) $ $\}$. If no proof corresponding to $\commSerialNum$ has been computed before, add $\commSerialNum$ to $\clientCommits$. Otherwise, output $\bot$.
    
    \item CommitMM($\params, \playeri, \MMCommits,$ $ \MIFPA) \rightarrow $ $ [(\market),\commitment,  \MMCommits]$ : If $\{ \balanceTokens_i(\bitcoin) \geq \escrowMM \}$: select a bid size $\MMBidAmount \in \mathbb{R}^+$, offer size $\MMOfferAmount\in \mathbb{R}^+$ and prices $0 \leq \MMBidPrice \leq \MMOfferPrice$ such that $\maxAuctionNotional \ \leq \MMBidAmount \cdot \MIFPA \leq \escrowMM, ( \balanceTokens_i(\tokenA)\cdot \MIFPA)$ $\logicalAnd$ $\maxAuctionNotional \ \leq \MMOfferAmount \cdot \MIFPA \cdot \MMOfferPrice \leq \escrowMM, (\balanceTokens_i(\tokenB)\cdot \MIFPA \cdot \MMOfferPrice)$ (ensures the bid, offer prices, and sizes fall within the bounds of the minimum size required by a $\ourAuctionName$ and the escrow). Set $\market \assign (\MMBidPrice, \ \MMBidAmount, \ \MMOfferPrice,  \ \MMOfferAmount)$ and compute $\commitment \assign \commit(\market)$. Finally, set $\balanceTokens_i(\bitcoin) \assign \balanceTokens_i(\bitcoin)- \escrowMM$ and add $\commitment$ to $\MMCommits$. Otherwise, output $\bot$.
    
    \item RevealClient($\params, \proofZK, \commSerialNum, \commRandomness, \order\assign (\token, \tokenAmount,\tokenPrice,$ $ \width, $ $ \MIFPA) $ $ , \clientCommits,$  $ \revealedOrders,$ $ \playeri $ $) \rightarrow $ \\ $[ \revealedOrders]$: If token $\token = \tokenA$, set $\MIFP \assign \MIFPA$. Otherwise, set $\MIFP \assign \MIFPA \cdot \tokenPrice $. If $\{ \proofZK  \in \clientCommits \ \logicalAnd \ \proofZK =$ \textit{NIZKSoK}[$\commit(\order)$]$\{(\regToken, \commRandomness):$ MemVerify($\regTokens, \regToken )=1 \, \And \regToken= \commit(\commSerialNum,\commRandomness) $ $\}  $ $ \logicalAnd \ \tokenAmount \leq \balanceTokens_i(\token), \frac{\escrowClient}{\MIFP}  $ (Repeat checks from CommitClient()): set $\balanceTokens_i(\token) \assign \balanceTokens_i(\token)- \tokenAmount$, $\balanceTokens_i(\bitcoin) \assign \balanceTokens_i(\bitcoin)+ \escrowClient$ (return escrow), and add $\order$ to $\revealedOrders$. Otherwise, output $\bot$.
    
    \item RevealMM($\params, \commitment,\market \assign (\MMBidPrice, \ \MMBidAmount, \ \MMOfferPrice,  \ \MMOfferAmount) , $ $ \MIFPA , $ $\MMCommits, $  \\ $ \revealedMkts,$ $ \playeri) $ $\rightarrow $ $ [ \revealedOrders]$: If $\{\commitment \in $ $ \MMCommits$, $\commitment= $ $\commit(\market) \ $ $ \logicalAnd $ $\maxAuctionNotional \ \leq \MMBidAmount \cdot \MIFPA \leq \escrowMM, ( \balanceTokens_i(\tokenA)\cdot \MIFPA)$ $\logicalAnd$ $\maxAuctionNotional \ \leq \MMOfferAmount \cdot \MIFPA \cdot \MMOfferPrice \leq \escrowMM, (\balanceTokens_i(\tokenB)\cdot \MIFPA \cdot \MMOfferPrice)$ (Repeat checks from CommitMM()): set $\balanceTokens_i(\tokenA) \assign \balanceTokens_i(\tokenA)- \MMBidAmount$, $\balanceTokens_i(\tokenB)$ $ \assign $ $ \balanceTokens_i(\tokenB)- \MMOfferAmount$, $\balanceTokens_i(\bitcoin) \assign \balanceTokens_i(\bitcoin)+\escrowMM$ (return escrow), and add the orders $(\tokenA,\MMBidAmount, \MMBidPrice, \any)$, $(\tokenB,$ $\MMOfferAmount, $ $\MMOfferPrice, \any)$ to $\revealedMkts$. Otherwise, output $\bot$.
    
    \item Resolution($\params, \revealedOrders, \revealedMkts, $ $\clientCommits, $ $ \MMCommits, \playeri) $ $ \rightarrow $ $[\clearingPrice, $ \\ $ \imbalance,$ $ \tightestWidth $ $\clientCommits, $ $\MMCommits,$ $ \revealedOrders, $ $ \revealedMkts ]$: If this is not the first time Resolution() was called, output $\bot$. Otherwise, calculate the tightest market width $\tightestWidth$ of a market in $\revealedMkts$. Remove all markets in $\revealedMkts$ except one with width equal to $\tightestWidth$, chosen using $\revealedMkts$ as a random seed to $\commit()$ \footnote{Given all markets are revealed, the final value of $\revealedMkts$, and as such $\commit(\revealedMkts|| * )$, is unpredictable in the presence of two or more non-cooperative MMs. We prove in Lemma \ref{lem:SINCEMMsGivenCommit} that all MMs running CommitMM() also run RevealMM(). The blockchain based implementation of this function is described in App. \ref{app:tiebreaker}}. Remove all orders in $\revealedOrders$ with requested width greater than $\tightestWidth$. Calculate the clearing price $ \clearingPrice$ which first maximises notional of orders traded from $\revealedOrders \cup \revealedMkts$, and then minimises the imbalance $\imbalance$. 
    
    If $\clearingPrice$ does not maximise notional traded, and then minimise imbalance, output $\bot$. 
    Otherwise, if $\imbalance>0$ at $\clearingPrice$, pro-rate all buy orders with the lowest bid price above $\clearingPrice$ based on order size. If $\imbalance<0$, pro-rate all sell orders with the highest offer price below $\clearingPrice$ based on order size. Then, settle all other buy orders with price greater than or equal to $\clearingPrice$, and all other sell orders with price less than or equal to $\clearingPrice$. Set $ \clientCommits, \MMCommits,$ $ \revealedOrders,  $ $ \revealedMkts \assign []$, and $\balanceTokens_i(\bitcoin) \assign \balanceTokens_i(\bitcoin)+ \resBounty$.
\end{itemize}

\subsection{\texorpdfstring{$\protocolName$}{TEXT} vs. \texorpdfstring{$\ourAuctionName$}{TEXT}}\label{app:protocolDiffs}

The main differences between $\protocolName$ and a $\ourAuctionName$ are as follows:
\begin{itemize}
    
    \item Escrows are used to enforce the correct revelation of players who commit to orders or markets. Escrows are only returned to players if orders are revealed and correspond to a valid commit. Furthermore, escrows are chosen large enough to ensure the reclamation of escrows has strictly higher utility than not, ensuring rational players follow the protocol.
    \item An algorithm involving deposits and/or withdrawals updates the set of balances for all players, identifying the player calling the algorithm. 
    \item $\protocolName$ separates the depositing of client escrow  and client order commitments. This is a key functionality necessary to preserve client anonymity and the guarantees of a $\ourAuctionName$. If a client deposits an escrow in the same instance as committing to an order, that information can be used to identify the player, and imply information about the player's order. By separating the two, commitment does not require the update of global variables that can be used to identify the client. 
    \item Set-membership proofs in ZK in the CommitClient() algorithm are used to prove that a player committing to a client order has deposited a client escrow. As $\protocolName$ separates the deposit and commitment steps, these proofs allow a client who deposited an escrow to generate one (and only one, as ZK proofs reveal $\commSerialNum$) order per escrow, while only revealing that the order corresponds to a deposited escrow. As the number of deposited escrows increases, the probability that an order commitment matches any particular escrow approaches 0. This replicates the anonymous order submission of a $\ourAuctionName$.
    \item Tokenised incentives are used to ensure some player in the blockchain calculates the clearing price, and settles orders correctly. 
\end{itemize}

\section{Clearing Price Verification} \label{app:clearingPriceCalc}

The protocol in Algorithm \ref{alg:CPVerifier} checks that a given clearing price $\clearingPrice$ clears the highest notional with respect to $\tokenA$. To do this, it checks the imbalance and total notional that would be settled at $\clearingPrice$. Note, that the statements that follow are true given some volume trades at the proposed clearing price, which is asserted in the protocol (line \ref{line:assertVolTrades}).

If the imbalance is 0 (line \ref{line:Imbalance0}), it can be seen that this price must maximise the volume trade while minimising the absolute value of the imbalance. Higher prices reduces the buying notional/ strictly increases the selling notional, while lower prices strictly reduces the selling notional/increases the buying notional, which creates an imbalance without increasing the notional traded. 

If the imbalance is positive (line \ref{line:Imbalance>}), this implies there will be buy orders (partially) unfilled at or above $\clearingPrice$. We can see that the only prices at which more notional can trade must be greater than $\clearingPrice$. Thus, the contract checks the next price increment above (line \ref{line:checkPriceAbove}), and verifies the total notional traded at that price is less than at $\clearingPrice$, or equal, but with a larger absolute imbalance (line \ref{line:assert<}). If the notional traded is lower at this higher price, the clearing price is correct as a lower price reduces the value of the selling notional, and increases the value of the buying notional. 
If the notional traded is the same, but the absolute value of the imbalance is higher, this imbalance must be a sell imbalance by the same reasoning (buying notional decreases, selling notional increases). If the absolute value of the imbalance is higher (although negative), the imbalances at all price points above that price are increasing (buying notional decreases, selling notional increases). 

The same holds for sell imbalances at a proposed clearing price (line \ref{line:Imbalance<}). Checking the price point below (line \ref{line:checkPriceBelow}) and ensuring the notional is lower, or that the notional traded is the same but with a large absolute imbalance (line \ref{line:assert<}) guarantees that the proposed clearing price is valid.

\section{Encoding of \texorpdfstring{$\protocolName$}{TEXT}}\label{app:Protocol}

The following is an overview of the blockchain-based encoding of $\protocolName$, as described algorithmically in Section \ref{sec:FairTraDEX} and encoded in Appendix \ref{app:ProtocolEncoding}.

For an arbitrary bit-string $m \in \{0,1\}^{*}$, $\relay(m)$ indicates broadcasting $m$ to the relay transaction mempool, where $m$ is included as a transaction in the blockchain if and only if it gives the including relayer access to a relayer fee $\feeRelayer$.

\subsection{Register}

Clients randomly generate $\commSerialNum, \ \commRandomness \in \{0,1\}^{n(\cryptoParam)}$, and compute $\regToken \assign \commit(\commSerialNum,\commRandomness)$. Then, the client sends a $\langle \tagClientRegister,$ $ \regToken \rangle$ to the blockchain using the blockchains PKI. Upon addition to the blockchain, this deposits an escrow of $\escrowClient$ and a relayer fee of $\feeRelayer$ to the blockchain (line \ref{line:depositescrowclient}), with $\regToken$ added to $\clients$ (line \ref{line:addClientReg}).

\subsection{Commit}

A client wishing to submit an order of the form $\order \assign (\token, \tokenAmount,\tokenPrice, \width)$, first generates a commitment to the order $\commitment\assign \commit(\order)$. The client then generates a NIZK signature of knowledge $\proofZK \assign$ NIZKSoK($\commitment$)$\{($ $\regToken, 
$ $\commRandomness):$ MemVerify($\regTokens, \regToken )=1 \, \And \regToken= \commit(\commSerialNum,\commRandomness) $ $\}$ on the commitment. The client then relays a  message of the form $\langle \tagClientCommit, $ $ \commitment, $ $ \commSerialNum, $ $\proofZK \rangle$ to the relayer mempool, which is then sent to the smart contract by a relayer. The transaction is only valid if Verify$(\proofZK)$ returns 1, and as such, a relayer cannot tamper with $\commitment$. The contract first checks that the maximum auction notional has not been reached ($\currentAuctionNotional< \maxAuctionNotional $, line \ref{alg:ClientCommit}).

Furthermore, a valid $\langle \tagClientCommit, \commitment, \commSerialNum, \proofZK \rangle$ message must not reveal a serial number $\commSerialNum$ which has previously been added to $\blacklistedSerialNums $ (initialised line \ref{alg:blackSNsInit}). Serial numbers in $\blacklistedSerialNums$ correspond to client commits that were not revealed according to the protocol during a previous Reveal phase. The escrow corresponding to serial numbers of $\blacklistedSerialNums$ are effectively burned by the protocol, with clients permanently losing access to them. If $\commSerialNum $ is not in $\blacklistedSerialNums$, the order commitment is recorded in $\clientCommits$ (line \ref{alg:addToActiveOrders}), and the relayer who relays the transaction to the blockchain receives the fee (line \ref{alg:payRelayer}).

MMs who wish to participate generate a market of the form $\market \assign (\MMBidPrice, \ \MMBidAmount, \ \MMOfferPrice,  \ \MMOfferAmount)$, and submit a transaction directly to the blockchain of the form $\langle \tagMMCommit, \commit(\market) \rangle$. This transaction deposits an escrow of $\escrowMM$ to the smart contract. The commitment is then recorded in $\MMCommits$ (line \ref{alg:addMMCommit}).

Client and MM Commit transactions are collected until the Commit phase deadline, $\commitDeadline$ (line \ref{alg:deadlineDef}), has passed (line \ref{alg:comDeadlinePassed}).

\subsection{Reveal}\label{app:tiebreaker}

A client who committed to trade through a $\langle \tagClientCommit, \commitment, \commSerialNum, \proofZK \rangle$ transaction in the Commit phase submits a Reveal transaction directly to the blockchain of the form $\langle \tagClientReveal, $ $\token, \tokenAmount, \tokenPrice,  \width, $ $ \commSerialNum, \commRandomness, \regToken$ $ \regTokenNew \rangle$ (line \ref{alg:clientReveal}). If the client intends to ren-enter the protocol as a client, $\regTokenNew$ is a commitment to a new serial number and randomness. Otherwise, it is the null value. 

This $\langle \tagClientReveal, * \rangle$ transaction reveals the token being sold, token amount to sell, and requested width of the order. The $\tokenPrice$ either reveals a limit price at which the client is selling, that the order is the market order if $\tokenPrice= \marketOrder$, or that the client is withdrawing their escrow if $\tokenPrice= \withdraw$. The contract checks:

\begin{itemize}
    \item $\commSerialNum \in \clientCommits $ to verify a commitment corresponding to that serial number has been recorded.
    \item $ \regToken = \commit(\commSerialNum, \commRandomness)$ to ensure that client was indeed the same player that generated the $\regToken$ and that the client order is the same as that committed in the Commit phase $ \commit(\token, \tokenAmount, \tokenPrice, \width)=\clientCommits[\commSerialNum].\commitment $.
\end{itemize}

If the transaction is valid, the order is added to $\revealedBuyOrders$ or $\revealedSellOrders$, depending on direction. If the token being sold by the client is $\tokenA$, the effective order size for clearing price calculation and trade size allocation is the minimum of $\tokenAmount$ and $\escrowClient / \MIFPA$, the maximum token $\tokenA$ order size allowable (line \ref{line:checkAtokensCorrectSize}), with the order recorded in $\revealedOrders$ (line \ref{line:orderAddedA}). 
If the token being sold by the client is $\tokenB$, the order size is the minimum of $\tokenAmount$ and $\escrowClient / (\MIFPA \cdot \MMOfferPrice)$ (line \ref{line:checkBtokensCorrectSize}), with the order recorded in $\revealedOrders$ (line \ref{line:orderAddedB}).

Finally, if $\regTokenNew$ is the null value, the escrow is returned (line \ref{line:returnA} or \ref{line:returnB}), while if it is not, it corresponds to re-entering the protocol as a new client (saving on an additional transaction to re-enter as a client).

A MM who committed to a market through a $\langle \tagMMCommit, \commitment \rangle$ transaction in the commit phase submits $\langle \tagMMReveal, $ $\market \assign$ $(\MMBidPrice, \ \MMBidAmount, \ \MMOfferPrice,  \ \MMOfferAmount) \rangle$ (line \ref{alg:MMReveal}). The contract verifies:

\begin{itemize}
    \item The market $\market$ matches the previously commitment from the commit phase ($\commit($ $\market)$ $= $ $ \MMCommits[\MM].\commitment$) (line \ref{alg:MMReveal}). 
    \item For the bid,  $\maxAuctionNotional / \MIFPA \leq \MMBidAmount$, which verifies the MM has provided the minimum notional required by a $\ourAuctionName$ (line \ref{alg:MMMinNotionalCheck}).
    \item For the offer, $\maxAuctionNotional / (\MIFPA \cdot \MMOfferPrice) \leq \MMOfferAmount$, which again verifies the MM has provided the minimum notional required by a $\ourAuctionName$ (line \ref{alg:MMMinNotionalCheck}).
\end{itemize}

Any MM not revealing a market (line \ref{alg:MMdoesntReveal}) loses their escrow (line \ref{alg:MMnonRevealBurn}) and is prevented from participating in the Resolution phase. Otherwise, from the set of all valid revealed markets $\revealedMkts$, the tightest market is selected, including a tie-breaking procedure for more than one market with width equal to the tightest width. The tie-breaker used in our implementation of $\protocolName$  takes, for a MM $\MM$ (identified by a unique public key) and submitted market $\market$, the market corresponding to the largest value of $\commit( \commit(\revealedMkts) $ $ || \ \MM \ || \ \market)$ (lines \ref{alg:findTightestMMStart}-\ref{alg:findTightestMMEnd}).  Given the tightest market $\market \assign $ $ (\MMBidPrice, \ $ $ \MMBidAmount, \ $ $ \MMOfferPrice,  \ \MMOfferAmount)$ after tie-breaks, the two implicit limit orders are added to the set of revealed orders $\revealedBuyOrders$ and  $\revealedSellOrders$. As was the case with client orders, the effective order size for clearing price calculation and trade size allocation of the bid is the minimum of $\MMBidAmount $ and $\escrowMM / \MIFPA$, while the effective offer size is the minimum of $ \MMOfferAmount$ and $\escrowMM / (\MIFPA \cdot \MMOfferPrice)$.

Reveal transactions are collected until the Reveal deadline, $\revealDeadline$ (line \ref{alg:deadlineDef}), has passed (line \ref{alg:comDeadlinePassed}). 

\subsection{Resolution}

Once the protocol enters the Resolution phase, any player in the system can propose a clearing price by submitting a $\langle \clearingPrice, *  \rangle$ message. Players submitting such a message must deposit a token amount (which we set as $\resBounty$, although any significantly large value to prevent invalid calls to the smart contract will do). This deposit, along with a bounty is returned to the player if $\clearingPrice$ is a valid clearing price.

The orders are then settled based on the clearing price $\clearingPrice$ (lines \ref{line:startOrderSettlement}-\ref{alg:executeOrdersEnd}). If the quantity of $\tokenA$ being sold is greater than the quantity being bought, the sell orders at the highest sell price below the clearing price are pro-rated based on the quantity of $\tokenA$ being sold. If the quantity of token $\tokenA$ being bought is greater than the quantity being sold, the buy orders at the lowest buy price above the clearing price are pro-rated based on the quantity of $\tokenA$ being bought. Remaining unexecuted order balances and escrows are returned to the owners. 

\subsection{Protocol Encoding}\label{app:ProtocolEncoding}

In the following, for arrays containing array objects, the array objects are uniquely identifiable by the first item in the array (i.e. client identifier, serial number, ZKProof).

\begin{algorithm}[H]
\def\baselinestretch{1} \scriptsize \raggedright
\caption{Register}
\label{alg:Register} 
	\begin{algorithmic}[1] 
	    \State $\players \assign \textit{generatePopulation}()$
	    \State $\regTokens\assign []$
	    \State $\clientCommits \assign []$
	    \State $\MMCommits \assign []$
	    \State $\phase \assign \null$
	    \State $\tightestWidth \assign \any$
	    \State $\revealedBuyOrders, \ \revealedSellOrders, \revealedMkts \assign []$
	    \State $\lastPhaseChange \assign 0$ \Comment{tracks the block number of last step update}
	    \State $\feeRelayer \assign \textit{getRelayerFee}()$
	    \State $\minTickSize \assign \textit{getMinimumTickSize}()$
	    \State $\commitDeadline, \revealDeadline \assign \revealTXDelay$ \Comment{Deadline for responses equal to the maximal reveal delay described in the Threat Model} \label{alg:deadlineDef}

	    \State $\escrowClient \assign  \escrowMAX$  \Comment{Escrow required to show each market, in line with the Threat Model}
	    
	    \State $\maxAuctionNotional \assign \textit{getMaxAuctionNotional}()$
	    \State $ c \assign \textit{random}(\mathbb{R}_{>1})$
	    \State $\escrowMM \assign  c \cdot \maxAuctionNotional$  \Comment{Escrow required to show each market, some amount greater than $\maxAuctionNotional$}
	    
	    \State $\MIFPA \assign \textit{getTokenAIndicativePrice}()$
	    \State $\currentAuctionNotional \assign 0$
	    \State $\blacklistedSerialNums \assign []$ \Comment{Tracks revealed serial numbers that misbehaved} \label{alg:blackSNsInit}

	    \SPACE 
	
	    \Function{$\textit{Initialise()} $}{}   
		    \State $\phase \assign \phaseCommit$
		\EndFunction

	    \SPACE
	    
	     \Upon{$\langle \tagClientRegister, \regToken \rangle $ $ \from \ \playerProtocol \ \in \ \players $ $ \with \ \playerProtocol\getBalance(\bitcoin) > \escrowClient+\feeRelayer$}{} \Comment{register player as a client} \label{alg:registerClient}
		    \State $\playerProtocol\getTransfer(\escrowClient+\feeRelayer,\bitcoin, \protocolContract)$ \Comment{Add client deposit to the contract account} \label{line:depositescrowclient}
		    \State $\clients\getAppend(\regToken)$ \label{line:addClientReg}
		\EndUpon
		
	\end{algorithmic} 
	
\end{algorithm}

\begin{algorithm}[H]
\def\baselinestretch{1} \scriptsize \raggedright
\caption{Commit}
	\begin{algorithmic}[1]
	    
	    \setcounter{ALG@line}{23}
	    
		\Upon{$\relay(\langle \tagClientCommit, \commitment, \commSerialNum, \proofZK \rangle) \ \from \ \playerProtocol  \in  \players  \ \with \ $ $ \currentAuctionNotional< \maxAuctionNotional \logicalAnd \ \text{Verify}(\proofZK, \commitment) =1$ $ \logicalAnd \ \phase = \phaseCommit \ \logicalAnd \ \logicalNot(\commSerialNum \in \blacklistedSerialNums)  $}{} \label{alg:ClientCommit}  
		    \State $\currentAuctionNotional \assign \currentAuctionNotional + \escrowClient$
		    \State $ \clientCommits\getAppend([\commSerialNum, \commitment]) $ \label{alg:addToActiveOrders}
		    \State $\protocolContract\getTransfer(\feeRelayer, \bitcoin, $ $ \playerProtocol))$ \Comment{Reward relayer} \label{alg:payRelayer}
        \EndUpon

	    \SPACE
        
        \Upon{$\langle \tagMMCommit, \commitment \rangle \ \from \ \playerProtocol  \in  \players  $ $ \with  $ $  \playerProtocol\getBalance(\bitcoin) > \escrowMM \logicalAnd $  $ \logicalNot (\playerProtocol \in \MMCommits) \ \phase = \phaseCommit $}{}  \Comment{Allow only one market per player address} \label{alg:MMCommit}
		    \State $\playerProtocol\getTransfer(\escrowMM, \bitcoin, \protocolContract)$ \Comment{Transfer escrow to the protocol contract account}
		    \State $\MMCommits\getAppend([\playerProtocol, \commitment]) $ \label{alg:addMMCommit}
        \EndUpon

	    \SPACE

		\Upon{$ \phase = \phaseCommit \ \logicalAnd \ \blockchain\getHeight() = \lastPhaseChange+\commitDeadline$}{} \label{alg:comDeadlinePassed}    
		    \State $ \phase \assign \phaseReveal$ 
		    \State $ \lastPhaseChange \assign \blockchain\getHeight()$ 
        \EndUpon

	\end{algorithmic} 
	
\end{algorithm}

\begin{algorithm}[H]
\def\baselinestretch{1} \scriptsize \raggedright
\caption{Reveal}
    
	\begin{algorithmic}[1]
	\setcounter{ALG@line}{33}
	
		\Upon{$\langle \tagClientReveal, $ $\token, \tokenAmount, \tokenPrice,  \width, $ $ \commSerialNum, \commRandomness, \regToken$ $, \regTokenNew \rangle \ $ $ \from \ \playerProtocol  \in  \players  $ $  \ \with  \ \commSerialNum \in \clientCommits $ $ \ \logicalAnd \ \regToken = \commit(\commSerialNum, \commRandomness)$ $ \ \logicalAnd \ \commit(\token, \tokenAmount, \tokenPrice=\clientCommits[\commSerialNum].\commitment $ $ \logicalAnd \ \phase =\phaseReveal$}{} \label{alg:clientReveal}
		    \If{$\tokenPrice \neq \textit{withdraw}$}
		    \If{$\token = \tokenA \ \logicalAnd \ \playerProtocol\getBalance(\tokenA)\geq \tokenAmount$ } 
		        \State $\tokenAmount \assign \textit{minimum}(\tokenAmount,  \escrowClient / \MIFPA)$ \label{line:checkAtokensCorrectSize}
		        \State $ \revealedBuyOrders\getAppend([\playerProtocol,  $ $ \tokenAmount,  \tokenPrice, \width ]) $ \Comment{Add client order to array of orders to trade} \label{line:orderAddedA}
		        \State $\playerProtocol\getTransfer(\tokenAmount,\tokenA, \protocolContract)$
		        \If{$ \regTokenNew = \varnothing$} 
		            \State $\protocolContract\getTransfer(\escrowClient, \bitcoin, $ $ \playerProtocol))$  \label{line:returnA}
		        \Else
		            \State $ \clients\getAppend(\regTokenNew)$ 
		        \EndIf
		    \ElsIf{$\token = \tokenB \ \logicalAnd \ \playerProtocol.\getBalance(\tokenB)\geq \tokenAmount$ } 
		        \State $\tokenAmount \assign \textit{minimum}(\tokenAmount,  \escrowClient / (\MIFPA \cdot \tokenPrice))$ \label{line:checkBtokensCorrectSize}
		        \State $ \revealedSellOrders\getAppend([ \playerProtocol,  $ $ \tokenAmount,  \tokenPrice, \width ]) $ \Comment{Add client order to array of orders to trade} \label{line:orderAddedB}
		        
		        \If{$ \regTokenNew = \varnothing$} 
		            \State $\protocolContract\getTransfer(\escrowClient, \bitcoin, $ $ \playerProtocol))$  \label{line:returnB}
		        \ElsIf{$\playerProtocol\getBalance(\bitcoin) > \feeRelayer$}
		            \State $\playerProtocol\getTransfer(\feeRelayer,\bitcoin, \protocolContract)$
		            \State $ \clients\getAppend(\regTokenNew)$ 
		        \EndIf
		    \EndIf
		    \Else \Comment{Client wants to withdraw}
		        \State $\protocolContract\getTransfer(\escrowClient, \bitcoin, $ $ \playerProtocol))$ 
		    \EndIf
		    
		    \State $ \clients\remove(\regToken)$ 
		    \State $ \clientCommits\remove(\commSerialNum)$
		    
        \EndUpon

	    \SPACE
	    
	    \Upon{$\langle \tagMMReveal,\market \assign (\MMBidPrice, \ \MMBidAmount, \ \MMOfferPrice,  \ \MMOfferAmount) \rangle \ \from \ \MM \ \in \MMCommits \ \with \ \commit(\market)= \MMCommits[\MM].\commitment $ $ \logicalAnd \ \phase= \phaseReveal $}{} \label{alg:MMReveal}
	        \If{$ (\maxAuctionNotional / (\MIFPA \cdot \MMOfferPrice) \leq \MMOfferAmount \leq \playerProtocol\getBalance(\tokenB)) $ $ \ \logicalAnd \ (\maxAuctionNotional / \MIFPA  \leq  \MMBidAmount \leq \playerProtocol.\getBalance(\tokenA) )$ } \label{alg:MMMinNotionalCheck} \Comment{Check MM has provided the minimum required liquidity, $\maxAuctionNotional$}
	            \State $\revealedMkts\getAppend(\MM, \market)$
		        \State $\MMCommits\remove(\MM)$
		    \EndIf
        \EndUpon
        
        \SPACE
        
        \Upon{$ \phase = \phaseReveal \ \logicalAnd \  len(\MMCommits)=0 \ \logicalAnd \ len(\clientCommits)=0 $}{} \label{alg:timelyReveal} \Comment{All reveals published}
            \State $ \phase \assign \phaseResolution$ 
		    \State $ \lastPhaseChange \assign \blockchain\getHeight()$
        \EndUpon
        
        \SPACE

		\Upon{$ \phase = \phaseReveal \ \logicalAnd \ \blockchain\getHeight() = \lastPhaseChange+ \revealDeadline $}{} \label{alg:handleNonReveals}
		    \State $\tieBreaker \assign 0$ \label{alg:findTightestMMStart}
	        \State $\tightestMarket \assign ()$
	        \State $\tieBreakSeed \assign \commit(\revealedMkts)$ \Comment{Generate seed for tie-breaks before revealed markets is changed}
		    \For{$\MM \in \revealedMkts$} \Comment{Select the unique market corresponding to the tie-breaker in the proceeding If statement}
		        \If{$ (\maxAuctionNotional / (\MIFPA \cdot \MM.\MMOfferPrice) \leq \MM.\MMOfferAmount \leq \MM\getBalance(\tokenB)) $ $ \ \logicalAnd \ (\maxAuctionNotional / \MIFPA  \leq  \MM.\MMBidAmount \leq \MM\getBalance(\tokenA) )$ } \label{alg:MMMinNotionalCheck2} \Comment{Check MM \textit{still} has provided the minimum required liquidity}
		            \State $\protocolContract\getTransfer(\escrowMM, \bitcoin, $ $ \MM))$ 
		            \If{$\tightestWidth = \any \ \logicalOr \ (\frac{\MMOfferPrice}{\MMBidPrice}<\tightestWidth)$ $\logicalOr \ (\frac{\MMOfferPrice}{\MMBidPrice}= \tightestWidth \logicalAnd \commit(\tieBreakSeed || \MM || \MM.\market ) > \tieBreaker)$}\label{alg:tiebreaker}
	                    \State $\tightestWidth \assign \frac{\MMOfferPrice}{\MMBidPrice}$
	                    \State $\tieBreaker \assign \commit(\tieBreakSeed || \MM || \MM.\market ) $
	                    \State $\tightestMarket \assign [ \MM, \market]$
	                \EndIf
	            \EndIf
	            \State $\revealedMkts\remove(\MM)$
		    \EndFor
		    
		    \State $\MMBidAmount \assign minimum(\tightestMarket.\MMBidAmount,  \escrowMM / \MIFPA)$
		    \State $\MMOfferAmount \assign minimum(\tightestMarket.\MMOfferAmount,  \escrowMM / (\MIFPA\cdot \tightestMarket.\MMOfferPrice))$
		    \State $ \revealedBuyOrders\getAppend([\playerProtocol \assign \tightestMarket.\MM,  $ $ \tokenAmount \assign \MMBidAmount, \tokenPrice \assign \tightestMarket.\MMBidPrice, \width \assign \any ]) $ \Comment{Add tightest market to set of orders to be settled}
		    \State $ \revealedSellOrders\getAppend([\playerProtocol \assign \tightestMarket.\MM,  $ $ \tokenAmount \assign \MMOfferAmount, \tokenPrice \assign  \tightestMarket.\MMOfferPrice, \width \assign  \any ]) $
		    \State $\tightestMarket.\MM\getTransfer(\MMBidAmount,\tokenA, \protocolContract)$
		    \State $\tightestMarket.\MM\getTransfer(\MMOfferAmount,\tokenB, \protocolContract)$ \label{alg:findTightestMMEnd}

		    \For{$\commSerialNum \in \clientCommits$} \Comment{Add all clients who did not reveal order to blacklist, preventing further commitments}
		        \State $\blacklistedSerialNums\getAppend(\commSerialNum)$ \label{alg:blacklistClients}
		        \State $\clientCommits\remove(\commSerialNum)$
		    \EndFor
		    \For{$\MM \in \MMCommits$}\Comment{MMs who did not reveal market in time} \label{alg:MMdoesntReveal}
		        \State $\MMCommits\remove(\MM)$ \Comment{Remove from protocol without adding to $\revealedOrders$, effectively burning escrow} \label{alg:MMnonRevealBurn}
		    \EndFor
		    \State $ \phase \assign \phaseResolution$ 
		    \State $ \lastPhaseChange \assign \blockchain\getHeight()$ 
        \EndUpon

	\end{algorithmic} 
	
\end{algorithm}

\begin{algorithm}[H]
\def\baselinestretch{1} \scriptsize \raggedright
\caption{Resolution: Clearing Price Verification}
\label{alg:CPVerifier} 
	\begin{algorithmic}[1]
	
	\setcounter{ALG@line}{87}

		\Upon{$\langle \clearingPrice, \volumeSettled, \imbalance \rangle $ $\from \playerProtocol \in \players \ \with $ $ \playerProtocol\getBalance(\bitcoin) > \resBounty $ $\logicalAnd \ \phase= \phaseResolution$ }{}
		    \State $\playerProtocol\getTransfer(\resBounty, \bitcoin, \protocolContract)$ \Comment{To prevent Sybil attacks, player must deposit funds which are returned if $\clearingPrice$ is valid} \label{alg:depositResolution}
		    \State $\revealedSellOrders \assign \revealedSellOrders[\revealedSellOrders.\getWidth()>\tightestWidth \ \logicalOr \ \revealedSellOrders\getWidth()=\any]$ \Comment{Remove sell orders that cannot trade due to requested width}
		    \State $\revealedBuyOrders \assign \revealedBuyOrders [\revealedBuyOrders.\getWidth()>\tightestWidth \ \logicalOr \ \revealedBuyOrders\getWidth()=\any]$ 
		    \State \textbf{Assert}$ (\volumeSettled >0 \ \logicalOr \ \text{minimum}(\revealedSellOrders.\tokenPrice)< \text{maximum}(\revealedBuyOrders.\tokenPrice))  $ \label{line:assertVolTrades} \Comment{If the indicated clearing price is below the lowest offer/above highest bid, all of the proceeding checks pass.}

		    \State $\buyVolume \assign \textit{sum}(\revealedBuyOrders[\revealedBuyOrders.\tokenPrice \geq \clearingPrice].\tokenAmount)$
		    \State $\sellVolume \assign \textit{sum}(\revealedSellOrders[\revealedSellOrders.\tokenPrice \leq \clearingPrice].\tokenAmount) $

		    \State \textbf{Assert}(\text{minimum}($\buyVolume / \clearingPrice , \sellVolume)=\volumeSettled $)) 
		    \State \textbf{Assert}($(\buyVolume/ \clearingPrice)-\sellVolume=\imbalance$ ) 
		    \If{$\imbalance = 0$ }\label{line:Imbalance0}\Comment{We are done } 
		            \State $\textit{SettleOrders}(\clearingPrice, \buyVolume, \sellVolume)$ 
		    \EndIf
		        
		    \If{ $\imbalance > 0 $}\label{line:Imbalance>} \Comment{As the auction is bid at $\clearingPrice$, check if next price increment above clears higher volume OR smaller imbalance}
		            \State $\priceToCheck \assign \clearingPrice +\minTickSize$ \label{line:checkPriceAbove}
		            \State $\buyVolumeNew \assign (\buyVolume -\textit{sum}( \revealedBuyOrders[\clearingPrice \leq \revealedBuyOrders.\tokenPrice < \priceToCheck].\tokenAmount)) / \clearingPrice $
		            \State $\sellVolumeNew \assign \sellVolume + \textit{sum}(\revealedSellOrders[\clearingPrice < \revealedSellOrders.\tokenPrice \leq \priceToCheck].\tokenAmount)  $
		            \State \textbf{Assert}($ (\text{minimum} (\buyVolumeNew,\sellVolumeNew)< \volumeSettled) \ \logicalOr $ $( \text{minimum}(\buyVolumeNew,\sellVolumeNew)= \volumeSettled \ \logicalAnd $ $ \ \imbalance \leq |\buyVolumeNew - \sellVolumeNew |)$)\label{line:assert>} \Comment{If the next price clears less volume, or clears the same volume with a larger imbalance, the proposed CP is valid}
		            \State $\textit{SettleOrders}(\clearingPrice, \buyVolume, \sellVolume)$ 
		            
		    \EndIf
		        
		    \If{ $\imbalance < 0$}\label{line:Imbalance<} \Comment{As the auction is offered at $\clearingPrice$, check if next price increment below clears higher volume OR smaller imbalance}
		            \State $\priceToCheck \assign \clearingPrice -\minTickSize$ \label{line:checkPriceBelow}
		            \State $\buyVolumeNew \assign (\buyVolume +\textit{sum}( \revealedBuyOrders[\clearingPrice > \revealedBuyOrders.\tokenPrice \geq \priceToCheck].\tokenAmount) ) / \clearingPrice$
		            \State $\sellVolumeNew \assign \sellVolume - \textit{sum}(\revealedSellOrders[\clearingPrice \geq \revealedSellOrders.\tokenPrice > \priceToCheck].\tokenAmount)$
		            \State \textbf{Assert}($ (\text{minimum} (\buyVolumeNew,\sellVolumeNew)< \volumeSettled) \ \logicalOr $ $( \text{minimum}(\buyVolumeNew,\sellVolumeNew)= \volumeSettled $ $ \logicalAnd \ \imbalance \leq |\buyVolumeNew - \sellVolumeNew |)$)\label{line:assert<}
		            
		            \State $\textit{SettleOrders}(\clearingPrice, \buyVolume, \sellVolume)$

		    \EndIf
		    \State $\protocolContract\getTransfer(2 \resBounty, \bitcoin, $ $ \playerProtocol))$ \Comment{Return deposit, and reward player for submitting valid clearing price} \label{alg:rewardResolution}
		    
        \EndUpon

	\end{algorithmic} 
	
\end{algorithm}	

\begin{algorithm}[H]
\def\baselinestretch{1} \scriptsize \raggedright
\caption{Resolution: Settle Orders}
	\begin{algorithmic}[1]
	
	\setcounter{ALG@line}{111}

		\Function{\textit{SettleOrders}}{$\clearingPrice, \buyVolume, \sellVolume$} \label{alg:tradeResolution}
		    
		    \State $\buyVolume\assign \buyVolume/ \clearingPrice$ \Comment{Convert sell volume to equivalent in $\tokenA$}
		    
		    \If{ $\buyVolume > \sellVolume$} \Comment{pro-rate buy orders at the min price above (or equal to) the clearing price} \label{line:startOrderSettlement}
		        \State $\proRatePrice \assign \textit{minimum}(\revealedBuyOrders[ \revealedBuyOrders.\tokenPrice \geq \clearingPrice].\tokenPrice)$
		        \State $\proRatePriceVolume \assign \textit{sum}(\revealedBuyOrders[\revealedBuyOrders.\tokenPrice =  \proRatePrice].\tokenAmount)/ \clearingPrice$
		        \For{$ \order \in \revealedBuyOrders[\revealedBuyOrders.\tokenPrice =  \proRatePrice] $}
		            \State $ \protocolContract\getTransfer(\order.\tokenAmount \cdot (1-\frac{\buyVolume-\sellVolume}{\proRatePriceVolume}), \tokenA,\order.\playerProtocol)$ \Comment{return tokens not going to be exchanged}
		            \State $\order.\tokenAmount \assign \order.\tokenAmount \cdot \frac{\buyVolume-\sellVolume}{\proRatePriceVolume}$
		        \EndFor
		    \ElsIf{ $\sellVolume > \buyVolume$}  \Comment{ pro-rate sell orders at the max price below (or equal to) the clearing price}
		        \State $\proRatePrice \assign \textit{maximum}(\revealedSellOrders[  \revealedSellOrders.\tokenPrice \leq \clearingPrice].\tokenPrice)$
		        \State $\proRatePriceVolume \assign \textit{sum}(\revealedSellOrders[ \revealedSellOrders.\tokenPrice =  \proRatePrice].\tokenAmount) $
		        \For{$ \order \in \revealedSellOrders[  \revealedSellOrders.\tokenPrice =  \proRatePrice] $}
		            \State $ \protocolContract\getTransfer(\order.\tokenAmount \cdot (1-\frac{\sellVolume-\buyVolume}{\proRatePriceVolume}), \tokenB,\order.\playerProtocol)$ \Comment{return tokens not going to be exchanged}
		            \State $\order.\tokenAmount \assign \order.\tokenAmount \cdot \frac{\sellVolume-\buyVolume}{\proRatePriceVolume}$ \label{line:endOfCPcalculation}
		        \EndFor
		    \EndIf 

	        \SPACE
            
		    \For{$ \order \in \revealedBuyOrders, \revealedSellOrders  $} \Comment{iterate through orders} 
			    \If{$\order \in \revealedBuyOrders \ \logicalAnd \ ( \order.\tokenPrice \geq \clearingPrice \ \logicalOr \ \order.\tokenPrice = \marketOrder)$} \label{alg:executeOrdersBegin} \Comment{execute buy order if bid greater than clearing price} 
				    \State $\tokenTradeSize \assign \order.\tokenAmount / \clearingPrice $
					\State $ \protocolContract\getTransfer(\tokenTradeSize, \tokenB, $ $ \order.\playerProtocol)$
				\ElsIf{$\order \in \revealedSellOrders \ \logicalAnd (\order.\tokenPrice \leq \clearingPrice \  \logicalOr \ \order.\tokenPrice = \marketOrder)$} \label{alg:executeClientSeller} \Comment{execute sell order if bid greater than clearing price}
					\State $\tokenTradeSize \assign (\order.\tokenAmount)/ \clearingPrice $
					\State $ \protocolContract\getTransfer(\tokenTradeSize, \tokenA, $ $ \order.\playerProtocol)$
			
				\Else \Comment{Order not executed}
				    \State $ \protocolContract\getTransfer(\order.\tokenAmount, \order.\token, $ $ \order.\playerProtocol)$
			    \EndIf \label{alg:executeOrdersEnd}
		    \EndFor 
		    
	        \SPACE

	        \State $\phase \assign \phaseCommit$
	         
	        \State $\currentAuctionNotional \assign 0$
	        \State $\revealedBuyOrders, \ \revealedSellOrders \assign []$
	        \State $\tightestWidth \assign \any$
	        \State $\lastPhaseChange \assign \blockchain\getHeight()$
	   \EndFunction

	\end{algorithmic} 
	
\end{algorithm}	

\subsection{Existence of irrational players and coalitions}\label{sec:Irrational}

When analysing the optimal strategies of rational players in $\ourAuctionName$s, our results are based on all players being rational and that $n_\secParam$ instances of Register() are called. If we consider the presence of irrational players in the system, we can apply the following adjustments:
\begin{itemize}
    \item \textbf{Irrational MM}: In Lemma \ref{lem:WTTMMstrat}, it is shown that the optimal strategy for a MM is to show markets centred around the MIFP. Any other (irrational) strategy must therefore result in reduced expectancy for the MM, and higher expectancy for clients. Therefore, given the presence of irrational MMs, submitting market orders maximises client expectancy (with greater expected utility than in the presence of rational MMs, although with increased variance).
    \item \textbf{Irrational client}: Given the optimal strategy for rational clients is to submit market orders, a irrational client may then submit limit orders. This merely reduces the irrational client's chance of trading vs. other clients. This would not change the strategy of non-colluding rational MMs, but may have some affect on a monopolistic MM's interpretation of $\fee$.
\end{itemize}

Furthermore, if less than $n_\secParam$ instances of Register() are called, clients resort to submitting limit orders. This can be seen in the proof of Theorem \ref{thm:ourFBA}. In the proof, if clients are not sure that a MM will show a market with reference price equal to the MIFP, the case when less than $n_\secParam$ instances of Register() are called, the optimal strategy for clients is to submit limit orders, which only stands to reduce the clients' probability of trading. As the number of non-cooperative players in $\protocolName$ decreases towards two, the guarantees of $\protocolName$ approach those of an AMM. However, as client price and order size remain hidden until the counterparty chooses her strategy, and before the clearing price is fixed (end of the Commit phase), $\protocolName$ maintains advantages over AMMs against client-based EEV attacks, such as price/order-size specific front-running and selective participation. 

\section{Practical Considerations for \texorpdfstring{$\protocolName$}{TEXT}}\label{app:discussion}

Practical approaches to ensure equivalent to $n_\secParam$ Register() calls are taken in existing anonymity protocols. For example, Tornado Cash \footnote{\url{https://torn.community/t/anonymity-mining-technical-overview/15} Accessed: 20/07/2022} rewards players proportionally to the (Tornado Cash equivalent of the) number of Register() calls, as well as the length of time between calling Register() and CommitClient(). ,

\subsection*{Escrow choices}

Choosing escrow amounts for clients and MMs should reflect the emergent use cases of the protocol. It is possible to create different $\protocolName$ instances for the same pair of tokens based on trade size, both for liquidity purposes (MMs will require wider markets for larger-sized orders, but the corresponding increased escrow requirements might prevent smaller clients from participating) and auction use-cases (day-trading vs. end-of-day balancing). Furthermore, as the escrow denomination ($\bitcoin$) in our description is different to at least one of the tokens, there needs to be some way to translate the escrow amount into order sizes. This will depend on the environment, but on Ethereum for example,  price oracles (existing AMMs, Chainlink\footnote{https://chain.link/}, etc. ) can be used. It is also possible to use previous clearing prices from within the $\protocolName$ ecosystem, although a self-referential oracle must be implemented carefully as there may be game-theoretic implications. If a satisfactory price oracle exists, deposits can be made in the respective tokens of the swap, further reducing the capital requirements for players and encouraging adoption.

\subsection*{Incentive compatibility given transaction fees}

In Section \ref{sec:FairTraDEX}, we mention that our Nash equilibria are dependent on the utility gained by clients and MMs being greater than the cost for participation. The choice of smart-contract enabled blockchain on which to deploy will dictate the barrier of entry for clients and MMs alike. Like existing attempts to implement blockchain-based FBAs \cite{PubliclyVerifiableSecrecyPreservingPeriodicAuctionsGalal,BlockAuctionPeriodicAuctionsConstantinides}, we have an amortised number of transactions per player of two. A naive comparison to AMMs, where this is reduced to 1 for clients, and 0 for MMs, certainly has less direct costs than $\protocolName$. However, when factors like impermanent loss, slippage\footnote{\url{https://docs.uniswap.org/protocol/concepts/V3-overview/swaps}}, front-running, and EEV attacks in general, the value being extracted from DEXs incurs a significant indirect cost for clients. Immediately, we can increase the expected cost of using AMMs by the slippage required by AMMs (set to $0.5\%$ as of writing in Uniswap V3, but for larger orders this must increase by the nature of AMMs). We can increase this further by the probability orders are not executed (where prices move more than the slippage, potentially due to front-running) but are added to the blockchain. As such, the indirect costs are substantial, are increasing in order-size increases and proportionally to improvements in client trading ability as strategies can be replicated/front-run. A thorough comparison of the monetary costs of $\protocolName$ vs. AMMs over various order-sizes, and trading scenarios makes for interesting future work as $\protocolName$ begins to be deployed and tested in the wild. 

It can be seen from the proof of Theorem \ref{thm:ourFBA} that the client strategies identified are strong incentive compatible in expectation as the MM always shows markets of width less than or equal to $\fee$. However, the MM strategy of showing width 1 markets is not strong incentive compatible. In addition to the fees described in the protocol, an additional fee can be applied within the protocol itself to incentivise the participation of MMs. This can be a function of MM participation/market widths. As with all additional fees/rewards, the game-theoretic implications of such an incentivisation scheme must be considered. 

In our encoding of $\protocolName$, we do not explicitly introduce a cost for clients and MMs in Commit/Reveal phases to reward the submission of the clearing price. In reality, the result in Lemma \ref{lem:ResolutionGetsRun} holds without the introduction of an explicit reward, as there all participating clients and MMs will have positive expectancy to receive tokens through correct order resolution. The use of an explicit reward is for illustrative purposes, and to avoid complications regarding transaction fees for running the clearing price checks. The costs of running the Resolution contracts must be ensured to be less than the utility gained by at least one player in the blockchain protocol for calling the contract. 

\subsection*{Differences to previous versions of $\protocolName$}

In a previous version of this paper, the notional amount $\maxAuctionNotional$ was was also used to upper-bound the notional of client orders allowable in an auction. This was intended to simplify the strategy analysis in the proof of Theorem \ref{thm:ourFBA}, although without adding any additional guarantees to the protocol. In the current version of the paper, we manage to provide a simpler definition of a $\ourAuctionName$ without significantly adding to the complexity of Theorem \ref{thm:ourFBA} by noting any MM order through the MIFP must be negative in expectancy for the MM.

\end{document}